\newcommand{\stkout}[1]{\ifmmode\text{\sout{\ensuremath{#1}}}\else\sout{#1}\fi}
\theoremstyle{plain}
\newtheorem{lemma}{Lemma}
\newtheorem{proposition}{Proposition}
\def\Tr{\operatorname{Tr}}
\def\>{\rangle} \def\<{\langle}
\def\T+{\mathsf{T}_+}
\newcommand{\bra}[1]{\langle#1|}
\newcommand{\ket}[1]{|#1\rangle}
\newcommand{\braket}[2]{\langle#1|#2\rangle}
\newcommand{\ketbra}[2]{{\ket{#1}\bra{#2}}}
\newcommand{\Bra}[1]{\langle \! \langle#1|}
\newcommand{\Ket}[1]{|#1\rangle \! \rangle}
\newcommand{\KetBra}[2]{{\Ket{#1}\Bra{#2}}}
\newcommand{\hilb}[1]{\mathcal{#1}}
\newcommand{\CNOT}{\rm  CNOT}
\begin{document}
	
	\title{Storage and retrieval of  two unknown unitary channels}

	\author{Michal \surname{Sedl\'ak} }
	\affiliation{RCQI, Institute of Physics, Slovak Academy of Sciences, D\'ubravsk\'a cesta 9, 84511 Bratislava, Slovakia}
	\affiliation{Faculty of Informatics,~Masaryk University,~Botanick\'a 68a,~60200 Brno,~Czech Republic}
	\author{ Robert \surname{St\'arek}}
	\affiliation{Department of Optics,  Faculty of Science, Palack\'{y} University, 17. listopadu 1192/12, 77900 Olomouc, Czech Republic}
	\author{Nikola  \surname{Horov\'a}}
	\affiliation{Department of Optics,  Faculty of Science, Palack\'{y} University, 17. listopadu 1192/12, 77900 Olomouc, Czech Republic}
	\author{Michal \surname{Mi\v cuda} }
	\affiliation{Department of Optics,  Faculty of Science, Palack\'{y} University, 17. listopadu 1192/12, 77900 Olomouc, Czech Republic}
	\author{Jaromir \surname{Fiur\'a\v sek}}
	\affiliation{Department of Optics,  Faculty of Science, Palack\'{y} University, 17. listopadu 1192/12, 77900 Olomouc, Czech Republic}

\author{Alessandro 
  \surname{Bisio}} \email[]{alessandro.bisio@unipv.it}
\affiliation{Dipartimento di Fisica dell'Universit\`a di Pavia, via
  Bassi 6, 27100 Pavia} \affiliation{Istituto Nazionale di Fisica
  Nucleare, Gruppo IV, via Bassi 6, 27100 Pavia}

	\date{\today}

	\begin{abstract}
		
		We address the fundamental task of converting
		$n$ uses of an unknown unitary transformation
		into a quantum state (i.e., storage) and later
		retrieval of the transformation. Specifically, we
		consider the case where the unknown unitary
		is selected with equal prior probability from
		two options. First, we prove that the optimal
		storage strategy involves 
		the sequential application of the $n$ uses of the unknown unitary,
		and it produces the
		optimal state for discrimination between the
		two possible unitaries. 
		Next, we show that incoherent
		"measure-and-prepare" retrieval achieves the
		maximum fidelity between the retrieved operation
		and the original (qubit) unitary. We then
		identify the retrieval strategy that maximizes
		the probability of successfully and perfectly
		retrieving the unknown transformation. In the
		regime in which the fidelity between the
		two possible unitaries is large the probability of success
		scales as
		$ P_{succ} = 1 - \mathcal{O}(n^{-2} ) $,
        which is a quadratic improvement  
        with respect to the case
		in which the unitaries are drawn from the entire
		unitary group
		$U(d)$ with uniform prior 
		probability. 
		Finally, we present an optical
		experiment for this approach and assess the
		storage and retrieval quality using quantum
		tomography of states and processes. The results
		are discussed in relation to non-optimal
		measure-and-prepare strategy, highlighting the
		advantages of our protocol.

	\end{abstract}

	\pacs{11.10.-z} \keywords{quantum}
	
	\maketitle
	
	\section{Introduction}
	
	The non-orthogonality of quantum states is a
	distinctive feature of quantum theory and the
	fundamental reason why quantum information cannot
	be cloned \cite{wootters1982single,buvzek1996quantum,werner1998optimal}, gathered without disturbance \cite{PhysRevLett.68.557,Busch2009}, or have its "logical value" inverted \cite{PhysRevA.59.4238,PhysRevA.60.R2626}. A
	simple system of two quantum states that cannot be
	perfectly discriminated provides significant
	insights into many aspects of quantum theory
	\cite{Fuchs2002}. This remains true when considering
	transformations as carriers of quantum
	information. A transformation 
	between possibly different types of quantum systems
	is, in fact, a more
	general concept than that of a quantum state, which can be  
	considered
	as
	a special case 
	of transformation, whose input system is trivial.
	As a consequence, there are 
	more ways in which we can
	process transformations 
	and this leads to a  new paradigm for quantum computation and information processing in which transformations are processed by  quantum circuits with open slots \cite{selinger2004towards,gutoski2007toward,PhysRevLett.101.060401,PhysRevA.80.022339,Actaphysicanetwork} or, more generally, by  higher order maps
	which may also exhibit indefinite causal 
	order\cite{oreshkov2012quantum,araujo2017quantum,bisio2019theoretical}. Many tasks within this paradigm were considered  like discrimination\cite{acin2001statistical,duan2007entanglement,harrow2010adaptive,PhysRevLett.125.080505}, cloning  \cite{chiribella2008optimal,chiribella2015universal}, information-disturbance tradeoff \cite{bisio2010information,10.1088/1402-4896/ad7912}, inversion \cite{bisio2011minimal,PhysRevLett.123.210502,yoshida2023reversing,zhu2024reversing} and complex conjugation \cite{miyazaki2019complex}. With no analogy to states these tasks may differ for example by the order in which the transformation to be used and those to be created are available. A single use of a transformation can be modified by a chosen (pre-) or (post-)transformation 
	or also by using an additional ancillary system interconnecting them. 
	
	These new possibilities make a difference for example in the discrimination tasks. 
	Two unitary transformations can be non-orthogonal, i.e. not perfectly distinguishable in a single use, but in contrast to states, they can be made perfectly distinguishable if they are used sufficiently many times. On the other hand, similarly to states, completely unknown unitary transformation cannot be cloned. In fact there are not many tasks for just two possible unitary transformations, which would be already studied despite their obviously fundamental role. The goal of this paper is to partially fill this gap.

	\begin{figure*}[t]
		\begin{center}
			\includegraphics[width=1\linewidth
			]{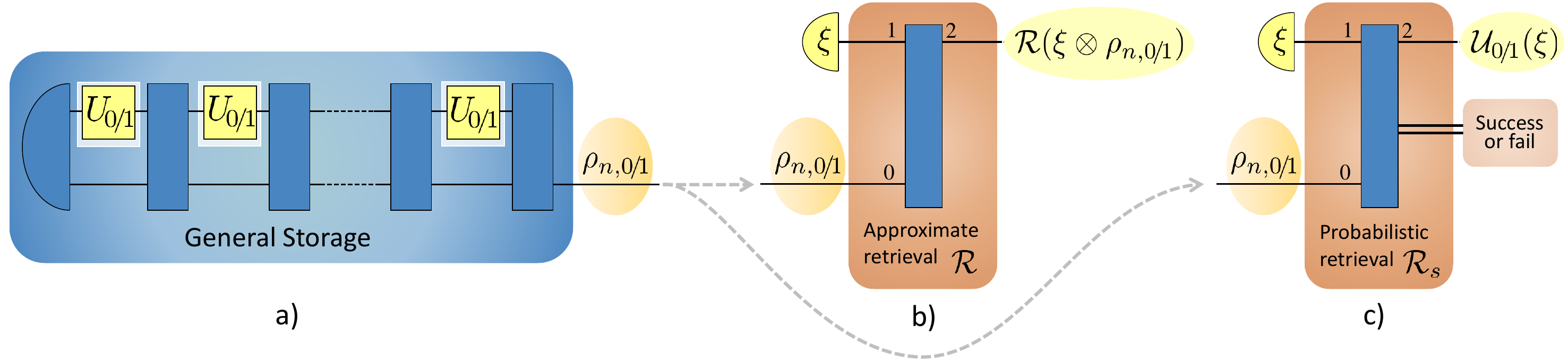}
			\caption{
   Illustration of a general storage and retrieval protocol for two $d$-dimensional unitary transformations $U_0$, $U_1$. These operations are used $n$ times during the storage phase (a). Retrieval can be deterministic (b) and hence yielding only approximation to $\mathcal{U}_{0\!/\!1}(\xi)$, or probabilistic (c), yielding perfect application of the stored unitary to any state $\xi$. 
            }
			\label{fig:gen_scheme01}
		\end{center}
	\end{figure*}

	In this paper, we focus on the problem of storage
	and retrieval \cite{PhysRevA.81.032324,PhysRevLett.122.170502,PhysRevA.106.052423,grosshans2024multicopy}  of an unknown unitary
	transformation that is randomly selected from a
	set of two unitary transformations, say $U_0$ and $U_1$, which cannot be
	perfectly discriminated. The scenario is as
	follows: currently, we are free to use a black box $n$-times, which either performs unitary $U_0$
	or $U_1$. 
	However, later, the black box
	will no longer be available, and we will be asked
	to reproduce its action on some unknown input
	state, say $\ket{\xi}$. 
	If $n$ uses allow to perfectly discriminate between $U_0$ and $U_1$, then the problem trivializes. Therefore, we will from now on assume the non-trivial case in which $U_0$ and $U_1$ cannot be perfectly discriminated with $n$ uses.
	The most general strategy
	we can use is illustrated in Figure
	\ref{fig:gen_scheme01}(a)
	and it consists of two phases:
	\emph{storage} and \emph{retrieval}. In the
	storage phase, we run a quantum circuit that makes
	$n$ calls to the black box
	and we store 
	the resulting
	output state $\rho_{n,i}$ in a quantum
	memory. In the retrieval phase, once the new input
	$\ket{\xi}$  becomes available, we feed both
	$\rho_{n,i}$  
	and $\ket{\xi}\bra{\xi}$ 
	into a retrieving channel $\mathcal{R}$, which should emulate the action of the unknown
	unitary on 
	the state
	$\ket{\xi}$.	
	Storage and retrieval can also be interpreted as a
	machine learning task \cite{wittek2014quantum,schuld2015introduction,Biamonte:2017aa,cerezo2022challenges}, where the
	storage phase corresponds to the training of a 
	quantum network.
	
	We address the optimization of storage and
	retrieval according to two different criteria. 
        In
	the \emph{approximate deterministic} case, we require
	that the retrieval works every time (i.e., the
	retrieval is a quantum channel), and the goal is
	to maximize the average process fidelity. In the
	\emph{perfect probabilistic} case, we allow the
	possibility of retrieval failure (i.e., the
	retrieval is a two-outcome quantum instrument),
	but in case of success, the unknown unitary must
	be retrieved perfectly. Here, the goal is to
	maximize the probability of success.
	
	First, we show that regardless of the scenario
	considered in the retrieval phase, the optimal
	storage phase involves preparing the state that is
	optimal for the discrimination of the two unitaries
	\cite{duan2007entanglement}. This state can be
	generated 
	by the sequential application of
	$n$ uses of the unknown unitary without the need for 
	an ancillary system or entanglement.

	Next, we analytically solve the storage and
	retrieval problem in the approximate deterministic
	case for qubit unitaries. We demonstrate that the
	optimal protocol is a ``measure-and-prepare''
	strategy, meaning that quantum memory is not required
	between the storage and retrieval phases. This
	feature was also observed in the case where the
	unknown unitary is randomly drawn from a
	group \cite{PhysRevA.81.032324}.
	
	We then derive the optimal probability of success
	for the perfect probabilistic case.  We observe
	that in the regime in which the fidelity between
	the unitaries is 
	big 
	(i.e., far from the
	regime in which $U_0$ and $U_1$ can be perfectly
	discriminated) the probability of success scales as
	$ P_{succ} = 1 - \mathcal{O}(n^{-2}) $.  This result can be compared to that in
	Refs. \cite{PhysRevLett.122.170502,PhysRevA.102.032618},
	where the unknown unitary was randomly drawn from
	a group, and the probability of success scaled as
	$ P_{succ} = 1 - \mathcal{O}(n^{-1}) $.  The
	more accurate prior knowledge in the
	present case (two unitaries vs the entire group $U(d)$
	with uniform distribution) is responsible for the observed
	quadratic improvement. 
	
	In order to make our results more practically applicable, 
	we derive a 
	short quantum circuit which
	realizes the optimal perfect probabilistic storage and
	retrieval of two unknown qubit unitaries. It contains just single-qubit gates, one CNOT gate, and one $3$-outcome qubit POVM.  
	
	
	Finally, 
	we also build a quantum linear optical experiment 
	implementing the proposed scheme. The setup 
	uses CNOT gate consisting of partially polarizing beam splitters (PPBS) and utilizing
	the two-photon interference. 
	The unambiguous measurement part is realized using a Mach-Zehnder-type interferometer 
	made with polarizing beam splitters and wave plates in its arms. 
	We comprehensively characterize the quality of experimentally retrieved quantum operations using quantum process tomography.


	The manuscript is organized as follows. In section
	\ref{sec:stor-retr-two} we provide the analytical
	optimization of the storage and retrieval
	problem. First we derive the optimal storage
	protocol and then, in subsection
	\ref{sec:optimal-retrieval}, we solve the
	optimization of retrieval in the approximate
	deterministic case and in the perfect
	probabilistic case. 
	We connect optimal retrieval with the concept of a programmable quantum processor and we interpret the obtained results in this language in subsection \ref{sec:processor}. 
	Subsection \ref{sec:qcircuit} is devoted to 
	deriving a quantum circuit for the realization of
	optimal storage and retrieval protocols in the
	qubit case.  In section \ref{sec:phot-demonstr} we present 
	a proof of
	principle quantum optics experiment that implements the perfect probabilistic version of the optimal storage and retrieval.  Section \ref{sec:discussion}
	contains a discussion of the obtained theoretical
	and experimental results. Some technical parts of
	proofs and experiment description are placed in
	the appendixes.

	\section{Storage and retrieval of two unitary
		channels}
	\label{sec:stor-retr-two}
	The unknown
	$d-$dimensional unitary transformation that we are
	supposed to use only $n$ times during the storage
	phase is chosen with equal prior probability from
	two options, which are denoted
	as $U_0$ and $U_1 $ in
	the rest of the manuscript.
	We will 
	denote by $\mathcal{U}_i(.)= U_i \, . \, U_i^\dagger$ the
	action of the given unitary on the density
	operators.
	
	If $U_0$ and $U_1$
	are perfectly distinguishable with $n$ uses, then
	the storage and retrieval
	problem trivializes. In particular, in the
	storage phase we can exactly discriminate
	between $U_0$ and $U_1$. Then,
	depending on the outcome, in the retrieval
	phase we prepare the unknown unitary. 
	Therefore, we will assume in the rest of the manuscript that the unitaries $U_0$ and $U_1$
	cannot be perfectly distinguishable by their $n$ uses.
	
	If only one use of the unknown
	transformation is available (i.e. $n=1$) the
	storage phase consists of applying the single use
	of the unitary to a state $\rho$. On the other
	hand, if many uses are available (i.e. $n > 1$)
	the most general storage phase consists of
	inserting the $n$ uses of the unknown unitary
	transformation in the open slots of a quantum
	circuit (see Figure~\ref{fig:gen_scheme01}a). At the
	end of the storage phase, we obtain one of two
	states $\rho_{n,0}$, $\rho_{n,1}$
	depending on the identity of the stored transformation.

	A quantum circuit implementing the storage phase is
	generally a network of quantum transformations
	that includes channels and measurements. We can always dilate each component of the network
	to be an isometric transformation \cite{PhysRevA.80.022339}. Therefore, we can assume without loss of generality that the
	states obtained at the end of the storage
	phase are pure
	states 
	$\rho_{n,0}=\ket{\psi_{n,0}}\bra{\psi_{n,0}}$,
	$\rho_{n,1}=\ket{\psi_{n,1}}\bra{\psi_{n,1}}$.
	For any pair of states $\ket{a_0}$ and
	$\ket{a_1}$ with $|\braket{a_0}{a_1}| \geq
	|\braket{\psi_{n,0}}{\psi_{n,1}}|$ one can show
	that there exists a channel
	$\mathcal{C}$ such that
	$\mathcal{C}(\ketbra{\psi_{n,i}}{\psi_{n,i}}) =
	\ketbra{a_i}{a_i}$.  
	As a consequence the optimal storage phase is the
	one for which the scalar product
	$|\braket{\psi_{n,0}}{\psi_{n,1}}|$ is minimum.
	However, from the results on the optimal
	discrimination of unitaries
	\cite{acin2001statistical,duan2007entanglement,chiribella2008memory}
	we know that $|\braket{\psi_{n,0}}{\psi_{n,1}}|
	\geq \cos(2n\alpha)> 0 $ 
	and $4 \alpha
	$ corresponds to the length of the smallest arc
	containing all the eigenvalues of $U_1^{\dag } U_0
	$ on the unit circle.  Note that
	$\cos(2n\alpha) > 0$ (or equivalently $4n\alpha<
	\pi$), because we assumed that $U_0$ and
	$U_1$ cannot be perfectly discriminated with
	$n$ uses.
	
	By suitable fixed pre- or
	post-processing unitary transformations $V$ and $W$, we
	can transform the storage and retrieval protocol for
	$\{U_0,U_1\}$ into a protocol for
	$\{W U_0V, W U_1V \}$ with
	the same performance. The same holds for the discrimination of unitaries. Moreover, the global phase of the unitary operator is irrelevant. Thus, without loss of
	generality, we can assume that \footnote{Consider choosing $V=U_0^\dagger$, $W=(U_1 U_0^\dagger)^{-1/2}$.} 
	\begin{align}
	\label{eq:26}
	\begin{aligned}
	U_0 &= e^{i\alpha} \ketbra{0}{0} +
	\sum_{k = 1}^{d-2}e^{i\beta_k} \ketbra{k}{k} +
	e^{-i\alpha} \ketbra{d-1}{d-1}\\
	U_1 &= e^{-i\alpha} \ketbra{0}{0} +
	\sum_{k= 1}^{d-2}e^{-i\beta_k} \ketbra{k}{k} +
	e^{i\alpha} \ketbra{d-1}{d-1},
	\end{aligned}
	\end{align}
	where the eigenvalues $e^{i\beta_k}$ 
	are contained in the smallest arc connecting $e^{i\alpha}$ and $e^{-i\alpha}$ and $d$ is dimension. 
	The optimal storage protocol is thus provided
	by the sequential
	strategy derived in
	Ref.\cite{duan2007entanglement} as follows:
	\begin{proposition}[Optimal storage]
		\label{prop:optimalstorage}
		Without loss of generality, let us assume that $U_0$ and $U_1$ are as in
		Equation~\eqref{eq:26}. The optimal storage
		strategy is then given (see Figure (\ref{fig:opt_storage})) by applying $U_i^n$ to the
		input state $\ket{+} :=
		\frac{1}{\sqrt{2} }\big (   \ket{0} + \ket{d-1}
		\big)$, i.e.
        $\ket{\psi_{n,i}} = U_i^n\ket{+}.$ 
	\end{proposition}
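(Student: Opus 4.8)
The plan is to reduce the statement to a single scalar computation by leveraging the two facts already established in the preceding discussion. First, that the performance of any storage protocol depends only on the overlap $|\braket{\psi_{n,0}}{\psi_{n,1}}|$ of the two stored pure states, with smaller overlap being strictly better (because a channel mapping one pair of states to another exists precisely when the overlap does not decrease). Second, that this overlap obeys the universal lower bound $|\braket{\psi_{n,0}}{\psi_{n,1}}| \geq \cos(2n\alpha)$, valid for every admissible storage circuit, adaptive or not, with or without ancillas. Granting these, it suffices to exhibit \emph{one} strategy that saturates the bound: any such strategy is then automatically optimal. I would therefore prove the proposition by verifying that the sequential strategy $\ket{\psi_{n,i}} = U_i^n\ket{+}$ yields exactly $|\braket{\psi_{n,0}}{\psi_{n,1}}| = \cos(2n\alpha)$.

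The verification is a short calculation using the diagonal form~\eqref{eq:26}. Because $U_0$ and $U_1$ are diagonal in the computational basis, their $n$-th powers are obtained by raising the eigenvalues to the power $n$, so that on the span of $\{\ket{0},\ket{d-1}\}$ the operator $U_i^n$ multiplies $\ket{0}$ and $\ket{d-1}$ by $e^{\pm in\alpha}$ with the sign pattern fixed by $i$. Applying this to $\ket{+}$ gives
\begin{align*}
\ket{\psi_{n,0}} &= \tfrac{1}{\sqrt{2}}\big(e^{in\alpha}\ket{0} + e^{-in\alpha}\ket{d-1}\big),\\
\ket{\psi_{n,1}} &= \tfrac{1}{\sqrt{2}}\big(e^{-in\alpha}\ket{0} + e^{in\alpha}\ket{d-1}\big),
\end{align*}
and the inner product evaluates to $\braket{\psi_{n,0}}{\psi_{n,1}} = \tfrac{1}{2}\big(e^{-2in\alpha}+e^{2in\alpha}\big) = \cos(2n\alpha)$, matching the bound. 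Note that the components $e^{i\beta_k}$ play no role here because $\ket{+}$ has no support on $\ket{k}$ for $1\le k \le d-2$; this is exactly why the protocol uses only the two extreme eigenvectors.

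To make the choice of input transparent I would also record why $\ket{+}$ is the right state rather than an arbitrary one. Writing a general input as $\ket{\phi}$, the overlap equals $\langle\phi|(U_0^\dagger U_1)^n|\phi\rangle$, a convex combination of the eigenvalues of $(U_1^\dagger U_0)^n$, which lie on an arc of length $4n\alpha < \pi$ symmetric about $1$. Since every such eigenvalue has real part at least $\cos(2n\alpha)$, the modulus of the combination is minimized only by placing equal weight on the two extreme phases $e^{\pm 2in\alpha}$, attained on $\ket{0}$ and $\ket{d-1}$; this balanced, extremal superposition is precisely $\ket{+}$. The genuinely hard part of the proposition is the universal lower bound itself, but this is the content of the optimal unitary-discrimination results already cited and need not be reproved; the only obstacle specific to the proposition is the conceptual point that matching $\cos(2n\alpha)$ certifies global optimality over all storage circuits, after which the diagonal computation above closes the argument.
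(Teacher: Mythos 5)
Your proposal is correct and follows essentially the same route as the paper: the discussion preceding the proposition already reduces optimal storage to minimizing the overlap $|\braket{\psi_{n,0}}{\psi_{n,1}}|$ and invokes the unitary-discrimination bound $|\braket{\psi_{n,0}}{\psi_{n,1}}| \geq \cos(2n\alpha)$, after which the paper (citing the sequential-discrimination result) concludes that $U_i^n\ket{+}$ saturates it. Your explicit verification of the saturation, and the convexity argument for why $\ket{+}$ is the right input among sequential strategies, simply spell out what the paper delegates to the cited reference.
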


	\begin{figure}[t]
		\begin{center}
			\includegraphics[width=7cm]
                {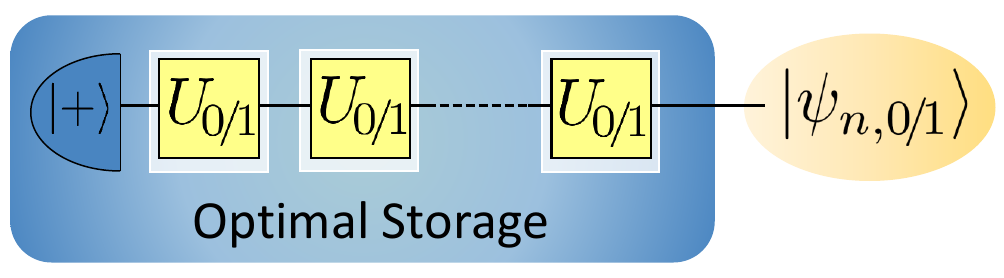}
			\caption{
                Optimal quantum circuit for storage of two d-dimensional unitary transformations $U_0$,$U_1$ used $n$ times. The circuit is the same for both considered types of retrieval.
                }
			\label{fig:opt_storage}
		\end{center}
	\end{figure}

	Let us now consider the retrieval phase. The
	input of this phase is the tensor product between
	the state $ \ket{\psi_{n,i}}$, which encodes
	information about the unknown unitary $U_{i}$, and
	the state $\xi$, to which the retrieved
	transformation should be applied. The result of
	this action should ideally be
	$\mathcal{U}_i(\xi)$.  We can now approach the
	problem in at least two different ways.
	
	In the \emph{approximate deterministic} case, the
	retrieval is a quantum channel $\mathcal{R}$,
	which maps bipartite input into single-partite
	output. Within this approach, the goal is to find
	the optimal $\mathcal{R}$ such that
	$\mathcal{R}(\ketbra{\psi_{n,i}}{\psi_{n,i}}\otimes\xi)$
	is as close as possible to $\mathcal{U}_i(\xi)$.
	For the sake of clarity, it is convenient to introduce
	the following labeling of the Hilbert spaces (see
	Figure~\ref{fig:gen_scheme01}b). We denote with
	$\mathcal{H}_0$ the system carrying the output of
	the storage phase
	($\ket{\psi_{n,i}} \in \mathcal{H}_0$), with
	$\mathcal{H}_1$ the system carrying the input
	$\xi$ and with $ \mathcal{H}_2$ the output system
	of the retrieval phase, i.e.
	$\mathcal{R} : \mathcal{L}(\mathcal{H}_0 \otimes
	\mathcal{H}_1) \to \mathcal{L}(\mathcal{H}_2)$.
	In order to assess the quality
	of the retrieval we choose as a criterion the
	average of state fidelity over pure input states,
	namely
	\begin{align}
	\label{eq:3}
	F_{avg} &:=  \int \! \! \mathrm{d}\phi  \sum_{i=0,1} \frac12
	f\Big(\mathcal{R}(\psi_{n,i}\otimes
	\ketbra{\phi}{\phi}),
	\mathcal{U}_i(\ketbra{\phi}{\phi})\Big) \\
	\psi_{n,i} &:=\ketbra{\psi_{n,i}}{\psi_{n,i}} \nonumber
	\end{align}
	where $\mathrm{d}\phi$ is the Haar measure, and
	$f(\cdot, \cdot)$ is the state fidelity. 
	Equation~\eqref{eq:3} can be
	rewritten 
	\cite{raginsky2001fidelity,NIELSEN2002249}
	in terms of quantum process fidelity
	as follows:
	%
	%
	\begin{align}
	F_{avg} &= \frac{1}{d+1}+\frac{d}{d+1}F_{e}\\
	F_e &\equiv\Tr [R D], \label{eq:4}\\
	D &:= \frac{1}{2 d^2} 
	\sum_{i=0,1}\ketbra{\psi^*_{n,i}}{\psi^*_{n,i}}
	\otimes \KetBra{U_i}{U_i} \label{eq:5}  \\
	\ket{\psi^{*}_{n,i}} &:= U_i^{*n}\ket{+},
	\end{align}
	where
	$R=(\mathcal{I} \otimes \mathcal{R} )
	\KetBra{I}{I} \in
	\mathcal{L}(\mathcal{H}_0\otimes
	\mathcal{H}_1\otimes \mathcal{H}_2)$
	($\KetBra{I}{I} \in \mathcal{L}(\mathcal{H}_0\otimes
	\mathcal{H}_1\otimes \mathcal{H}_0\otimes \mathcal{H}_1$)) is the
	Choi-Jamiolkovski operator
	\cite{choi1975completely,jamiolkowski1972linear}
	of the channel $\mathcal{R}$,
	$\KetBra{U_i}{U_i} \in
	\mathcal{L}(\mathcal{H}_1\otimes \mathcal{H}_2)$
	is the Choi-Jamiolkovski operator of the
	unitary channel $\mathcal{U}_i$, we used the
	notation \cite{PhysRevA.80.022339}
	$\Ket{A} := \sum_{i,j} A_{i,j} \ket{j}\ket{i} = (I \otimes A ) \Ket{I}$,
	for an operator
	$A = \sum_{i,j} A_{i,j} \ket{i}\bra{j} $, and
	$A^*$ denotes the complex conjugate. The
	isomorphism $A \leftrightarrow \Ket{A}$, the
	Choi-Jamiolkovski isomorphism, the transposition
	$A^T$, and the complex conjugation $A^*$ are
	defined with respect to some fixed orthonormal
	basis $\{\ket{i}\}_{i=0}^{d-1}$ which is the same
	basis in which $U_0$ and $U_1$ take the form of
	Equation~\eqref{eq:26}.
	The normalization
	factor guarantees that $0 \leq F_e \leq 1$.  We are
	then left with the following optimization problem for the \emph{approximate deterministic retrieval}:
	\begin{align}
	&\begin{aligned}\label{eq:optprob1}
	F_{e}=\; & \underset{R}{\text{maximize}}
	& & \Tr[RD]\\
	& \text{subject to}
	& & R\geq 0 , \quad \Tr_2[R] = I,
	\end{aligned}
	\end{align}
	where $ \Tr_2$ is the partial trace on the
	Hilbert space $\mathcal{H}_2$.
	Notice that we could loosen the constraint by requiring
	that $R$ should be a quantum operation ($\Tr_2[R]
	\leq I$) and not
	necessarily a quantum channel (i.e. $\Tr_2[R] = I$). 
	It is easy to verify that this change is
	irrelevant:  even if we optimize over
	the larger set of quantum operations, the optimal
	solution will still be a quantum channel. 
	Indeed, let $R$ be the optimal quantum operation.
	Then there exists a quantum channel  $S$ such that
	$R \leq S$ and $\Tr[DR] \leq \Tr[DS] $ because $D$
	is a positive operator.
	
	Another approach is
	\emph{perfect probabilistic} retrieval. 
	In this case, the retrieval is described by a two-outcome quantum
	instrument $\mathcal{R}=\{\mathcal{R}_s,
	\mathcal{R}_f\}$, where the index denotes success ($s$) or failure ($f$). If the classical outcome is
	$s$, we demand the exact retrieval of the unknown
	unitary, i.e.
	$\mathcal{R}_s(\psi_{n,i} \otimes \xi ) = \lambda_{\xi,i}\,\mathcal{U}_i (\xi)$  for any $\xi$ and we define $\lambda_{\xi,i} := \Tr[\mathcal{R}_s(\psi_{n,i}
	\otimes \xi )] $
	(we remind
	that $\psi_{n,i} :=
	\ketbra{\psi_{n,i}}{\psi_{n,i}}$).
	The factor $ \lambda_{\xi,i}$ is the probability
	of success of perfect retrieval when the unknown
	unitary is $U_i$ and the input state is $\xi$.
	It is easy to observe that such a
	probability of success does not depend on $\xi$,
	namely $\lambda_{\xi,i}=
	\lambda_i$ for any $\xi$.
	Indeed, if we had
	$\lambda_{\xi,i} \neq \lambda_{\xi',i} $ for a
	pair $\xi$, $\xi'$ of states, this would imply
	that
	$  \mathcal{R}_s(\psi_{n,i}
	\otimes (\xi + \xi') ) =
	\lambda_{\xi,i}\mathcal{U}_i(\xi) +
	\lambda_{\xi',i}\mathcal{U}_i(\xi') = \mathcal{U}_i(\lambda_{\xi,i}\xi +
	\lambda_{\xi',i}\xi') \not\propto
	\mathcal{U}_i(\xi + \xi') $.
	The perfect retrieval condition can therefore be
	written as follows:
	\begin{align}
	\label{def:psrcond1}
	\bra{\psi^*_{n,i}} R_s \ket{\psi^*_{n,i}} = \lambda_i
	\KetBra{U_i}{U_i}   \quad i=0,1,
	\end{align}
	where $ R_s$ is the Choi-Jamiolkovski operator of
	the quantum operation $R_s$.
	As a criterion to assess the performance of the
	retrieval strategy we choose 
	the average
	probability of success, which in equation reads
	\begin{align}
	\label{eq:9psuc}
	P_{succ} := \frac{1}{2} (\lambda_0+\lambda_1).
	\end{align}
	It is easy to verify that
	$\frac{1}{2} (\lambda_0+\lambda_1) = \Tr[R_s D]$,
	where $D$ was defined in Equation~\eqref{eq:5}.
	Reminding that $R_s$ is a quantum operation if and
	only if $R_s \geq 0$ and $\Tr_2[R_s] \leq I$,
	we obtain the following optimization
	problem for the \emph{perfect  probabilistic retrieval}:
	\begin{align}
	&\begin{aligned}\label{eq:optprob2}
	P_{succ}= \; & \underset{R_s}{\text{maximize}}
	& & \Tr[R_sD]\\
	& \text{subject to}
	& & R_s\geq 0 , \quad \Tr_2[R_s] \leq I,\\
	&& &R_s \mbox{ obeys Equation~\eqref{def:psrcond1}}.
	\end{aligned}
	\end{align}

	It is worth noticing that the optimization
	problems~\eqref{eq:optprob1}, \eqref{eq:optprob2} are almost the same, the only difference being the perfect retrieval condition~\eqref{def:psrcond1}.

	\subsection{The optimal retrieval}
	\label{sec:optimal-retrieval}

	In this section we will find an analytical 
	solution to the optimization problems of Equation~\eqref{eq:optprob1} and \eqref{eq:optprob2} when the unknown unitaries
	act on a two-dimensional Hilbert space
	(i.e. $U_0$ and $U_1$ are single-qubit
	gates). Later, we will present the analytical
	solution to the optimization problem for 
	perfect probabilistic retrieval for general
	dimension $d$. 
	If $d=2$ operators $U_i$ and states $\ket{\psi_{n,i}}$ 
        simplify as follows:
	\begin{align}
	& \begin{aligned}
	U_0 = e^{i \alpha } \ketbra{0}{0} +
	e^{-i \alpha } \ketbra{1}{1} \\
	U_1 = e^{-i \alpha } \ketbra{0}{0} +
	e^{i \alpha } \ketbra{1}{1}
	\end{aligned}  \\
	&\begin{aligned}
	\ket{\psi_{n,0}} &=    \frac{1}{\sqrt{2} }\big (
	e^{i \alpha n} \ket{0} + e^{-i \alpha n}\ket{1}
	\big) \\
	\ket{\psi_{n,1}} &=   \ket{\psi^*_{n,0}} =
	\sigma_x \ket{\psi_{n,0}},
	\quad \sigma_x =
	\ketbra{0}{1} + \ketbra{1}{0}.
	\end{aligned}
	\end{align}
	We 
	remind that 
	$4n \alpha < \pi$ since
	otherwise the unitaries become perfectly distinguishable.
	
	We now observe that the following commutation
	relations hold:
	\begin{align}
	&  \begin{aligned}
	[D , W( \beta,\gamma,l) ]= 0, \quad \forall
	\beta,\gamma \in \mathbb{}\mathbb{R}, \;\forall l \in
	\{0,1\} , \\
	\end{aligned}
	\\
	&
	\begin{aligned}
	& W(\beta,\gamma,l) :=
	\sigma^{(l)} \otimes
	\sigma^{(l)}Z(\beta,\gamma) \otimes
	\sigma^{(l)}Z^*(\beta,\gamma)  \\
	&Z(\beta,\gamma)     := e^{i\beta} \ketbra{0}{0} +
	e^{i \gamma} \ketbra{1}{1},\\
	& \sigma^{(0)} := I, \quad \sigma^{(1)} := \sigma_x.
	\end{aligned}\nonumber
	\end{align}
	We notice that $W(\beta,\gamma,l)$ is a unitary
	representation of the compact Lie group
	$ G := (\mathsf{U}(1) \times
	\mathsf{U}(1)) \rtimes \mathbb{Z} $,  the
	semidirect product is defined by the left action
	$\eta: l \mapsto \eta_{l}$,
	$\eta_{0} (e^{i\beta},
	e^{i\gamma}) = (e^{i\beta}, e^{i\gamma})$,
	$\eta_{1} (e^{i\beta},
	e^{i\gamma}) = ( e^{i\gamma}, e^{i\beta})$. 
	In order to lighten the notation, we will then
	write $W(g)$, $g \in G$ in place of $W(\beta,\gamma,l)$.
	Exploiting this symmetry we can prove the
	following lemma:
	
	\begin{lemma}[Symmetric retrieval is
		optimal]\label{lmm:decomposition-r} Without loss
		of generality, we can assume that the optimal
		approximate deterministic retrieval $R$
		(respectively the optimal perfect probabilistic retrieval
		$R_s$) satisfies $[R, W(g)] =0 $ (respectively
		$[R_s , W(g)] =0 $) for any $g \in G $.
		Moreover, if $[R_s , W(g)] =0 $ then Equation~\eqref{def:psrcond1} holds with
		\begin{align}
		\label{eq:8}
		\lambda_0 = \lambda_1 = \lambda.
		\end{align}
	\end{lemma}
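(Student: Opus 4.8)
The plan is to exploit the symmetry $[D, W(g)]=0$ through a group-averaging (twirling) argument. Given any optimal approximate-deterministic retrieval $R$ (respectively any optimal perfect-probabilistic retrieval $R_s$), I would introduce its twirl
\begin{align}
\bar{R} := \int_G W(g)\, R\, W(g)^\dagger\, \mathrm{d}g,
\end{align}
with $\mathrm{d}g$ the normalized Haar measure on the compact group $G$, and argue that $\bar{R}$ is again optimal while commuting with every $W(g)$ by construction. Three facts must be checked: that $\bar{R}$ is feasible, that it attains the same value of the objective, and (in the probabilistic case) that it still obeys the perfect-retrieval condition~\eqref{def:psrcond1}.

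Feasibility and objective are the routine part. Positivity of $\bar{R}$ is inherited from the positive integrand $W(g) R W(g)^\dagger$. For the normalization I would use the factorized form $W(g) = \sigma^{(l)} \otimes M(g)$ with $M(g) := \sigma^{(l)}Z(\beta,\gamma) \otimes \sigma^{(l)}Z^*(\beta,\gamma)$: since the $\mathcal{H}_2$ factor of $M(g)$ is unitary and is the one being traced out, one gets $\Tr_2[W(g) R W(g)^\dagger] = (\sigma^{(l)} \otimes \sigma^{(l)}Z(\beta,\gamma))\, \Tr_2[R]\, (\sigma^{(l)} \otimes \sigma^{(l)}Z(\beta,\gamma))^\dagger$, so the constraint $\Tr_2[R] = I$ is preserved exactly and $\Tr_2[R] \le I$ is mapped to an operator bounded by $I$; averaging keeps both. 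Invariance of the objective is immediate from $W(g)^\dagger D W(g) = D$ and cyclicity, giving $\Tr[\bar{R} D] = \Tr[R D]$. Hence an optimal solution commuting with all $W(g)$ exists in both scenarios.

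The decisive step is preserving the perfect-retrieval constraint. Here I would track how $W(g)$ acts on the two objects that define~\eqref{def:psrcond1}. Using $\sigma_x \ket{\psi^*_{n,0}} = \ket{\psi^*_{n,1}}$ on $\mathcal{H}_0$ and the Choi identity $(A \otimes B)\Ket{C} = \Ket{B C A^T}$ together with $Z^* U_i Z = U_i$ and $\sigma_x U_i \sigma_x = U_{1-i}$, a short computation shows that the pure-phase elements ($l=0$) fix each $\ket{\psi^*_{n,i}}$ and leave each $\Ket{U_i}$ invariant, while the swap element ($l=1$) simultaneously sends $\ket{\psi^*_{n,i}} \mapsto \ket{\psi^*_{n,1-i}}$ and $\Ket{U_{1-i}} \mapsto \Ket{U_i}$. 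Substituting this into $\bra{\psi^*_{n,i}}\bar{R}_s\ket{\psi^*_{n,i}}$ and carrying out the Haar integral, the $l=0$ part contributes $\lambda_i \KetBra{U_i}{U_i}$ and the $l=1$ part contributes $\lambda_{1-i} \KetBra{U_i}{U_i}$, so $\bar{R}_s$ satisfies~\eqref{def:psrcond1} with a single probability $\bar{\lambda}_0 = \bar{\lambda}_1 = \tfrac{1}{2}(\lambda_0 + \lambda_1)$, which moreover equals the original $P_{succ}$. This proves the ``without loss of generality'' claim; for the last sentence of the lemma, the same swap identity applied directly to a symmetric $R_s$ gives $\lambda_i \KetBra{U_i}{U_i} = \bra{\psi^*_{n,i}} R_s \ket{\psi^*_{n,i}} = \bra{\psi^*_{n,i}} W(g_1) R_s W(g_1)^\dagger \ket{\psi^*_{n,i}} = \lambda_{1-i}\KetBra{U_i}{U_i}$ for the swap element $g_1$, forcing $\lambda_0 = \lambda_1 =: \lambda$, i.e.\ Equation~\eqref{eq:8}.

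I expect the only genuine obstacle to be this last verification: one must confirm that $W(g)$ permutes the pairs $(\ket{\psi^*_{n,i}}, \Ket{U_i})$ among themselves rather than mapping them outside the constraint family. This is the term-by-term refinement of the single commutation relation $[D, W(g)] = 0$, and it is precisely what equalizes the two success probabilities.
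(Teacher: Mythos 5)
Your proposal is correct and takes essentially the same route as the paper: Holevo's group-averaging (twirling) argument, using $[D,W(g)]=0$ to preserve optimality and the action of the swap element ($l=1$), which exchanges the pairs $(\ket{\psi^*_{n,i}},\Ket{U_i})$, to show the average satisfies Equation~\eqref{def:psrcond1} with $\overline{\lambda}_0=\overline{\lambda}_1=\tfrac12(\lambda_0+\lambda_1)$. The only difference is one of detail: you explicitly verify feasibility preservation and the term-by-term action of $W(g)$ on the constraint family, which the paper's proof asserts without computation.
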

	\begin{proof}
		The proof follows the Holevo’s averaging
		argument \cite{holevo2011probabilistic}.
		
		Let
		$R$ be the optimal approximate deterministic
		retrieval.  If $R$ is a quantum operation we
		observe that also $W^\dag(g) R W(g)$ is a
		quantum operation and therefore also the average
		$\overline{R} = \int_G dg W^\dag(g) R W(g)$ is a
		quantum operation. Clearly $[\overline{R} ,
		W(g)] = 0 $ for any $g \in G$.
		Since $[D , W(g)] = 0 $ for any $g \in G$ we
		have that $\Tr [RD] = \Tr [\overline{R} D]$,
		i.e. $\overline{R}$ is another optimal approximate
		deterministic retrieval.
		
		The thesis for the perfect  probabilistic case
		follows from the fact that if
		an optimal $R_s $ obeys the perfect retrieval condition  of  Equation~\eqref{def:psrcond1} with the values $\lambda_0$
		and $\lambda_1$ then
		the averaged operator $\overline{R}_s$
		obeys the perfect retrieval condition  of  Equation~\eqref{def:psrcond1} with $\overline{\lambda}_0 =
		\overline{\lambda}_1 = \frac12 (\lambda_0 +
		\lambda_1)$. 
	\end{proof}

	Thanks to this result, we can conveniently express
	the optimal retrieval in a block diagonal form.
	First, let us now define the following projectors:
	\begin{align}
	\label{eq:10} &
	\begin{aligned} &P := \frac12 \Big
	(\KetBra{e_1}{e_1} + \KetBra{e_2}{e_2} \Big ), \\
	&P':= \frac12 \Big ( \KetBra{e'_1}{e'_1} +
	\KetBra{e'_2}{e'_2} \Big ) ,
	\end{aligned}\\ &
	\label{eq:defee}
	\begin{aligned} \Ket{e_1} &:=
	\ket{+}\Ket{I},\qquad &\Ket{e_2}
	&:=\ket{-}\Ket{\sigma_z},\\ \Ket{e'_1} &:=
	\ket{+}\Ket{\sigma_z},\qquad &\Ket{e'_2}
	&:=\ket{-}\Ket{I}.
	\end{aligned}
	\end{align}
	Notice that $P'= ( I \otimes
	\sigma_z) P ( I\otimes \sigma_z )$ and
	$\Ket{e'_i} :=(I \otimes \sigma_z)\Ket{e_i}$.
	We can now prove the following lemma.
	\begin{lemma}[Block diagonal
		retrieval]\label{lmm:blockdiagretriev}
		Without loss
		of generality, we can assume that the optimal
		approximate deterministic retrieval $R$ and the
		optimal
		perfect probabilistic retrieval  $R_s$
		satisfy $R = PRP + P'RP'$
		and $ R_s = PR_sP + P'R_sP'$ respectively.
		Furthermore,  we have
		\begin{align}
		\label{eq:1}
		\begin{aligned}
		\bra{\psi^*_{n,i}}PR_sP\ket{\psi^*_{n,i}} &=
		\lambda_A \KetBra{U_i}{U_i}  \\
		\bra{\psi^*_{n,i}}P'R_sP'\ket{\psi^*_{n,i}}& =
		\lambda_B \KetBra{U_i}{U_i}.
		\end{aligned}
		\end{align}
		for any $i=0,1$.
	\end{lemma}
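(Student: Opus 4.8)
The plan is to combine the $G$-symmetry from Lemma~\ref{lmm:decomposition-r} with Schur's lemma applied to the representation $W$ on $\mathcal{H}_0\otimes\mathcal{H}_1\otimes\mathcal{H}_2$. First I would decompose $W$ into isotypic components. On the abelian part ($l=0$) one has $W=I\otimes Z(\beta,\gamma)\otimes Z^*(\beta,\gamma)$, under which $\spn\{\ket{00},\ket{11}\}\subset\mathcal{H}_1\otimes\mathcal{H}_2$ carries the trivial character while $\ket{01},\ket{10}$ carry the characters $e^{\pm i(\beta-\gamma)}$; the discrete flip $W(0,0,1)=\sigma_x\otimes\sigma_x\otimes\sigma_x$ interchanges the latter two. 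Hence $\mathcal{H}_0\otimes\spn\{\ket{01},\ket{10}\}=\mathrm{range}(Q)$, with $Q:=I-P-P'$, is a single isotypic component, while on $\mathcal{H}_0\otimes\spn\{\ket{00},\ket{11}\}$ the flip acts as $\sigma_x^{\otimes 3}$ and its $\pm1$ eigenspaces are exactly $\mathrm{range}(P)=\spn\{\Ket{e_1},\Ket{e_2}\}$ and $\mathrm{range}(P')=\spn\{\Ket{e'_1},\Ket{e'_2}\}$, two further inequivalent isotypic components.

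Since $P,P',Q$ project onto distinct isotypic components, Lemma~\ref{lmm:decomposition-r} and Schur's lemma imply that the optimal $R$ (and $R_s$) is block diagonal, $R=PRP+P'RP'+QRQ$, with no cross terms. Next I would note that $D$ is supported on $\mathrm{range}(P+P')$: writing $\Ket{U_0}=\cos\alpha\,\Ket{I}+i\sin\alpha\,\Ket{\sigma_z}$ and $\Ket{U_1}=\cos\alpha\,\Ket{I}-i\sin\alpha\,\Ket{\sigma_z}$, both lying in $\spn\{\ket{00},\ket{11}\}$, every vector $\ket{\psi^*_{n,i}}\otimes\Ket{U_i}$ lies in $\mathcal{H}_0\otimes\spn\{\ket{00},\ket{11}\}$, so $QD=DQ=0$. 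Therefore the $Q$-block enters neither $\Tr[RD]$ nor the perfect-retrieval condition~\eqref{def:psrcond1}, and discarding $QRQ$ (resp. $QR_sQ$) yields a feasible operation with the same objective; this gives $R=PRP+P'RP'$ (for the deterministic case one uses the relaxation $\Tr_2\le I$ already justified above).

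For the \emph{Furthermore} part I would argue by positivity and rank. Since $R_s=PR_sP+P'R_sP'$, the condition~\eqref{def:psrcond1} reads $\bra{\psi^*_{n,i}}PR_sP\ket{\psi^*_{n,i}}+\bra{\psi^*_{n,i}}P'R_sP'\ket{\psi^*_{n,i}}=\lambda\KetBra{U_i}{U_i}$. Both summands are positive semidefinite operators on $\mathcal{H}_1\otimes\mathcal{H}_2$ (as $PR_sP,P'R_sP'\ge0$) whose sum has rank one; since $A,B\ge0$ with $A+B\propto\KetBra{U_i}{U_i}$ forces $\mathrm{range}(A),\mathrm{range}(B)\subseteq\spn\{\Ket{U_i}\}$, each summand is separately proportional to $\KetBra{U_i}{U_i}$, say with coefficients $\lambda_A^{(i)},\lambda_B^{(i)}$. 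To remove the $i$-dependence I would use the residual order-two symmetry $\Omega:=W(0,0,1)=\sigma_x\otimes\sigma_x\otimes\sigma_x$, which fixes $\Ket{e_1},\Ket{e_2}$ (so commutes with $P$, and likewise with $P'$), commutes with $R_s$, and satisfies $\sigma_x\ket{\psi^*_{n,0}}=\ket{\psi^*_{n,1}}$ together with $(\sigma_x\otimes\sigma_x)\Ket{U_0}=\Ket{U_1}$. Conjugating $\bra{\psi^*_{n,1}}PR_sP\ket{\psi^*_{n,1}}$ by $\Omega$ and using $\Omega(PR_sP)\Omega=PR_sP$ turns the $i=1$ identity into the $i=0$ one, so $\lambda_A^{(0)}=\lambda_A^{(1)}=:\lambda_A$ and similarly $\lambda_B^{(0)}=\lambda_B^{(1)}=:\lambda_B$.

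I expect the block-diagonalization to be routine once the isotypic components are correctly matched to $P,P',Q$; the genuinely delicate steps are (i) splitting the single rank-one constraint~\eqref{def:psrcond1} into two block-wise proportionalities via positivity, and (ii) the symmetry argument that locks the coefficients for $i=0$ and $i=1$ together. The main point to get right is that $\Omega$ simultaneously preserves the projectors, commutes with the retrieval, and exchanges the two labels $i$.
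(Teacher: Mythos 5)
Your proof is correct, and its skeleton coincides with the paper's: invoke Lemma~\ref{lmm:decomposition-r}, block-diagonalize $R$ (resp.\ $R_s$) with respect to $P$, $P'$, $Q$, eliminate the $Q$-block, and split the rank-one constraint by positivity. The differences lie in how three sub-steps are executed, and in two of them you are more explicit than the paper. (i) Where the paper simply asserts that $[R,W(g)]=0$ together with invariance of $P,P'$ gives $R=PRP+P'RP'+QRQ$, you justify the absence of cross terms by identifying the ranges of $P$, $P'$, $Q$ as isotypic components of mutually inequivalent irreducible representations of $G$ and applying Schur's lemma; this is the clean way to make that step airtight, since invariance alone does not forbid $P$--$P'$ off-diagonal blocks. (ii) For the $Q$-block in the probabilistic case the paper proves the stronger statement that $QR_sQ=0$ for \emph{any} symmetric $R_s$ obeying~\eqref{def:psrcond1}: writing $Q=I\otimes Q'$ with $Q'\Ket{U_i}=0$, it gets $\bra{\psi^*_{n,i}}QR_sQ\ket{\psi^*_{n,i}}=\lambda_i\,Q'\KetBra{U_i}{U_i}Q'=0$ and then notes that the vectors $\ket{\psi^*_{n,i}}\otimes\ket{\varphi}$ span the whole space, forcing the positive operator $QR_sQ$ to vanish. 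You instead argue, via support orthogonality, that the $Q$-block term in the block-decomposed constraint must vanish and that discarding $QR_sQ$ preserves positivity, the trace condition, the constraint and the objective. Your version is logically weaker (it yields only the ``without loss of generality'' form rather than exact vanishing) but fully sufficient for the lemma, and it avoids the spanning argument. (iii) You explicitly prove that $\lambda_A,\lambda_B$ are the same for $i=0,1$ by conjugating with the flip $\Omega=\sigma_x^{\otimes 3}$, which commutes with $P$, $P'$, $R_s$ and maps $(\ket{\psi^*_{n,0}},\Ket{U_0})$ to $(\ket{\psi^*_{n,1}},\Ket{U_1})$; the paper writes $i$-independent coefficients directly, so its positivity argument alone only gives possibly $i$-dependent proportionality constants, and your symmetry step closes that small gap in the published proof.
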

	\begin{proof} One can verify that $P$ and $P'$ are
		projectors on invariant subspaces of the
		representation $W(g)$. Since from
		Lemma~\ref{lmm:decomposition-r} we have that
		$[R,W(g)] = 0$ holds, we have
		that $R =  PRP + P'RP' + QRQ$ where
		$Q := I - P - P'$ (the same holds for $R_s$).
		Since $D = PDP + P'DP' $ the thesis for
		the approximate deterministic case follows.
		In the perfect probabilistic case, 
		from inserting $R_s =  PR_sP + P'R_sP' + QR_sQ$ into Equation (\ref{def:psrcond1})  we conclude that
		\begin{align*}
		\bra{\psi^*_{n,i}} PR_sP \ket{\psi^*_{n,i}}
		&= \lambda_A \KetBra{U_i}{U_i} \\
		\bra{\psi^*_{n,i}} P'R_sP' \ket{\psi^*_{n,i}}
		&=\lambda_B \KetBra{U_i}{U_i}\\
		\bra{\psi^*_{n,i}} QR_sQ\ket{\psi^*_{n,i}}
		&=\lambda_C \KetBra{U_i}{U_i},
		\end{align*}  
		because 
		terms on the left 
		are positive operators, which are supposed to sum up to a rank one operator $\lambda \KetBra{U_i}{U_i}$.
		Next, we observe that $Q=I\otimes(\KetBra{\sigma_x}{\sigma_x}
		+\KetBra{\sigma_y}{\sigma_y})\equiv I \otimes Q'$, 
		$Q'\Ket{U_i}=0$ and using Eq.~(\ref{def:psrcond1}) we obtain
		\begin{align*}
		\bra{\psi^*_{n,i}}
		QR_sQ\ket{\psi^*_{n,i}} &=Q'\bra{\psi^*_{n,i}}
		R_s\ket{\psi^*_{n,i}}Q' \nonumber \\
		&=\lambda_i  Q'\KetBra{U_i}{U_i}Q' = 0.
		\end{align*}
		Thus, $\bra{\psi^*_{n,i}}\otimes \bra{\varphi} Q R_{s} Q \ket{\psi^*_{n,i}}\otimes \ket{\varphi} =0$ for $i=0,1$ and $\forall \ket{\varphi}\in \hilb{H}_{12}$. In other words, support of positive semidefinite operator $Q R_{s} Q$ has to be orthogonal to the above vectors, which span the whole space. We conclude that the support of $Q R_{s} Q$ is empty, i.e. $Q R_{s} Q=0$ and the thesis follows.
	\end{proof}
	
	From the block diagonal structure of Lemma
	\ref{lmm:blockdiagretriev}, we have a convenient
	way to rephrase the optimization of the retrieval.
	\begin{proposition}
		\label{prp:optimal-approxdet_retrieval_compactform}
		The optimization
		of the approximate deterministic retrieval (Equation~\eqref{eq:optprob1}) is equivalent to the following state discrimination problem:
		\begin{align}
		&      \begin{aligned}
		&\begin{aligned}
		F_{e}=\;     & \underset{A,B}{\mathrm{maximize}}
		& & \bra{u}A\ket{u} + \bra{v}B \ket{v}\\
		& \mathrm{subject}\; \mathrm{to}
		& & 0 \leq A,B \leq I, \quad A+B \leq I.
		\end{aligned}
		\end{aligned}
		\label{eq:optprob_approxdet_final}
		\end{align}
		The optimization
		of the perfect probabilistic retrieval~\eqref{eq:optprob2} is equivalent to the following unambiguous state discrimination problem:
		\begin{align}
		& \begin{aligned}
		&\begin{aligned}
		P_{succ}=\; & \underset{A,B}{\mathrm{maximize}}
		& & \bra{u}A\ket{u} + \bra{v}B \ket{v}\\
		& \mathrm{subject}\;\mathrm{to}
		& & 0 \leq A,B \leq I \quad  A+B \leq I \\
		&&&\bra{v}A \ket{v} = \bra{u}B \ket{u} = 0.
		\end{aligned}	\end{aligned}\label{eq:optprob_perfprob_final}
		\end{align}   
		where $A,B $ are a $2\times 2$ matrices
		and we defined
		\begin{align}
		\label{eq:2}
		&\ket{u} :=
		\begin{pmatrix} c_n c\\ s_n s
		\end{pmatrix} \quad \ket{v} :=
		\begin{pmatrix} c_n s\\ -s_n c
		\end{pmatrix} \\
		& \begin{aligned}
		s&:= \sin(\alpha) ,  &c:= &\cos(\alpha)\\
		s_n&:= \sin(n\alpha) , & c_n :=&\cos(n\alpha) .
		\end{aligned} \nonumber
		\end{align}
	\end{proposition}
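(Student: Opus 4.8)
The plan is to use the block structure granted by Lemma~\ref{lmm:blockdiagretriev} to collapse the two semidefinite problems~\eqref{eq:optprob1} and~\eqref{eq:optprob2} into $2\times 2$ problems. First I would check directly from~\eqref{eq:defee} that $P$ and $P'$ project onto \emph{orthogonal} two-dimensional subspaces (indeed $\braket{e_i}{e'_j}=0$, using $\braket{+}{-}=0$ and $\BraKet{I}{\sigma_z}=\Tr\sigma_z=0$). Writing the symmetric block-diagonal retrieval as $R=\sum_{jk}A_{jk}\Ket{e_j}\Bra{e_k}+\sum_{jk}B_{jk}\Ket{e'_j}\Bra{e'_k}$ with $2\times 2$ matrices $A,B$, the orthogonality of the two supports makes $R\ge 0$ equivalent to $A\ge 0$ and $B\ge 0$, which already supplies half of the constraints.

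For the objective, I would start from~\eqref{eq:5}, so that $\Tr[RD]=\frac{1}{2d^2}\sum_i\bra{\psi^*_{n,i}}\!\otimes\!\Bra{U_i}\,R\,\ket{\psi^*_{n,i}}\!\otimes\!\Ket{U_i}$. The crucial step is to expand the test vector $\ket{\psi^*_{n,i}}\otimes\Ket{U_i}$ in the basis $\{\Ket{e_j},\Ket{e'_j}\}$: writing $\ket{\psi^*_{n,i}}$ in the $\{\ket{+},\ket{-}\}$ basis and $\Ket{U_i}=c\Ket{I}\pm is\Ket{\sigma_z}$, I expect the $P$-component to carry the coefficient vector $\ket u$ and the $P'$-component the vector $\ket v$ of~\eqref{eq:2}, up to an $i$-dependent global phase that cancels in the sandwich, with the \emph{same} $\ket u,\ket v$ for $i=0,1$. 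Since the cross terms between the two blocks vanish, this yields $\Tr[RD]=\bra u A\ket u+\bra v B\ket v$, the objective of both~\eqref{eq:optprob_approxdet_final} and~\eqref{eq:optprob_perfprob_final}.

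For the trace constraint I would compute $\Tr_2[\Ket{e_j}\Bra{e_k}]$ and $\Tr_2[\Ket{e'_j}\Bra{e'_k}]$ explicitly; each is a rank-one operator on $\hilb{H}_0$ tensored with either $I$ or $\sigma_z$ on $\hilb{H}_1$. As $I$ and $\sigma_z$ are both diagonal in the computational basis of $\hilb{H}_1$, the operator $\Tr_2[R]$ is block diagonal in $\hilb{H}_1$, its two diagonal blocks being $A+B$ and $\sigma_z(A+B)\sigma_z$ (expressed in the $\{\ket{+},\ket{-}\}$ basis of $\hilb{H}_0$), which share the same spectrum. Hence $\Tr_2[R]\le I$ collapses to the single inequality $A+B\le I$; together with $A,B\ge 0$ this is exactly $0\le A,B\le I,\ A+B\le I$. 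In the deterministic case the channel condition $\Tr_2[R]=I$ would give $A+B=I$, but the argument given right after Equation~\eqref{eq:optprob1} lets me relax to $A+B\le I$ without loss.

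Finally, for the perfect-retrieval condition I would invoke~\eqref{eq:1}, reducing matters to translating the two block conditions $\bra{\psi^*_{n,i}}PR_sP\ket{\psi^*_{n,i}}\propto\KetBra{U_i}{U_i}$ and $\bra{\psi^*_{n,i}}P'R_sP'\ket{\psi^*_{n,i}}\propto\KetBra{U_i}{U_i}$. Each left-hand side is a positive operator supported on $\mathrm{span}\{\Ket{I},\Ket{\sigma_z}\}$, so proportionality to the rank-one $\KetBra{U_i}{U_i}$ is equivalent to annihilating the orthogonal direction $\Ket{U_i^\perp}$ in that plane. Evaluating $\Bra{U_i^\perp}(\cdot)\Ket{U_i^\perp}$ should turn the $P$-block condition into $\bra v A\ket v=0$ and the $P'$-block condition into $\bra u B\ket u=0$ — the roles of $u$ and $v$ are swapped between the blocks — recovering~\eqref{eq:optprob_perfprob_final} and exhibiting it as the unambiguous discrimination of $\ket u$ and $\ket v$. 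I expect the main obstacle to be the bookkeeping in the objective and in this last step: one must carry the complex conjugations coming from $\ket{\psi^*_{n,i}}=U_i^{*n}\ket{+}$ and from the $\Ket{\cdot}$ convention through consistently, and verify that the $i$-dependent phases in the $P$- and $P'$-components cancel so that a single $i$-independent pair $\ket u,\ket v$ governs both terms.
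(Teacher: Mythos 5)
Your proposal is correct and mirrors the paper's own proof in Appendix~\ref{sec:appqrsq}: the same block parametrization $R=\sum_{jk}A_{jk}\KetBra{e_j}{e_k}+\sum_{jk}B_{jk}\KetBra{e'_j}{e'_k}$, the same expansion of $\ket{\psi^*_{n,i}}\Ket{U_i}$ in the $\{\Ket{e_j},\Ket{e'_j}\}$ basis for the objective (Lemma~\ref{lmm:figmeritwithAandB}), and the same partial-trace computation for the constraint (Lemma~\ref{lmm:PT}). Your handling of the perfect-retrieval condition via annihilation of $\Ket{U_i^\perp}$ is just a mild repackaging of the paper's Lemma~\ref{lmm:perfretrievconditionwithAandB}, which instead writes $A,B$ explicitly as rank-one projectors $\lambda_A\ketbra{\phi_A}{\phi_A}$, $\lambda_B\ketbra{\phi_B}{\phi_B}$ and notes $\braket{v}{\phi_A}=\braket{u}{\phi_B}=0$; both rest on the same positive-semidefiniteness argument in a two-dimensional space.
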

	\begin{proof}
		The proof is reported in Appendix \ref{sec:appqrsq}.
	\end{proof}
	
	The optimal discrimination of a pair of
	quantum states has a known analytical solution
	both in the minimum error and in the
	unambiguous case.  The optimal value of the
	process fidelity
	for the approximate
	deterministic case reads:
	\begin{align}
	\label{eq:7}
	F_e = \frac{1}{2} + \frac{1}{2}\sqrt{1-(\sin{2\alpha}\cos{2n\alpha})^2}.
	\end{align}
	The optimal strategy can be realized by the
	measure-and-prepare protocol presented in Figure \ref{fig:meas_prep} 
	(see    Appendix \ref{sec:real-optim-appr} for the details).  
	Since classical information can be cloned, our result  can be extended to the case where we
	must reproduce $m \geq 1$ copies of the unknown unitary with the criterion of maximizing the single-copy fidelity.

	
	\begin{figure}[t]
		\begin{center}
			\includegraphics[width=\columnwidth
			]{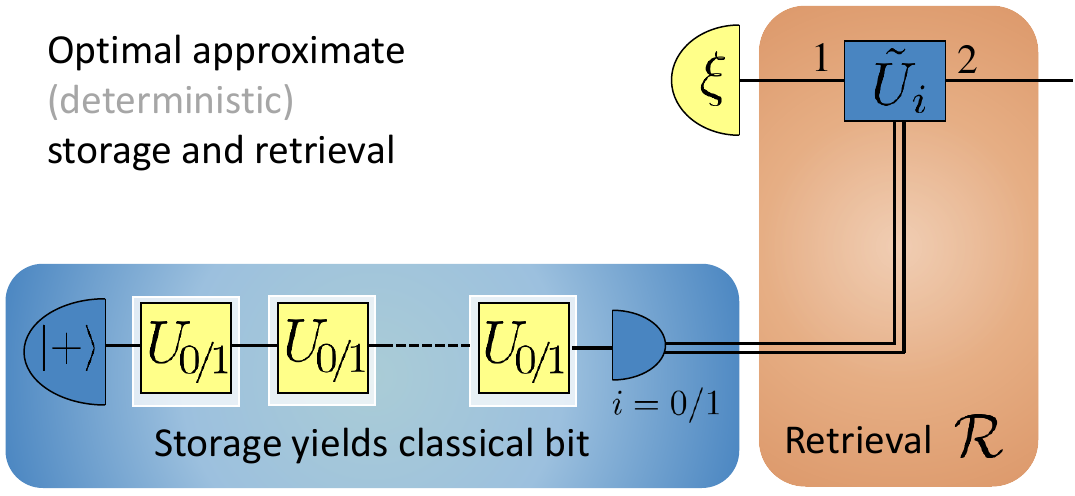}
			\caption{Optimal approximate deterministic storage and retrieval of two $d$-dimensional unitary transformations can be realized via optimal discrimination among the two unitaries in the storage phase and conditional preparation of slightly modified unitaries in the retrieval phase.
			}
			\label{fig:meas_prep}
		\end{center}
	\end{figure}

	The optimal solution to unambiguous state
	discrimination is known
	\cite{jaeger1995optimal,SedlakActaUnambigous}
	and it gives
 	\begin{align}
	\label{eq:fom_final} &P_{succ} =
	\left \{
	\begin{aligned}
	&\frac{1-(\cos{2n\alpha})^2}{2(1-\cos{2n\alpha}\cos{2\alpha})}&& \mbox{ if }
	\alpha \in (0,{\chi_n}) \\ 
	&1-\cos{2n\alpha}\sin{2\alpha}
	&& \mbox{ if } \alpha \in [{\chi_n}, \frac{\pi}{4n}] 
	\end{aligned} 
 \right.
 \end{align}
where $\chi_n$ is the solution of the following trigonometric equation
\begin{align}
\cos{(2n\alpha)} = \frac{1}{\cos{(2\alpha)}+ \sin{(2\alpha)}}.
\label{eq:chi_alfa}
\end{align}
The range of angles $\alpha$ for which $\alpha$ is below (above) angle $\chi_n$ we denote as small (large) $\alpha$ regime, respectively. 
	Success probability for various values of
	$n,\alpha$ is illustrated on
	Figure~\ref{fig:fig1}.  
        For $\alpha \ll 1$
        the fidelity between $U_0$ and $U_1$ is
	large. Therefore, $U_0$ and $U_1$ can be perfectly
	discriminated only with a large number of uses,
	i.e. ${n\gg1}$. Therefore it makes sense to power
	expand Equation~\eqref{eq:fom_final} around
	$\alpha =0$ and study how the probability of success scales with $n$. We obtain 
	\begin{align}
	\label{eq:16a}
	P_{succ} = 1 - \frac{1}{n^2+1} +\frac{n^2+2n^4-3n^6}{3(n^2+1)^2}\alpha^2 + \mathcal{O}(\alpha^4).
	\end{align}
	In comparison, the probability of success of the perfect
	probabilistic storage and retrieval of an unknown
	unitary transformation which is randomly picked from unitary group $U(d)$ with Haar measure reads \cite{PhysRevLett.122.170502,PhysRevA.102.032618}
	\begin{align}
	\label{eq:16b}
	&P_{succ} =
	\left \{
	\begin{aligned}
	&
	1 - \frac{d^2-1}{n+d^2-1}
	&&\!\!\mbox{ if }
	U\in U(d),\, d \geq 2
	\\
	&1 - \frac{1}{n+1}
	&&\!\! \mbox{ if } 
	U= 
 \begin{pmatrix}
 1 & 0 \\
          0&e^{i\varphi}
          \end{pmatrix} . 
	\end{aligned}  \right.
	\end{align}
	Therefore, a more accurate prior information about
	the unknown unitary allows us to achieve a quadratic
        improvement. 
	
	Equation~\eqref{eq:fom_final} provides the optimal probability of success for the perfect probabilistic storage and retrieval of
	a pair of unitaries also in arbitrary dimension $d$.
	Indeed, let $U_0$ and $U_1$
	be given as in Equation~\eqref{eq:26}.
	By restricting the action of $U_0$ and $U_1$
	on the subspace spanned by $\ket{0}$
	and $\ket{d-1}$ we have the two single-qubit unitaries
	$\tilde{U}_0   :=
	e^{ i \alpha} \ketbra{0}{0} +
	e^{- i \alpha} \ketbra{d-1}{d-1}$ and
	$  \tilde{U}_1   :=
	e^{ -i \alpha} \ketbra{0}{0} +
	e^{ i \alpha} \ketbra{d-1}{d-1} $.
	On one hand, the probability of success for the perfect storage and retrieval
	of the pair $\{ \tilde{U}_0 , \tilde{U}_1 \}$
	is given by Equation~\eqref{eq:fom_final}
	and it must be an upper bound to the
	the probability of success for the perfect storage and retrieval
	of the pair $\{ U_0 , U_1 \}$.
	On the other hand, we can provide a retrieval strategy that achieves the bound  (see Appendix \ref{sec:isom-appr-perf} for the details).

	\begin{figure}[t]
		\begin{center}
			\includegraphics[width=\columnwidth
			]{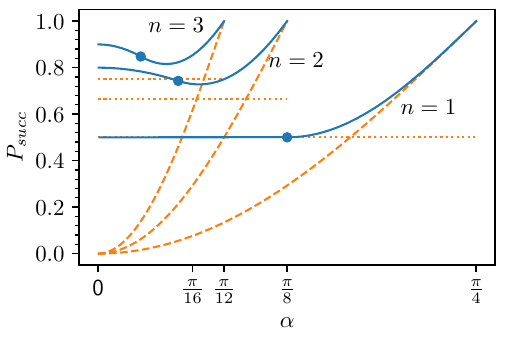}
			\caption{Optimal success probability of perfect storage
				and retrieval of two $d$-dimensional unitary
				transformations as a function of $n$ - the number
				of black box uses and of the angle $\alpha$,
				which equals half of the angular spread of
				$(U_1)^\dagger U_0$. Dotted and dashed lines represent
				performance achievable by non-optimal strategies
				based on results from Ref. \cite{PhysRevA.102.032618} and
				based on unambiguous discrimination, respectively. The blue circles on the
				lines mark the transition between small/large $\alpha$
				regime of the success probability.}
			\label{fig:fig1}
		\end{center}
	\end{figure}


	\subsection{Quantum processor for $2$ unitary transformations}
	\label{sec:processor}
	
	Retrieval phase can be also viewed as a
	programmable quantum processor, since the state
	resulting from the storage phase programs
	(encodes) the transformation to be performed on
	another system. 
	In the rest of this section we aim to rewrite the obtained analytical formulas in the form more convenient from the perspective of quantum processors. 
	The only relevant parameter for the two possible program states (previously denoted as $\ket{\psi_{n,0}}$, $\ket{\psi_{n,1}}$) is their mutual angle, which we parameterize through the overlap as 
	$|\braket{\psi_{n,0}}{\psi_{n,1}}|=\cos{\beta}$ $\beta\in [0,\pi/2]$, i.e. $\beta=2n \alpha$, since the optimization of the retrieval leads to the same formulas also if $n$ is considered to be any non-negative number.
	In this notation, quantum proccess fidelity, Equation~(\ref{eq:7}), can be rewritten as
	\begin{align}
	\label{eq:det_fom_proc}
	F_e = \frac{1}{2} + \frac{1}{2}\sqrt{1-(\sin{2\alpha}\cos{\beta})^2}.
	\end{align}
	The corresponding optimal trade-off between the angle of the program states $\beta$, angular spread ($4\alpha$) of the eigenvalues of the relative unitary $U_1^{\dag } U_0$, and the achievable quantum process fidelity $F_e$ is depicted in Figure~\ref{fig:opt_approx_proc}.
	
	\begin{figure}[t]
		\begin{center}
			\includegraphics[width=8.1cm]{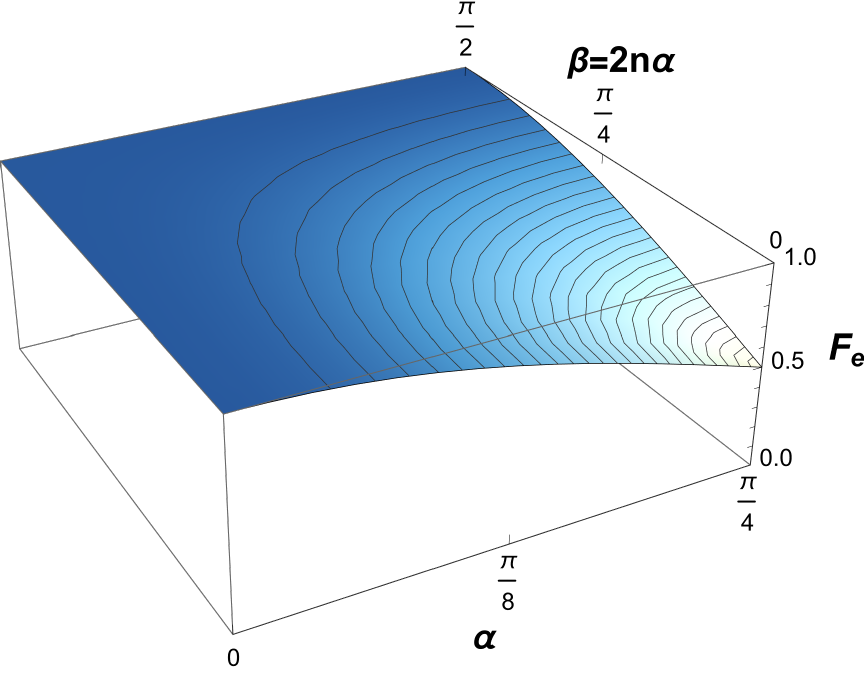}
			\caption{Maximum achievable quantum process fidelity $F_e$ in the \emph{approximate deterministic 
            retrieval
            } 
            as a function of the angle of the program states $\beta$ and angle $\alpha$ representing one fourth of the angular spread of the eigenvalues of the relative unitary $U_1^{\dag } U_0$ for unitary transformations $U_0$, $U_1$. 
			}
			\label{fig:opt_approx_proc}
		\end{center}
	\end{figure}
	
	We can inspect the unambiguous case in the same way. 
	The success probability for unambiguous state discrimination, Equation~(\ref{eq:fom_final}),  can be rewritten in terms of parameters $\alpha, \beta$ as
	\begin{align}
	\label{eq:fom_proc_final} 
	&P_{succ} =
	\left \{
	\begin{aligned}
	&\frac{1-(\cos{\beta})^2}{2(1-\cos{\beta}\cos{2\alpha})}&& \mbox{ if }
	\beta \leq \beta_B(\alpha)
	\\
	&1-\cos{\beta}\sin{2\alpha}
	&& \mbox{ if } \beta \geq \beta_B(\alpha)
	\end{aligned}  \right.\\
	\nonumber \\
	\label{eq:def_trans_beta}
	& \beta_B(\alpha):=\arccos\frac{1}{\cos{2\alpha}+\sin{2\alpha}}.
	\end{align}
	
	We depict the achievable trade-off between parameters $\alpha$, $\beta$ and success probability in Figure 
	\ref{fig:opt_unamb_proc}.

	\begin{figure}[t]
		\begin{center}
			\includegraphics[width=\columnwidth]{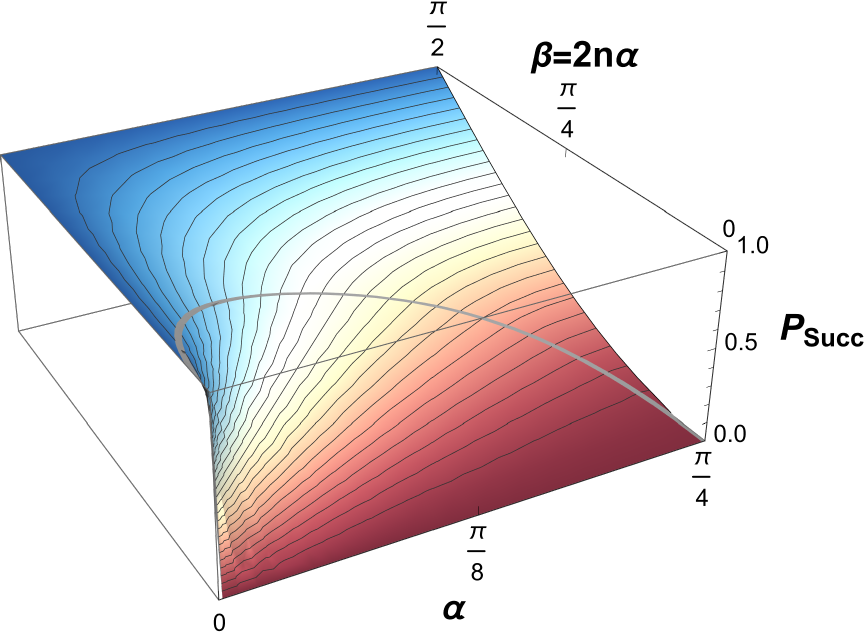}
			\caption{Maximum achievable success probability $P_{succ}$ in the \emph{perfect probabilistic 
            retrieval 
            } 
            as a function of the angle of the program states $\beta$ and angle $\alpha$ representing one fourth of the angular spread of the eigenvalues of the relative unitary $U_1^{\dag } U_0$ for unitary transformations $U_0$, $U_1$.
			}
			\label{fig:opt_unamb_proc}
		\end{center}
	\end{figure}
	
	\subsection{Quantum circuit for perfect storage and retrieval of $2$ unitaries}
	\label{sec:qcircuit}
	
	Next, we return to a qubit case ($d=2$) and we derive a simple quantum circuit realizing the optimal perfect probabilistic retrieval. We will use the following notation to simplify the formulas
	\begin{align}
\begin{aligned}
    \tilde{c}_n&:=\cos{2n\alpha} & \tilde{s}_n&:=\sin{2n\alpha} \\
	c_n&:=\cos{n\alpha} & s_n&:=\sin{n\alpha} \\
	\tilde{c}&:=\cos{2\alpha} & \tilde{s}&:=\sin{2\alpha}\\
	c & := \cos \alpha & s &:= \sin \alpha.
\end{aligned}
	\end{align}

	\begin{proposition}
		\label{prop:qc_for_psr}
		Quantum circuit depicted in Figure~\ref{fig:fig2} performs optimal perfect probabilistic retrieval of qubit unitaries $U_0$, $U_1$ if the qutrit-basis measurement outcomes $0,1$ are considered as success and $\sigma_Z$ correction is performed in case of outcome $1$. Measurement outcome $2$ is considered as a failure.
		In particular, we have
		\begin{align}
		\label{eq:psrworks}
		&{}_A\bra{0}\otimes I_2\;(M \otimes I)\, C_X
		\;\ket{\psi_{n,k}}\otimes I_1=
		\sqrt{\lambda_A} U_k \\
		&{}_A\bra{1}\otimes I_2\;(M \otimes \sigma_z)\, C_X
		\;\ket{\psi_{n,k}}\otimes I_1=
		\mp i
		\sqrt{\lambda_B} U_k\; , \nonumber
		\end{align}    
		where 
		\begin{align}
		\lambda_A&=\frac{1+\tilde{c}_n(\tilde{c}-\tilde{s})}{2} \quad \;
		\lambda_B=\frac{1-\tilde{c}_n(\tilde{c}+\tilde{s})}{2}, \nonumber
		\end{align}	
		and operator $M$ describes the following qubit to qutrit isometry
        \begin{align}
		\label{eq:def_gate_M}
		M :=
		\left \{
		\begin{aligned}
		&\ketbra{a_1}{+}+\ketbra{a_2}{-}
		&& \mbox{ if } \alpha \in [\chi_n , \frac{\pi}{4n}]\\
		& \ketbra{a'_1}{+}+\ketbra{a'_2}{-}
		&& \mbox{ if }
		\alpha \in (0,\chi_n)
		\end{aligned}  \right. ,
		\end{align} 
		with
		\begin{align}
		\ket{a_1}&:=\sqrt{\lambda_A}\frac{c}{c_n}\ket{0}+\sqrt{\lambda_B}\frac{s}{c_n}\ket{1}+ \sqrt{\nu_+}\ket{2} \nonumber \\
		\ket{a_2}&:=\sqrt{\lambda_A}\frac{s}{s_n}\ket{0}-\sqrt{\lambda_B}\frac{c}{s_n}\ket{1}- \sqrt{\nu_-}\ket{2} \nonumber \\
		\ket{a'_1}&:=\sqrt{a}\ket{0}+\sqrt{b}\ket{2} \quad
		\ket{a'_2}:=\sqrt{b}\ket{0}-\sqrt{a}\ket{2} \nonumber \\
		\nu_+&:=\frac{\tilde{c}_n (1-\tilde{c}^2+\tilde{s})}{1+\tilde{c}_n} \quad
		\nu_-:=\frac{\tilde{c}_n (\tilde{c}^2-1+\tilde{s})}{1-\tilde{c}_n} \nonumber\\
		a&:=\frac{(1+\tilde{c})(1-\tilde{c}_n)}{2(1-\tilde{c}\tilde{c}_n)} \quad
		b:=\frac{(1-\tilde{c})(1+\tilde{c}_n)}{2(1-\tilde{c}\tilde{c}_n)}. \nonumber
		\end{align}	
		
	\end{proposition}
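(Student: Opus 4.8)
The plan is to prove the proposition by directly verifying the two operator identities in Equation~\eqref{eq:psrworks}; once these hold, every claim follows. Indeed, the right-hand sides are proportional to $U_k$, so each of the two success branches reproduces the stored unitary \emph{exactly} on an arbitrary input (the factor $\mp i$ in the second line is an irrelevant global phase and the announced $\sigma_z$ correction is already included on the left-hand side), which is precisely the perfect-retrieval condition~\eqref{def:psrcond1}. The proportionality constants then identify the conditional success amplitudes, and summing the two branch probabilities yields $P_{succ}$, which I will match against the optimal value of Equation~\eqref{eq:fom_final} to establish optimality through Proposition~\ref{prp:optimal-approxdet_retrieval_compactform}.

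First I would rewrite the program states in the Hadamard basis, where $\ket{\psi_{n,k}} = c_n\ket{+} + (-1)^{k} i\, s_n\ket{-}$; this isolates the two vectors $\ket{+},\ket{-}$ on which $M$ is defined and reduces the $k$-dependence to a single sign. I would then evaluate the composite linear map on the left-hand side of Equation~\eqref{eq:psrworks} as a finite matrix computation: propagate $\ket{\psi_{n,k}}\otimes I_1$ through the CNOT $C_X$, apply the isometry $M$ (together with the conditional $\sigma_z$ on the output in the second line), and finally contract with ${}_A\bra{0}$ or ${}_A\bra{1}$. Expanding the resulting operator on the input wire in the Pauli basis and using the explicit coefficients of $\ket{a_1},\ket{a_2}$ (respectively $\ket{a'_1},\ket{a'_2}$) together with the trigonometric identities relating $c_n,s_n$ to $\tilde{c}_n,\tilde{s}_n$, the components outside $\spn\{I,\sigma_z\}$ cancel and the diagonal part assembles into $\sqrt{\lambda_A}\,U_k$ and $\mp i\sqrt{\lambda_B}\,U_k$ with exactly the stated $\lambda_A,\lambda_B$.

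Two auxiliary checks are needed to guarantee that the circuit is physically admissible. I must verify that $M$ is a genuine isometry, i.e. that $\ket{a_1},\ket{a_2}$ (and $\ket{a'_1},\ket{a'_2}$) are orthonormal; this reduces to the normalizations $\braket{a_1}{a_1}=\braket{a_2}{a_2}=1$ and the orthogonality $\braket{a_1}{a_2}=0$, which follow from the defining formulas for $\nu_\pm$ (respectively $a,b$) once $\lambda_A,\lambda_B$ are substituted. I must also confirm that the third qutrit component is well defined, i.e. $\nu_+,\nu_-\geq 0$ in the large-$\alpha$ regime and $a,b\geq 0$ in the small-$\alpha$ regime; the sign of $\nu_-$, which carries the factor $\tilde{c}^2-1+\tilde{s}$, is exactly what forces the change of strategy at $\alpha=\chi_n$ defined by Equation~\eqref{eq:chi_alfa}, so that in the small-$\alpha$ regime the modified isometry $M$ acquires no $\ket{1}$ component and the second success branch simply closes.

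Finally, I would certify optimality. In the large-$\alpha$ regime the total success probability is $\lambda_A+\lambda_B = 1-\tilde{c}_n\tilde{s} = 1-\cos 2n\alpha\,\sin 2\alpha$, matching the upper branch of Equation~\eqref{eq:fom_final}; in the small-$\alpha$ regime only the outcome-$0$ branch survives and its probability collapses to $\frac{1-\tilde{c}_n^2}{2(1-\tilde{c}_n\tilde{c})}$, matching the lower branch. Since these coincide with the optimum proved via Proposition~\ref{prp:optimal-approxdet_retrieval_compactform}, the circuit is optimal. I expect the main obstacle to be the bookkeeping across the two regimes, namely keeping track of which qutrit outcomes count as success and verifying the isometry and positivity conditions that make $\chi_n$ the precise crossover, rather than any single identity, each of which is an elementary trigonometric manipulation.
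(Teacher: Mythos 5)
Your proposal is correct in substance, but it inverts the paper's logic. The paper's proof (Appendix~\ref{sec:real-optim-perf}) proceeds top-down: starting from the rank-one forms $A=\lambda_A\ketbra{\phi_A}{\phi_A}$, $B=\lambda_B\ketbra{\phi_B}{\phi_B}$ forced by the perfect-retrieval conditions, it interprets the problem~\eqref{eq:optprob_perfprob_final} as unambiguous discrimination of $\ket{\widetilde{u}},\ket{\widetilde{v}}$, imports the known optimal USD solution in each of the two regimes, constructs an isometric dilation $G$ (resp.\ $H$) of the resulting optimal operation, and only then factors $G=C_F\,(M\otimes I)\,C_X$ (resp.\ $H=(M'\otimes I)\,C_X$) and trades $C_F$ for the conditional $\sigma_z$ to read off the circuit of Figure~\ref{fig:fig2}; optimality is thus inherited by construction. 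You instead verify bottom-up: check Equation~\eqref{eq:psrworks} directly, check that $M$ is an isometry, and certify optimality by matching the achieved success probabilities ($\lambda_A+\lambda_B=1-\tilde{c}_n\tilde{s}$ in the large-$\alpha$ regime, and $\frac{1-\tilde{c}_n^2}{2(1-\tilde{c}_n\tilde{c})}$ for the single surviving branch in the small-$\alpha$ regime) against the known optimum~\eqref{eq:fom_final}, which the paper establishes via Proposition~\ref{prp:optimal-approxdet_retrieval_compactform} and the known USD solution. This is logically sound---perfect retrieval plus optimal success probability is exactly the claim---and the computations you describe are the same ones the paper performs for $G$ and $H$; what you lose is the derivation explaining where $M$ comes from, what you gain is a shorter, self-contained verification.

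One concrete error in your plan: the crossover at $\chi_n$ is \emph{not} forced by the sign of $\nu_-$. Since $\tilde{c}^2-1+\tilde{s}=\tilde{s}(1-\tilde{s})$, both $\nu_+$ and $\nu_-$ are non-negative throughout the whole admissible range $0<4n\alpha<\pi$ (where $\tilde{c}_n>0$ and $0<\tilde{s}<1$). What actually fails below $\chi_n$ is the condition $\lambda_B\geq 0$: Equation~\eqref{eq:chi_alfa} is precisely the condition $\tilde{c}_n(\tilde{c}+\tilde{s})=1$, so for $\alpha<\chi_n$ one has $\lambda_B=\frac{1}{2}\bigl(1-\tilde{c}_n(\tilde{c}+\tilde{s})\bigr)<0$, the coefficients $\sqrt{\lambda_B}$ in $\ket{a_1},\ket{a_2}$ become imaginary, and the large-$\alpha$ choice of $M$ ceases to be an isometry. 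This misattribution does not break your proof, because you verify each regime with its own piecewise definition of $M$ from Equation~\eqref{eq:def_gate_M}, so the claim never enters a computation; but the remark should be corrected, and your isometry check in the small-$\alpha$ regime is exactly where the failure of the other branch would have surfaced.
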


	It is easy to verify that
	Equation~(\ref{eq:psrworks}) holds
	if we realize that 
	$U_0=c I+i s \sigma_z$, $U_1=c I-i s \sigma_z$ and write the Controlled-Not as
	$C_X=\ketbra{0}{0}\otimes I+\ketbra{1}{1}\otimes(\ketbra{+}{+}-\ketbra{-}{-})$. 
	The full proof of Proposition \ref{prop:qc_for_psr} and the derivation of the optimal circuit is placed in appendix \ref{sec:real-optim-perf}.
	
	\begin{figure}[t]
		\begin{center} \includegraphics[width=\columnwidth
			]{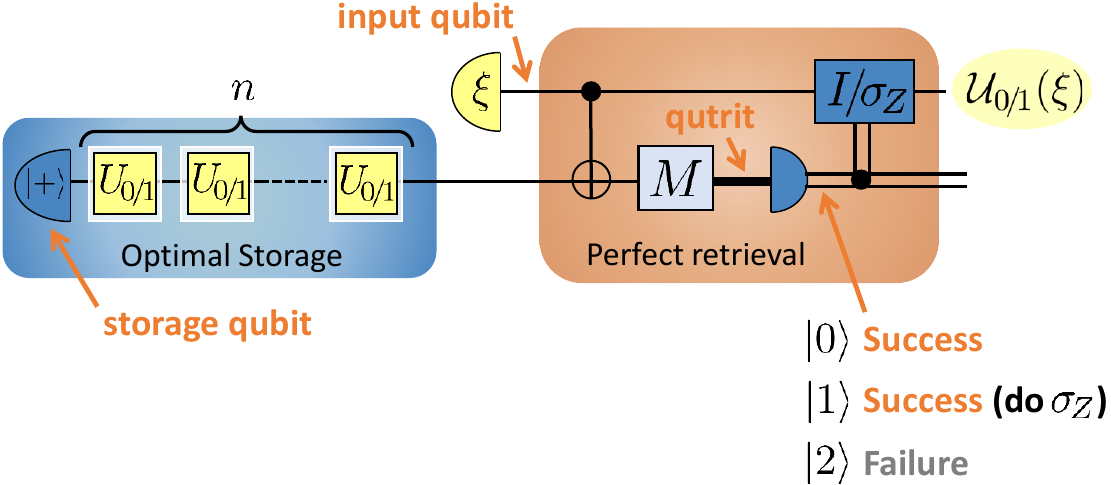} 
			\caption{Quantum circuit for optimal perfect probabilistic storage and retrieval
				of two single-qubit unitary transformations. Isometry $M$
				maps qubit to qutrit and together with the
				projective measurement of the qutrit forms a three 
				outcome rank 
				one 
				POVM.}
			\label{fig:fig2}
		\end{center}
	\end{figure}
	
	An appealing feature of the proposed circuit is its relatively small complexity - it contains only few elementary elements such as qubits, gates or their conditioning, and, most notably, only one entangling CNOT gate. One might wonder how close to ideal implementation he can get, especially from the point of view of perfectly implementing the stored unitary transformation.
	We  answer this question in case of quantum optical setups, by conducting the
	experiment reported in the following section.

	\section{Photonic demonstration}
	\label{sec:phot-demonstr}
	
	We present a proof of principle experiment that tests
	the probabilistic version of the proposed
	storage and retrieval 
	protocol described in Section \ref{sec:qcircuit} and depicted in Figure \ref{fig:fig2}. The experiment is based on a linear
	optical implementation \cite{Knill:2001aa}. Correlated photon pairs are
	generated in a CW-pumped type-II SPDC process at a
	central wavelength of 810 nm and guided via
	optical fibers to the experimental setup depicted
	in Figure~\ref{fig:fig3x}.
	
	\begin{figure}[b]
		\begin{center}
			\includegraphics[width=0.95\linewidth ]{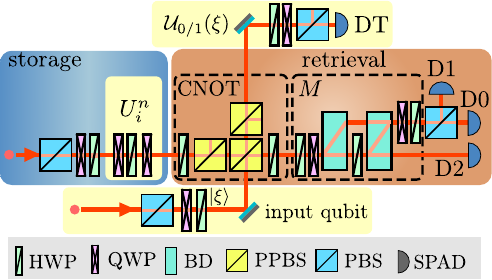}
			\caption{Experimental implementation of the probabilistic version of storage and retrieval protocol. Experimental setup consisting of half- and quarter-wave plates (HWP, QWP), (partially) polarizing beam-splitters ((P)PBS), calcite beam displacers (BD), and single-photon avalanche detectors (SPAD). The encoding of stored operation ($U_{i}^{n}$) is highlighted with a yellow box. CNOT gate and isometry $M$ are highlighted by dashed boxes.}
			\label{fig:fig3x}
		\end{center}
	\end{figure}
	
	The qubits are encoded into the polarization
	states of generated photons. One qubit serves for
	storage and is initially prepared in state
	$|+\rangle$ using wave plates. The subsequent
	stack of wave plates realizes operation
	$U_{i}^{n}$, introducing the phase shift to the
	initial storage state. Such a qubit could be, in
	principle, stored in photonic quantum memory \cite{PhysRevLett.108.190505,PhysRevLett.108.190504}.
	
	The CNOT gate and isometry $M$ followed by
	measurement in the qutrit computational basis
	implement the retrieval. To retrieve the stored
	operation and apply it onto the input qubit in a state $\ket{\xi}$,
	we first entangle both qubits using a
	probabilistic CNOT gate consisting of partially
	polarizing beam splitters (PPBS) and utilizing the
	two-photon 
	interference~\cite{PhysRevLett.95.210504,PhysRevLett.95.210505,PhysRevLett.95.210506}. The
	storage qubit enters the CNOT gate and acts as a
	target qubit. State $\ket{\xi}$ of the input qubit serves as a control and it 
	is prepared using another pair of wave plates.  The tomographic
	characterization showed CNOT gate process fidelity
	of 0.929(1) with the 1.8 kHz rate of detected
	coincidences.

	We realize the POVM measurement on the storage qubit
	using an optical setup for unambiguous state
	discrimination (USD). The optical setup for the
	USD consists of a Mach-Zehnder-type interferometer
	made with polarizing beam splitters and wave
	plates in its arms~\cite{PhysRevA.63.040305}. Such a
	device attenuates one of the polarization
	components, making the to-be-discriminated states
	orthogonal at the expense of success
	probability. We implement this device using wave
	plates and an interferometer based on calcite beam
	displacers, offering compact construction and
	passive phase stability~\cite{Starek:18}. The
	corresponding wave plates are set according to the
	choice of the states that are to be discriminated,
	which is parameterized by $|\alpha|$. The detailed
	description of this experimental block, including
	the relation between parameter $|\alpha|$ and wave
	plate angles, is provided in the 
	appendix \ref{app:isometry_par}.
	The success of the
	protocol is heralded by the detection of
	coincidence DT-D0 (outcome 0) or DT-D1 (outcome 1), while the coincidence
	DT-D2 (outcome 2) heralds the failure.

	In the case of coincidence detection at DT-D1, one
	has to apply an additional phase flip on the
	input 
	qubit. Such a feed-forward operation
	could be physically applied using a fast
	electro-optical phase modulator, fast digital
	logic, and a sufficiently long delay line for the 
	input qubit~\cite{Prevedel:2007aa,PhysRevLett.100.160502,Ma:2012aa,PhysRevLett.112.103602,PhysRevApplied.8.014016,Agresti:2020aa,LuizZanin:21,PhysRevLett.129.150501}. 
	However, in the tomographic characterization of the protocol, we can emulate the feed-forward using data post-processing~\cite{Horova:2022aa}. 
	The final state of the input qubit 
	is analyzed by means of projective measurements implemented by the remaining pair of wave plates, subsequent polarizing beam-splitter cube, and detector DT. The retrieved quantum operations are reconstructed using the maximum-likelihood method~\cite{PhysRevA.63.020101} from the collected tomographic data.

	Let us discuss in detail how we test the
	protocol. We sweep parameter $|\alpha|$ from $0$
	to $\frac{\pi}{4n}$, and for each value, we set
	the discrimination part accordingly. For
	$|\alpha| > \frac{\pi}{4n}$ the success
	probability is equal to one. We have tested both stored operations $U_{i}^n$, i.e., both positive and negative phase shift applied to the storage qubit.
	
	Having chosen 
	$U_i$ ($i=0,1$) in the storage phase we characterize
	the effective quantum channel  
	for the 
	input qubit using
	quantum process tomography. 
	We do that by sequentially preparing the input qubit in eigenstates of Pauli operators and measuring the corresponding output states in $X$, $Y$ and $Z$ basis. Here, $i$-basis measurement means the measurement of expectation value $\langle \sigma_i \rangle$. We record three
	coincidence tomograms, DT-D0 and DT-D1,
	corresponding to the protocol success, and DT-D2,
	corresponding to the failure. We flip the sign of
	$X$ and $Y$ basis readings in the DT-D1 tomogram
	to emulate the feed-forwarded conditional phase flip corresponding to the conditional $\sigma_0/\sigma_Z$ gate from Figure \ref{fig:fig2}.
	
	We can directly evaluate success probability from the tomograms as 
	\begin{align}
	\label{eq:defPtom}
	P_{succ} = \frac{1}{2}\left(\frac{S_{0+}+S_{1+}}{S_{0+}+S_{1+}+S_{2+}} + \frac{S_{0-}+S_{1-}}{S_{0-}+S_{1-}+S_{2-}}\right),
	\end{align}
	where $S_{j\pm}$ is the sum of all counts in the
	tomogram obtained from coincidence DT-D$j$ when
	the sign of the phase was +(-). This determination
	of the retrieval success probability is
	independent of the success probability of the CNOT
	gate. By summing tomograms DT-D0 and DT-D1, we
	obtain an effective tomogram, 
	which we reconstruct
	to get an estimate of the Choi-Jamiolkowski matrix $C_{exp,i}$ of the quantum channel acting on the 
	input qubit \emph{conditioned on the success of the protocol} (observation of DT-D0 or DT-D1 coincidence).
	
	We
	quantify the quality of the retrieval  with the average channel fidelity between the experimentally retrieved transformation and 
	the stored unitaries:
	\begin{align}
	\label{eq:defFidchi}
	F_{exp} = \frac{1}{8}\sum_{i=0,1}
	\Bra{U_i}C_{exp,i}\Ket{U_i}.
	\end{align}
	Thus, the probability of success and 
        channel fidelity $F_{exp}$, plotted using the blue color in Figure~\ref{fig:fig4}, are averages over the choice of $U_0$ or $U_1$ (sign of $\alpha$).
	
	\begin{figure}[t]
		\begin{center}
			\includegraphics[width=\linewidth ]{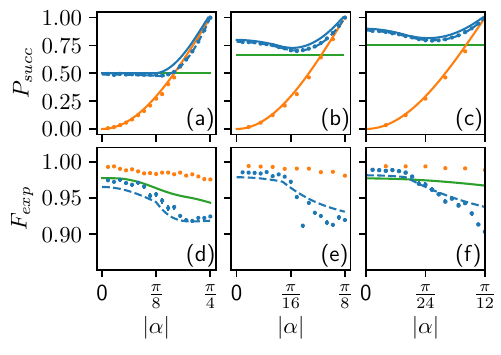}
			\caption{Success probability (a-c), and conditional process
				fidelity (d-f) for $n=1$ (a,d), $n=2$ (b,e), and
				$n=3$ (c,f) 
                uses 
                of the stored
				operation. Points show the experimental data,
				while solid lines represent the ideal theoretical
				predictions. The 1-$\sigma$ error bars are
				smaller than the point size. The blue color is
				related to the presented storage-and-retrieval
				protocol, and the orange to the
				measure-and-prepare strategy. The blue dashed
				lines represent the theoretical
				predictions based on the reconstructed process
				matrix of the CNOT gate and additional experimental imperfections. For comparison, green lines in panels (a,b,c,d,f) show the theoretical prediction for the storage and retrieval protocol from  \cite{PhysRevA.102.032618} that works for an unknown arbitrary phase $\alpha$ 
                and we assumed the same quality of CNOT gate. 
                The same green line for conditional process fidelity for $n=2$ is below the range reported here, since the 
                best corresponding
                quantum circuit 
                requires $8$ CNOT gates,
                in contrast to one or two CNOT gates needed in the other cases.
                }
			\label{fig:fig4}
		\end{center}
	\end{figure}
	
	For comparison, we also consider the
	perfect probabilistic measure-and-prepare strategy in which the storage
	qubit undergoes USD, and the result is stored in a
	classical bit if the USD succeeds. Upon retrieval,
	we simply apply $U_0$ or $U_1$ to the input qubit in state $\ket{\xi}$
	based on the value of the saved bit.  To test this
	in the experiment, we first prepared
	$U_0^n |+\rangle$ at the USD input, noting the
	relative frequencies of each outcome od the USD, which we denote as
	$f^{(0)}_{0}, f^{(0)}_{ 1}$, and $f^{(0)}_{ 2}$ (the outcome $2$ is an inconclusive outcome). 
	Theoretically, 
	$f^{(0)}_{1} = 0$, 
	when we store operation $U_{0}$, but
	due to experimental imperfections, there is a
	small fraction of wrong discrimination
	results. The success probability for stored $U_0$  is
	denoted as $P_{succ}^{(m\&p)}(U_0)$ and is estimated as follows
	\begin{align}
	\label{eq:defPtommp}
	P_{succ}^{(m\&p)}(U_0) = \frac{f^{(0)}_{0}+f^{(0)}_{1}}{f^{(0)}_{0}+f^{(0)}_{1}+f^{(0)}_{2}}.
	\end{align}
	We aim at evaluating the performance of the whole measure$\&$prepare strategy, thus we 
	also performed process tomography of the direct implementation of channels $U_{0}$
	and $U_{1}$, 
        which should be applied based on the USD outcome.
	Obtained Choi-Jamiolkowski matrices are denoted as
	$C_{dir,0}$ and $C_{dir,1}$.
	
	Then, the Choi operator of the retrieved operation when the inserted unitary is $U_0$ reads as follows:
	\begin{align}	
	C^{(m\&p)}_{exp,0} = \frac{f^
		{(0)}_{ 0}C_{dir,0} + f^
		{(0)}_{1}C_{dir,1}}
	{f^
		{(0)}_{0} + f^
		{(0)}_{1}}.     	
	\end{align} 
	This analysis is also done for storage state $U_1^n |+\rangle$, and finally the obtained fidelities of reconstructed Choi matrices are averaged to obtain $F^{(m\&p)}_{exp}$ as in Equation~(\ref{eq:defFidchi}).
	We repeat this procedure for each tested $\alpha$.

	The data in panels (a-c) of Figure~\ref{fig:fig4}
	show a good agreement between the observed success
	probability and its theoretical prediction. 
	Error bars in panels (a-c) are obtained
	directly by error propagation assuming Poissonian
	statistics of the detected counts.
	
	As demonstrated, the presented protocol surpasses the
	measure-and-prepare strategy in terms of success
	probability. Especially for low $\alpha$, the
	success probability 
	given by Equation~(\ref{eq:16a}) 
	is much higher than the
	one achievable by the measure-and-prepare
	strategy. However, the fidelity achieved by our
	protocol in the experiment is lower than the one
	obtained in the case of the measure-and-prepare
	strategy, as indicated in panels (d-f). To determine the error bars in panels (d-f) we resort to
	bootstrapping, i.e. we assume the Poissonian
	statistics of the measured counts and use our
	knowledge of measurement operators and the
	reconstructed Choi matrix to generate synthetic
	data~\cite{Micuda2017}. These data correspond to the repeated
	execution of the experiment, which we again
	reconstruct to obtain the fidelity. Because the
	fidelity distribution is asymmetric close to the
	boundary value of its definition interval, we use
	the 0.159 and 0.841 quantiles as lower and upper
	error bars. This choice is equivalent to one
	standard deviation in the case of symmetric
	distributions.
	
	We use the reconstructed Choi matrix of the CNOT
	gate to predict the expected fidelity for $n =
	1$. Such a model explains the fidelity drop by 3$\%$ in the
	small $\alpha$ 
	regime. Therefore, the fidelity is reduced mainly
	due to the CNOT gate, which is very sensitive to
	the partial distinguishability of single
	photons. The Hong-Ou-Mandel dip visibility of the
	photons produced by the SPDC source was 95$\%$,
	while the state-of-the-art single-photon sources
	can achieve 99$\%$~\cite{Somaschi:2016aa},
	substantially increasing the achievable fidelity.

	However, in the large $\alpha$ regime when 
	a $3$ outcome POVM needs to be realized instead of a projective measurement
	on the storage qubit, the observed fidelity drops
	even lower than is predicted by the 
	reconstructed Choi matrix of the CNOT gate alone. We suspect that the additional loss
	of fidelity is caused partly by a misalignment of
	the phase in the Mach-Zehnder interferometer and
	partly by disagreement between actual and expected
	introduced phase shift $\alpha$. We added these
	factors into the mathematical model and observed
	that such a prediction better describes the
	observed trend, as the dashed line
	illustrates in Figure~\ref{fig:fig4}. These errors cause the collective
	state of two qubits to collapse to a slightly
	different state upon measurement of the
	qutrit. This type of error is negligible in
	integrated photonic chips, which provide
	interferometric-based single-qubit gates with
	fidelities
	reaching~99.9$\%$~\cite{doi:10.1126/science.aab3642}. The
	remaining discrepancy between the model and the
	data is caused by neglecting other imperfections,
	such as deviations of wave plate retardances from their nominal
	values, path-dependent losses, multi-photon contributions in our photon pair source and various instabilities in the setup.
	
	On the other hand, the measure-and-prepare
	strategy requires only single-qubit operations,
	which are very robust for polarization
	qubits.
	
	We also compare the realized protocol to the protocol for storing arbitrary phase gates~\cite{PhysRevA.102.032618} in the case of $n=1$ and $n=3$. In the compared protocol, one first 
    applies the stored phase shift once to the \emph{storage qubit 1}. If 
    we are allowed to use $U_i$ three times, we also apply $U_i$ twice to the \emph{storage qubit 2}, 
    which ends the storage phase.    
    Then in the retrieval phase, we apply the CNOT gate controlled by \emph{input qubit} to the \emph{storage qubit 1}. If the subsequent measurement of \emph{storage qubit 1} in the Z basis yields 0, then the protocol succeeded. If it yields 1 
    it implies failure for $n=1$, but for $n=3$ 
    we apply another CNOT gate controlled by the \emph{input qubit} to the \emph{storage qubit 2} and measure it in the Z basis. 
    Such a protocol would succeed independently on $|\alpha|$ with probability $P_{succ} = \frac{1}{2}$ for $n = 1$ and $P_{succ} = \frac{3}{4}$ for $n = 3$, respectively. In Figure~\ref{fig:fig4}, we see that the presented protocol described by Proposition \ref{prop:qc_for_psr} (blue data points) already outperforms the protocol~\cite{PhysRevA.102.032618} (green lines) in terms of success probability.
	
	We utilized the reconstructed Choi matrix of the CNOT gate to theoretically compare both protocols in terms of their robustness to realistic experimental imperfections. We input the imperfect experimental Choi matrix into numerical simulations of the protocol \cite{PhysRevA.102.032618} and plotted the achieved fidelity as green solid lines in panels (d) and (f) of Figure~\ref{fig:fig4}. For $n=1$ the ideal versions of the compared protocols coincide, i.e. demand the same quantum circuit to be performed, in the small $\alpha$ regime. From that part of Figure~\ref{fig:fig4} we see that, the experimentally realized protocol 
    has slightly smaller process fidelity due to imperfections in the performed projective measurement, which are not accounted for in the predictions illustrated by the green line.  
    In the large $\alpha$ regime presented protocol has higher success probability, but 
    due to additional disturbance caused by the imperfect POVM realization lower process fidelity. However, for $n=3$, the protocol~\cite{PhysRevA.102.032618} can use up to two applications of the CNOT gates, which 
    on average results in lower fidelity, as indicated in panel (f). Therefore, the presented protocol is more robust to experimental imperfections and at the same time provides higher success probability in this case.

	\section{Discussion}
	\label{sec:discussion}

	We studied storage
	and retrieval of unknown unitary transformation,
	which is chosen from set of two known
	possibilities with equal prior probabilities and
	the retrieved transformation is expected to mimic
	well the action of the stored unitary on any input
	state. 
	We proved that the optimal state for storage is the one which is also optimal for the discrimination of the two unitaries and we optimized the retrieval in two scenarios. 
	If the goal is to maximize the average process fidelity for a deterministic strategy,  the  solution (for qubit unitaries) is of the kind "measure-and-prepare" (see Figure~\ref{fig:meas_prep}) and the optimal value of the average fidelity reads as in Equation~\eqref{eq:det_fom_proc}.
	It is easy to observe that this result is also optimal if we want to retrieve $m \geq 1$ copies of the unknown unitary with the maximal single-copy fidelity. 
	
	We also found the optimal probability of success in the case of a perfect probabilistic retrieval in arbitrary dimension.
	The result is given by Equation~\eqref{eq:fom_final}  and we observe that
	for
	small values of
	$\alpha$ the success probability  approaches unity 
	as  $1/n^2$ with a  quadratic improvement over
	known storage and retrieval strategies, which work
	for continuous subsets of transformations.

	The retrieval  phase can be also viewed as a
	programmable quantum processor, since the state
	resulting from the storage phase programs
	(encodes) the transformation to be performed on
	another system. In a suitable basis, the program
	state encodes a phase shift
	of $2n\alpha$ and the processor should induce a
	unitary transformation
	(or its inverse) with relative phase shifts of at
	most $2\alpha$. Since in our optimization of
	retrieval operation parameter $n$ can be treated
	as continuous we can think of this processor as
	optimally transforming program states with
	relative phase shift $\pm \beta$
	(i.e. states with overlap $\cos{\beta}$) into two
	unitary transformations $U_0,U_1$, whose relative
	unitary $U_1^\dag U_0$ has angular spread of eigenvalues at
	most $4\alpha$. Thus, by performing the
	optimization of the retrieval we found two classes
	of optimal processors, which for any given overlap
	of input states and any choice of two unitary
	transformations achieve the best possible average
	process fidelity or best average probability of
	success, respectively. 
	Since the angular spread is the only relevant parameter, our results easily generalize to the case in which, instead of retrieving $U_0$ or $U_1$, the goal of the protocol is to invert (i.e. retrieving $U_i^\dag$), conjugate (i.e. retrieving $U_i^*$ ) or transpose (i.e. retrieving $U_i^T$) the unknown unitary.

	For qubit unitary transformations we
	found a  quantum circuit performing the optimal
	perfect probabilistic storage and retrieval using just a
	single CNOT gate and a suitable unambiguous
	measurement on its target qubit (see Figure
	\ref{fig:fig2}). For quite distinct unitary
	transformations (large $\alpha$ regime) two of the
	measurement outcomes correspond to success, but
	for one of them unitary ($\sigma_Z$ gate)
	correction must be performed on the unmeasured
	qubit. For almost identical unitary
	transformations that should be stored (small
	$\alpha$ regime) the measurement is projective
	with just success (no correction) and failure
	results.  
	
	We built a linear optical 
	experiment implementing  the presented optimal
	circuit. The setup is based on the
	CNOT gate consisting of partially polarizing beam
	splitters (PPBS) and utilizing the two-photon
	interference~\cite{PhysRevLett.95.210504,PhysRevLett.95.210505,PhysRevLett.95.210506, PhysRevA.89.042304}.  The unambiguous measurement part 
	consists of a Mach-Zehnder-type interferometer
	made with polarizing beam splitters and wave
	plates in its arms. Tomographically complete data
	were recorded and reconstructed using the
	maximum-likelihood method. The setup was also used
	to test non-optimal measure-and-prepare strategy
	for the perfect retrieval. The success probability
	of both strategies quite closely follows the
	theoretically predicted curves. Full tomographic
	data allowed us also to calculate process fidelity
	of the recreated transformations. Naturally,
	measure-and-prepare strategy achieves better
	fidelities as it uses less multipartite
	operations, actually only those needed to perform
	the unambiguous discrimination
	measurement. Optimal perfect retrieval uses also
	the CNOT gate, which due to
	its 
	$93\%$ fidelity of realization is the main cause
	of the observed infidelities (especially in the
	small $\alpha$ regime). Perfect retrieval
	outperforms the measure-and-prepare strategy
	especially from the success probability point of
	view. For $n>1$ this holds for almost any pair of
	unitaries (i.e. the whole interval of $\alpha$
	parameter). For $n=1$ and small $\alpha$ regime
	the perfect retrieval coincides with probabilistic
	programmable implementation of phase gates
	proposed in Ref. \cite{PhysRevLett.88.047905}. On the other hand, 
	Ref. \cite{PhysRevA.102.032618}
	shows that two applications of
	this gate implement $n=3$ storage and
	retrieval of arbitrary qubit phase gate with
	probability $3/4$. 
	The	reported experimental setup outperforms such
	strategy both in success probability and in
	achieved process fidelity if we assume the same
	quality of CNOT gate as in our experiment.
	
	In the same way as understanding
	the tasks related to non-orthogonality of states  helps to understand several
	quantum mechanical phenomena \cite{Fuchs2002}, 
	the presented
	results 
	shed light  to the ultimate limits which quantum mechanics impose to the manipulation of transformations.  
	Future
	research could cover cloning of two unitary
	transformations and other tasks following the
	spirit of the above ideas.  Some technical questions are still open, such as
	optimization of approximate storage and retrieval
	for qudits or search for an efficient quantum circuit
	for storage and retrieval of qudit unitary
	transformations in the perfect probabilistic case.

	\acknowledgments 
	A.B. acknowledges financial support from the European Union - Next Generation E.U. through the PNRR MUR
Project PE0000023-NQSTI . R.S. and M.M. acknowledge the support from the Ministry of the Interior of the Czech Republic, project NU-CRYPT (VK01030193), and from Ministry of Education, Youth and Sports of the Czech Republic, grant no. 8C22003 (QD-E-QKD) of the QuantERA II Programme that has received funding from the European Union’s Horizon 2020 research and innovation programme under Grant Agreement no. 101017733. J.F. acknowledges the support from Palacky University, IGA-PrF-2024-008. M.S. was supported by projects APVV-22-0570 (DeQHOST), VEGA 2/0183/21 (DESCOM). M.S. was further supported by funding from QuantERA, an ERA-Net cofund in Quantum Technologies, under the project eDICT.

	\section*{Author contributions}
	A.B. and M.S. conceptualized the problem and performed the theoretical calculations.
	J.F. contributed to the theoretical derivation and supervised the experimental work. 
	R.S., N.H., and M.M. constructed the experimental setup, conducted the measurements, and analyzed the acquired data. All authors contributed to the manuscript's writing.

	\appendix
	
	\section{Proof of Proposition \ref{prp:optimal-approxdet_retrieval_compactform}}
	\label{sec:appqrsq}
	We will split the proof of Proposition
	\ref{prp:optimal-approxdet_retrieval_compactform}
	into several steps.
	
	Reminding the decomposition $R = PRP + P'RP'$ of
	Lemma \ref{lmm:blockdiagretriev}, we can write:
	\begin{align}
	\label{eq:blockdiagR}
	\begin{aligned}
	PRP &= \sum_{i,j=1}^2A_{i,j} \KetBra{e_i}{e_j} \\
	P'RP' &=
	\sum_{i,j=1}^2B_{i,j}
	\KetBra{e'_i}{e'_j}
	\end{aligned}
	\end{align}

	\begin{lemma}
		\label{lmm:PT}
		$\Tr_2[R] \leq I$ if and only if
		$A + B \leq I $, where $A$ and $B$ are the
		$2\times 2$ matrices with coefficient $A_{i,j}$
		and $B_{i,j}$ introduced in Equation~\eqref{eq:blockdiagR}.
	\end{lemma}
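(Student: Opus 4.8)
The plan is to compute $\Tr_2[R]$ explicitly from the block decomposition of Lemma~\ref{lmm:blockdiagretriev} and to recognize the outcome as a block-diagonal operator whose blocks are controlled by $A+B$. First I record that the four vectors $\Ket{e_1},\Ket{e_2},\Ket{e'_1},\Ket{e'_2}$ are mutually orthogonal with $\BraKet{e_i}{e_j}=2\delta_{ij}$ (and the same for the primed set, with all mixed overlaps vanishing). This gives $A_{ij}=\tfrac14\Bra{e_i}R\Ket{e_j}$ and $B_{ij}=\tfrac14\Bra{e'_i}R\Ket{e'_j}$, so that $R\ge 0$ immediately forces $A$ and $B$ to be Hermitian; hence $A+B$ is a Hermitian $2\times 2$ matrix and the inequality $A+B\le I$ is meaningful.

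The core of the argument is the partial-trace identity $\Tr_2[\Ket{X}\Bra{Y}]=X^{T}Y^{*}$, which I apply with $X,Y\in\{I,\sigma_z\}$ while leaving the $\mathcal{H}_0$ factor of each $\Ket{e_i}$ and $\Ket{e'_i}$ untouched. Since $\Tr_2[\Ket{I}\Bra{I}]=\Tr_2[\Ket{\sigma_z}\Bra{\sigma_z}]=I$, whereas the cross terms give $\Tr_2[\Ket{I}\Bra{\sigma_z}]=\Tr_2[\Ket{\sigma_z}\Bra{I}]=\sigma_z$, collecting the contributions of $PRP$ and $P'RP'$ yields
\begin{align}
\label{eq:pt_plan}
\Tr_2[R]=C_{11}\,\ketbra{+}{+}\otimes I+C_{22}\,\ketbra{-}{-}\otimes I+C_{12}\,\ket{+}\bra{-}\otimes\sigma_z+C_{21}\,\ket{-}\bra{+}\otimes\sigma_z,
\end{align}
where $C:=A+B$ and the first (second) tensor factor is $\mathcal{H}_0$ ($\mathcal{H}_1$). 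The essential observation is that the diagonal entries of $A$ and $B$ contribute the identity on $\mathcal{H}_1$ while the off-diagonal entries contribute $\sigma_z$; this is exactly what distinguishes the two $\mathcal{H}_1$-blocks.

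Reading \eqref{eq:pt_plan} in the eigenbasis $\{\ket{0},\ket{1}\}$ of $\sigma_z$ on $\mathcal{H}_1$, the operator $\Tr_2[R]$ is block diagonal: on the $\ket{0}$ sector it acts on $\mathcal{H}_0$ (in the $\{\ket{+},\ket{-}\}$ basis) as the matrix $C$, and on the $\ket{1}$ sector as the matrix obtained from $C$ by flipping the sign of its off-diagonal entries, i.e.\ $C$ conjugated by the diagonal unitary $\mathrm{diag}(1,-1)$. Hence $\Tr_2[R]\le I$ holds if and only if both blocks are $\le I$, and because the second block is unitarily equivalent to $C$ the two conditions coincide, giving $\Tr_2[R]\le I\iff C=A+B\le I$. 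I expect the only delicate step to be the bookkeeping of the cross term $\Tr_2[\Ket{e_1}\Bra{e_2}]$ (together with the $\BraKet{e_i}{e_j}=2$ normalization), since it is precisely the resulting $\sigma_z$ factor—rather than an identity—that produces the two unitarily-equivalent blocks and thereby reduces the $4\times 4$ positivity constraint to the single $2\times 2$ condition on $A+B$.
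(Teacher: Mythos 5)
Your proof is correct and follows essentially the same route as the paper: you compute $\Tr_2$ of the rank-one operators $\KetBra{e_i}{e_j}$, $\KetBra{e'_i}{e'_j}$ (the paper gets $\ketbra{+}{+}\otimes I$, $\ketbra{+}{-}\otimes\sigma_z$, etc., exactly matching your $X^{T}Y^{*}$ identity), assemble $\Tr_2[R]$ into two blocks over the $\sigma_z$ eigenbasis of $\mathcal{H}_1$, and note that the two blocks—$C$ and its conjugation by $\mathrm{diag}(1,-1)$ in the $\{\ket{+},\ket{-}\}$ basis, which the paper writes as $\sigma_x(\widetilde A+\widetilde B)\sigma_x$—are unitarily equivalent, reducing the condition to $A+B\leq I$. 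The only cosmetic difference is that the paper groups the computation as $\Tr_2[PRP]$ and $\Tr_2[P'RP']$ separately before summing, while you sum first into $C=A+B$; the content is identical.
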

	\begin{proof}
		A straightforward computation gives
		\begin{align*}
		&  \Tr_2[\KetBra{e_1}{e_1}] =
		\ketbra{+}{+}  \otimes I \quad
		\Tr_2[\KetBra{e_1}{e_2}] =
		\ketbra{+}{-} \otimes \sigma_z \\
		& \Tr_2[\KetBra{e_2}{e_1}] =
		\ketbra{-}{+}  \otimes \sigma_z \quad
		\Tr_2[\KetBra{e_2}{e_2}] =
		\ketbra{-}{-} \otimes I  \\
		&  \Tr_2[PRP] = \widetilde{A} \otimes \ketbra{0}{0} +
		\sigma_x \widetilde{A}\sigma_x\otimes \ketbra{1}{1} \\
		& \Tr_2[P'RP'] = \widetilde{B} \otimes \ketbra{0}{0} +
		\sigma_x \widetilde{B} \sigma_x\otimes \ketbra{1}{1}
		\end{align*}
		where $\widetilde{X}= X_{1,1} \ketbra{+}{+}+X_{1,2} \ketbra{+}{-}+ X_{2,1} \ketbra{-}{+}+X_{2,2}\ketbra{-}{-}$.
		Then we have
		\begin{align*}
		\Tr_2[R] = 
		(\widetilde{A}+\widetilde{B}) \otimes \ketbra{0}{0} +
		\sigma_x(\widetilde{A}+\widetilde{B})\sigma_x \otimes\ketbra{1}{1}.
		\end{align*}
		and we have that $\Tr_2[R] \leq I_{01}$ if and only if
		$\widetilde{A}+\widetilde{B} \leq I $, which is equivalent with $A+B \leq I$.
	\end{proof}
	
	\begin{lemma}
		\label{lmm:figmeritwithAandB}
		Figure of merit from Equation~(\ref{eq:optprob1}) can be expressed as
		$  \Tr[RD] = \bra{u}A\ket{u} + \bra{v}B\ket{v}$.
	\end{lemma}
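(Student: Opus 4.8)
The plan is to exploit the block-diagonal structure established in Lemma~\ref{lmm:blockdiagretriev}, namely $R = PRP + P'RP'$, together with the explicit parametrizations in Equations~\eqref{eq:blockdiagR} and~\eqref{eq:defee}. Since $P$ and $P'$ are orthogonal projectors ($PP'=0$), the trace splits cleanly with no surviving cross terms,
\begin{align*}
\Tr[RD] = \Tr[PRP\, D] + \Tr[P'RP'\, D].
\end{align*}
Inserting $PRP = \sum_{i,j}A_{i,j}\KetBra{e_i}{e_j}$ and $P'RP' = \sum_{i,j}B_{i,j}\KetBra{e'_i}{e'_j}$ and using cyclicity of the trace, each contribution becomes a weighted sum of matrix elements of $D$:
\begin{align*}
\Tr[PRP\, D] = \sum_{i,j}A_{i,j}\Bra{e_j}D\Ket{e_i}, \qquad
\Tr[P'RP'\, D] = \sum_{i,j}B_{i,j}\Bra{e'_j}D\Ket{e'_i}.
\end{align*}
The whole lemma thus reduces to evaluating the two $2\times 2$ matrices $\big(\Bra{e_j}D\Ket{e_i}\big)_{j,i}$ and $\big(\Bra{e'_j}D\Ket{e'_i}\big)_{j,i}$.

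To compute these I would factor $D$ (Equation~\eqref{eq:5}) across $\hilb{H}_0$ and $\hilb{H}_{12}$ and use that each $\Ket{e_i}$, $\Ket{e'_i}$ is itself a product of an $\hilb{H}_0$ vector $\ket{\pm}$ with an $\hilb{H}_{12}$ vector $\Ket{I}$ or $\Ket{\sigma_z}$. The needed ingredients are the four overlaps $\langle\pm|\psi^*_{n,k}\rangle$ and the four numbers $\BraKet{I}{U_k}=\Tr[U_k]$, $\BraKet{\sigma_z}{U_k}=\Tr[\sigma_z U_k]$ for $k=0,1$. A direct calculation gives $\langle+|\psi^*_{n,k}\rangle = c_n$, $\langle-|\psi^*_{n,0}\rangle = -is_n$, $\langle-|\psi^*_{n,1}\rangle = is_n$, together with $\Tr[U_k]=2c$ and $\Tr[\sigma_z U_k]=\pm 2is$. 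Substituting these and summing over $k$, the phase factors combine so that the $k=0$ and $k=1$ contributions coincide and reinforce, yielding
\begin{align*}
\Bra{e_j}D\Ket{e_i} = u_j\,u_i, \qquad \Bra{e'_j}D\Ket{e'_i} = v_j\,v_i,
\end{align*}
with $\ket{u},\ket{v}$ as in Equation~\eqref{eq:2}; in other words, the relevant blocks of $D$ are exactly the rank-one operators $\ketbra{u}{u}$ and $\ketbra{v}{v}$.

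Since $\ket{u}$ and $\ket{v}$ are real, these identities immediately give $\Tr[PRP\, D] = \sum_{i,j}A_{i,j}u_i u_j = \bra{u}A\ket{u}$ and likewise $\Tr[P'RP'\, D] = \bra{v}B\ket{v}$, and adding the two establishes the claim. The only delicate point is the matrix-element computation of the previous paragraph: one must track carefully the complex conjugations hidden in the $\psi^*_{n,k}$ states, in the double-ket convention $\Ket{A}=(I\otimes A)\Ket{I}$, and in the Hermitian inner product $\BraKet{A}{B}=\Tr[A^\dagger B]$, in order to see that the two values of $k$ recombine into precisely $\ketbra{u}{u}$ and $\ketbra{v}{v}$. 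This coherent recombination is exactly what recasts the retrieval optimization as the state-discrimination problem of Proposition~\ref{prp:optimal-approxdet_retrieval_compactform}.
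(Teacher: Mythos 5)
Your proof is correct and follows essentially the same route as the paper's: both reduce the claim to showing that the $P$ and $P'$ blocks of $D$ are the rank-one operators $\ketbra{u}{u}$ and $\ketbra{v}{v}$, using the decompositions $\ket{\psi^*_{n,i}} = c_n\ket{+} \mp i s_n\ket{-}$ and $\Ket{U_i} = c\Ket{I} \pm i s \Ket{\sigma_z}$. The only cosmetic difference is that the paper expands the vectors $\ket{\psi^*_{n,i}}\Ket{U_i}$ in the $\{\Ket{e_i},\Ket{e'_i}\}$ basis and substitutes into the definition of $D$, whereas you compute the matrix elements $\Bra{e_j}D\Ket{e_i}$ directly by tensor factorization --- the same calculation organized entrywise.
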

	\begin{proof}
		It is easy to verify that
		\begin{align}
		\label{eq:14}
		&  \begin{aligned}
		&
		\ket{\psi^*_{n,0}} =  c_n\ket{+} -i s_n \ket{-},
		& \Ket{U_0} &= c\Ket{I} + i s\Ket{\sigma_z}
		,\\
		&
		\ket{\psi^*_{n,1}} = \sigma_x   \ket{\psi^*_{n,0}},
		& \Ket{U_1} &= \sigma_x\otimes\sigma_x \Ket{U_0}  ,
		\end{aligned}
		\end{align}
		which implies that
		\begin{align}
		\label{eq:19}
		\begin{aligned}
		& \begin{aligned}
		\ket{\psi^*_{n,0}} \Ket{U_0}
		=&
		\Big( c_nc \Ket{e_1} + s_ns
		\Ket{e_2} \Big)+\\
		&+ i \Big( c_ns \Ket{e'_1} - s_nc \Ket{e'_2} \Big )
		\end{aligned} \\
		&\ket{\psi^*_{n,1}} \Ket{U_1} = \sigma_x^{\otimes 3}\ket{\psi^*_{n,0}} \Ket{U_0}
		\end{aligned}
		\end{align}
		By substituting Equation~\eqref{eq:19}
		into Equation~\eqref{eq:5} and using Equation~\eqref{eq:blockdiagR} we obtain the thesis.
	\end{proof}
	
	\begin{lemma}
		\label{lmm:perfretrievconditionwithAandB}
		The perfect retrieval condition of Equation~\eqref{eq:1} is equivalent to
		$\bra{v}A \ket{v} = \bra{u}B \ket{u}=0$.
	\end{lemma}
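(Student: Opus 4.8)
The plan is to show that each operator identity in \eqref{eq:1} is really a rank-one condition on a two-dimensional space, and that the single scalar constraint enforcing it is exactly $\bra{v}A\ket{v}=0$ (for the $P$ block) or $\bra{u}B\ket{u}=0$ (for the $P'$ block). Throughout I work with the unnormalized Choi vector $\Ket{U_0}=c\Ket{I}+is\Ket{\sigma_z}$ of \eqref{eq:14} and its orthogonal complement $\Ket{U_0^\perp}:=is\Ket{I}+c\Ket{\sigma_z}$ inside $\spn\{\Ket{I},\Ket{\sigma_z}\}$ (one checks directly that $\BraKet{U_0}{U_0^\perp}=0$).

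First I would compute the two block ``sandwiches''. Writing $PR_sP=\sum_{i,j}A_{i,j}\KetBra{e_i}{e_j}$ as in \eqref{eq:blockdiagR} and using $\ket{\psi^*_{n,0}}=c_n\ket{+}-is_n\ket{-}$ together with $\Ket{e_1}=\ket{+}\Ket{I}$, $\Ket{e_2}=\ket{-}\Ket{\sigma_z}$ from \eqref{eq:defee}, I obtain $\bra{\psi^*_{n,0}}\Ket{e_1}=c_n\Ket{I}$ and $\bra{\psi^*_{n,0}}\Ket{e_2}=is_n\Ket{\sigma_z}$, so that $M_A:=\bra{\psi^*_{n,0}}PR_sP\ket{\psi^*_{n,0}}$ is a positive semidefinite operator on $\mathcal{H}_1\otimes\mathcal{H}_2$ that is \emph{supported on} $\spn\{\Ket{I},\Ket{\sigma_z}\}$; likewise $M_B:=\bra{\psi^*_{n,0}}P'R_sP'\ket{\psi^*_{n,0}}$. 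Expanding $\ket{\psi^*_{n,0}}\Ket{U_0^\perp}$ in the basis $\{\Ket{e_1},\Ket{e_2},\Ket{e'_1},\Ket{e'_2}\}$, exactly as \eqref{eq:19} does for $\ket{\psi^*_{n,0}}\Ket{U_0}$, shows that its $P$-block coordinates are $i\ket{v}$ and its $P'$-block coordinates are $\ket{u}$ (in the notation of \eqref{eq:2}), mirroring the fact that the $P$- and $P'$-block coordinates of $\ket{\psi^*_{n,0}}\Ket{U_0}$ are $\ket{u}$ and $i\ket{v}$. Substituting these into $PR_sP$ and $P'R_sP'$ then gives, up to a fixed positive constant, $\Bra{U_0^\perp}M_A\Ket{U_0^\perp}=\bra{v}A\ket{v}$ and $\Bra{U_0^\perp}M_B\Ket{U_0^\perp}=\bra{u}B\ket{u}$.

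The equivalence now follows from one elementary fact: a positive semidefinite operator $M$ supported on the two-dimensional space $\spn\{\Ket{I},\Ket{\sigma_z}\}$ is proportional to $\KetBra{U_0}{U_0}$ if and only if $\Bra{U_0^\perp}M\Ket{U_0^\perp}=0$. The forward implication is immediate, since $\BraKet{U_0^\perp}{U_0}=0$ forces $\Bra{U_0^\perp}(\lambda_A\KetBra{U_0}{U_0})\Ket{U_0^\perp}=0$. For the converse, $\Bra{U_0^\perp}M_A\Ket{U_0^\perp}=0$ together with $M_A\geq0$ yields $M_A\Ket{U_0^\perp}=0$, and since $M_A$ lives on the two-dimensional space $\spn\{\Ket{I},\Ket{\sigma_z}\}$ this forces $M_A=\lambda_A\KetBra{U_0}{U_0}$ for some $\lambda_A\geq0$; the same argument with $M_B$ and $\bra{u}B\ket{u}=0$ produces the second line of \eqref{eq:1}.

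Finally I would dispatch the index $i=1$ by symmetry rather than by recomputation. Because $W(0,0,1)=\sigma_x\otimes\sigma_x\otimes\sigma_x$ commutes with $R_s$ (Lemma~\ref{lmm:decomposition-r}) and with $P,P'$, while $\ket{\psi^*_{n,1}}=\sigma_x\ket{\psi^*_{n,0}}$ and $\Ket{U_1}=\sigma_x\otimes\sigma_x\Ket{U_0}$, one gets $\bra{\psi^*_{n,1}}PR_sP\ket{\psi^*_{n,1}}=(\sigma_x\otimes\sigma_x)M_A(\sigma_x\otimes\sigma_x)$, and likewise for $P'$; hence the $i=1$ identities in \eqref{eq:1} hold automatically with the \emph{same} constants $\lambda_A,\lambda_B$ as soon as the $i=0$ identities do, so no new constraint is generated. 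I expect the only real obstacle to be careful bookkeeping of which tensor factor ($\mathcal{H}_0$ versus $\mathcal{H}_1\otimes\mathcal{H}_2$) each bra, ket and $\sigma_x$ acts on, and of the normalization factors of $\Ket{e_i}$, $\Ket{I}$, $\Ket{\sigma_z}$ in the coordinate identities; together with the correct use of positivity to pass from a vanishing diagonal matrix element to $M_A\Ket{U_0^\perp}=0$, everything else reduces to the short expansion of $\ket{\psi^*_{n,0}}\Ket{U_0^\perp}$.
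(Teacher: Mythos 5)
Your proof is correct, and it takes a route that differs from the paper's in a meaningful way, although both ultimately rest on the same elementary fact: a positive semidefinite operator living on a two-dimensional space is proportional to a given rank-one projector if and only if its expectation value along the orthogonal direction vanishes. The paper applies this fact in the $2\times 2$ coefficient space: by direct computation it rewrites \eqref{eq:1} as the explicit rank-one forms $A=\lambda_A\ketbra{\phi_A}{\phi_A}$, $B=\lambda_B\ketbra{\phi_B}{\phi_B}$ (its Equation~\eqref{eq:fformAB}) and then observes $\braket{v}{\phi_A}=\braket{u}{\phi_B}=0$. You instead apply it upstairs, in $\mathcal{L}(\mathcal{H}_1\otimes\mathcal{H}_2)$: you never solve for $A$ and $B$, but contract the sandwiches $M_A$, $M_B$ against the orthogonal Choi vector $\Ket{U_0^\perp}$, identify $\Bra{U_0^\perp}M_{A}\Ket{U_0^\perp}\propto\bra{v}A\ket{v}$ and $\Bra{U_0^\perp}M_{B}\Ket{U_0^\perp}\propto\bra{u}B\ket{u}$ (the fixed positive constant being $4$ in the paper's normalization $\BraKet{e_i}{e_j}=2\delta_{ij}$), and conclude via positivity of $M_A,M_B$ and their support inside $\spn\{\Ket{I},\Ket{\sigma_z}\}$. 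Your version buys two things: it sidesteps the explicit solution of the operator equation, and it disposes of the $i=1$ conditions in \eqref{eq:1} through the $W(0,0,1)=\sigma_x^{\otimes 3}$ symmetry (legitimate here, since \eqref{eq:1} is derived under the standing assumption $[R_s,W(g)]=0$ of Lemma~\ref{lmm:decomposition-r}), making explicit that those conditions are redundant and that $\lambda_A,\lambda_B$ are automatically common to both values of $i$ --- a point the paper leaves inside its unshown ``direct computation''. What the paper's version buys in exchange is the explicit vectors $\ket{\phi_A},\ket{\phi_B}$, which are not wasted effort: Equation~\eqref{eq:fformAB} is precisely the starting point of Appendix~\ref{sec:real-optim-perf}, where these rank-one forms are turned into the optimal retrieval circuit.
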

	\begin{proof}
		Via direct computation using Eqs. (\ref{eq:defee}), (\ref{eq:blockdiagR}), (\ref{eq:14}) we can rewrite Equations~\eqref{eq:1} 
		as
		\begin{align}
		\label{eq:fformAB}
		&   A = \lambda_A \ketbra{\phi_A}{\phi_A} \quad  B = \lambda_B \ketbra{\phi_B}{\phi_B} \\
		&   \ket{\phi_A} :=
		\begin{pmatrix}
		\frac{c}{c_n} \\
		\frac{s}{s_n}
		\end{pmatrix}
		\quad
		\ket{\phi_B} :=
		\begin{pmatrix}
		\frac{s}{c_n} \\
		-\frac{c}{s_n}
		\end{pmatrix}
		\nonumber
		\end{align}
		We notice that $\braket{u}{\phi_B}=\braket{v}{\phi_A}=0$ and since we are in a two-dimensional space,  we have the thesis
	\end{proof}
	Combining Lemma  \ref{lmm:PT} with Lemma \ref{lmm:figmeritwithAandB} yields to Equation~\eqref{eq:optprob_approxdet_final}.
	Combining Lemma  \ref{lmm:PT}, Lemma \ref{lmm:figmeritwithAandB} and Lemma \ref{lmm:perfretrievconditionwithAandB} yields to Equation~\eqref{eq:optprob_perfprob_final}.

	\section{Realization of the optimal approximate deterministic retrieval }
	\label{sec:real-optim-appr}
	The optimal operators $A$, $B$ from Equation~(\ref{eq:optprob_approxdet_final}) are one-dimensional
	orthogonal projectors defining a projective
	measurement in basis
	\begin{align}
	\label{eq:meformAB}
	&   A = \ketbra{\phi_A}{\phi_A} \quad  B = \ketbra{\phi_B}{\phi_B} \\
	&   \ket{\phi_A} :=
	\begin{pmatrix}
	a \\
	b
	\end{pmatrix}
	\quad
	\ket{\phi_B} :=
	\begin{pmatrix}
	-b^* \\
	a*
	\end{pmatrix} ,
	\nonumber
	\end{align}
	where $|a|^2+|b|^2=1$ and $a,b$ are suitable real
	numbers. 
	Note that the figure of merit (\ref{eq:optprob_approxdet_final}) can be written as 
	$F_e = \eta_u \bra{\tilde{u}}A\ket{\tilde{u}} + \eta_v\bra{\tilde{v}}B\ket{\tilde{v}}$,
	where $\ket{\tilde{u}} :=\frac{1}{\sqrt{\eta_u}}\ket{u}$, 
	$\ket{\tilde{v}} :=\frac{1}{\sqrt{\eta_v}}\ket{v}$ are normalized pure states and $\eta_u+\eta_v=\braket{u}{u}+\braket{v}{v}=1$. 
	Due to discussion below Equation~(\ref{eq:optprob1}) we know $A+B=I$. Consequently, the problem is actually a minimum error discrimination problem for pure states $\ket{\tilde{u}}, \ket{\tilde{v}}$ appearing with prior probability $\eta_u,\eta_v$, respectively.
	Then,  the parameters $a$ and $b$ can be determined by the solution of this state discrimination problem
	but
	their particular values are not important for the
	upcoming discussion.
	However, to show that
	realization scheme based on vectors
	$\ket{\phi_A}$, $\ket{\phi_B}$ have the same
	optimal performance it is useful to rewrite figure
	of merit in terms of parameters $a,b$,
	which reads as follows
	\begin{align}
	\label{eq:optmepval}
	F &= (a\,c_n c + b\,s_n s)^2+ (a\,s_n c+b\, c_n s)^2.
	\end{align}
	
	The matrices $A,B$ completely determine the
	Choi-Jamiolkovski operator of $\mathcal{R}$, which
	has rank two. Thus, any dilation of $\mathcal{R}$
	must have at least two-dimensional ancila.  One of such
	minimal dilations is defined by the following
	unitary transformation
	\begin{align}
	\label{eq:defGmin}
	G
	=&(\ketbra{a_1}{+}+\ketbra{a_2}{-})\otimes \ketbra{0}{0} \nonumber\\
	&+(\ketbra{b_1}{+}+\ketbra{b_2}{-})\otimes \ketbra{1}{1}
	\end{align}
	where
	\begin{align}
	\ket{a_1}&=a\ket{0}-b^*\ket{1} \qquad 
	\ket{a_2}=b\ket{0}+a^*\ket{1} \nonumber \\
	\ket{b_1}&=\sigma_z\ket{a_1} \qquad
	\ket{b_2}=-\sigma_z\ket{a_2}.
	\end{align}
	We can write $G$ as
	\begin{align}
	G&=C_{\pi} \, (M \otimes I )\, \CNOT
	\end{align}
	where $\CNOT=I \otimes \ketbra{0}{0}+ \sigma_x \otimes \ketbra{1}{1}$ is the controlled-NOT gate,
	$M=\ketbra{a_1}{+}+\ketbra{a_2}{-}$ is qubit
	unitary transformation and
	$C_{\pi}=I \otimes \ketbra{0}{0}+ \sigma_z \otimes
	\ketbra{1}{1}$ is a controlled phase gate. The actual effect of $C_{\pi}$ when it is followed
	by the measurement in the basis
	$\{\ket{0}_A,\ket{1}_A\}$ of the first ancillary
	qubit can be equivalently achieved by omitting $C_{\pi}$ and taking no action or application of $\sigma_z$ gate on the second qubit if the
	measurement outcome was $0$ or $1$, respectively.
	
	In order to prove optimality of the proposed scheme, we first express the conditionally prepared unitary transformations
	\begin{align}
	\bra{0}\otimes I \,G\,\ket{\psi_{in,0}}\otimes I &=a\, c_n I + i b\,s_n \sigma_z =: \sqrt{p}\; U_{c,00} \nonumber \\
	\bra{1}\otimes I \,G\,\ket{\psi_{in,0}}\otimes I &=i a^*\, s_n - b^*\,c_n \sigma_z = :\sqrt{1-p}\; U_{c,01} \nonumber \\
	\bra{0}\otimes I \,G\,\ket{\psi_{in,1}}\otimes I &=a\, c_n I - i b\,s_n \sigma_z = :\sqrt{p} \;U_{c,10} \nonumber \\
	\bra{1}\otimes I \,G\,\ket{\psi_{in,1}}\otimes I &=-i a^*\, s_n - b^*\,c_n \sigma_z = :\sqrt{1-p} \; U_{c,11} , 
	\end{align}
	where $U_{c, ab}$ are qubit unitary transformations and $p=a^2 c_n^2 + b^2 s_n^2$. As a consequence the Choi operators of the prepared channels in case of the memory states $\ket{\psi_{n,0}}$, $\ket{\psi_{n,1}}$ are
	\begin{align}
	X_0=&p \KetBra{U_{c,00}}{U_{c,00}}+(1-p)\KetBra{U_{c,01}}{U_{c,01}} \nonumber \\
	X_1=&p \KetBra{U_{c,10}}{U_{c,10}}+(1-p)\KetBra{U_{c,11}}{U_{c,11}}
	,
	\end{align}
	respectively.
	Inserting the above expressions into the formula for the average process fidelity of the storage and retrieval we find that
	\begin{align}
	F_e = & \frac{1}{8}(\Bra{U_0} X_0 \Ket{U_0} + \Bra{U_1} X_1 \Ket{U_1}) \nonumber \\
	= & (a\,c_n c + b\,s_n s)^2+ (a\,s_n c+b\, c_n s)^2,
	\end{align}
	which coincides with the optimal value of the figure of merit (see Equation~(\ref{eq:optmepval})). 
	On the other hand,
	unitary transformation $G$, defined in Equation~(\ref{eq:defGmin}), can be rewritten also in the following way
	\begin{align}
	\begin{aligned}
	G=&\frac{1}{\sqrt{2}}(\ket{0}+i\ket{1})\bra{\uparrow} \otimes U_{+A} + \\
	&+ \frac{1}{\sqrt{2}}(\ket{0}-i\ket{1})\bra{\downarrow} \otimes U_{-A},
	\end{aligned}
	\end{align}
	where $U_{\pm A}=a I \pm i b\,\sigma_z$ are qubit
	unitary gates and
	$\ket{\uparrow}=\frac{1}{\sqrt{2}}(\ket{+}+i\ket{-})$,
	$\ket{\downarrow}=\frac{1}{\sqrt{2}}(\ket{+}-i\ket{-})$. Since
	$G$ is used for a dilation of a channel, it does
	not matter in which basis we perform the partial
	trace over the first qubit. Operationally it is
	equivalent to measuring in some basis (say
	$\frac{1}{\sqrt{2}}(\ket{0}\pm i\ket{1})$) and
	ignoring the measurement outcome. With this view
	we realize that application of $G$ and measurement
	in basis $\frac{1}{\sqrt{2}}(\ket{0}\pm i\ket{1})$
	can be seen as a realization of the following
	measure-and-prepare strategy for retrieval stage
	of the protocol. The input state
	$\ket{\psi_{n,0}}$ or $\ket{\psi_{n,1}}$ is
	first measured in the basis
	$\{\ket{\uparrow}, \ket{\downarrow} \}$ and
	subsequently unitary transformation $U_{+A}$ or
	$U_{-A}$ is applied to the main system depending
	on the obtained outcome. Of course, the
	measurement could be already performed at the end
	of the storage phase of the protocol, thus we
	found that for maximal process fidelity of
	deterministic storage and retrieval there exists a
	measure-and-prepare strategy (see illustration in Figure \ref{fig:meas_prep}) which
	does not require quantum memory and performs
	optimally.

	\section{Perfect probabilistic retrieval for $d > 2$}
	\label{sec:isom-appr-perf}

	In this section we will prove that the value of
	the probability of success derived in Equation~\eqref{eq:fom_final} for the qubit case is also
	optimal for generic dimension $d$.
	
	Let us denote with $P_{succ}(U_0,U_1)$
	the optimal probability
	of success for the perfect probabilistic storage
	and retrieval of the pair $\{U_0,U_1\}$
	of unitaries in dimension
	$d$.
	
	Without loss of generality, let $U_0$ and $U_1$
	be given as in Equation~\eqref{eq:26}.
	Let us consider the unitaries
	\begin{align}
	\label{eq:12}
	\begin{aligned}
	&\tilde{U}_0   :=
	e^{ i \alpha} \ketbra{0}{0} +
	e^{- i \alpha} \ketbra{d-1}{d-1}  \\
	&  \tilde{U}_1   :=
	e^{ -i \alpha} \ketbra{0}{0} +
	e^{ i \alpha} \ketbra{d-1}{d-1}.
	\end{aligned}
	\end{align}

	Qubit unitaries $\tilde{U}_0  $ and $\tilde{U}_1$
	are obtained by restricting the action of $U_0$ and $U_1$
	on the subspace spanned by $\ket{0}$ and $\ket{d-1}$.
	Only this subspace is used for the optimal storage (see Proposition \ref{prop:optimalstorage}) in both considered situations ($U_0$ or $U_1$ vs. $\tilde{U}_0$ or $\tilde{U}_1$), thus the same two states $\ket{\psi_{n,i}}$ $i=0,1$ are obtained after the optimal storage. Moreover, the retrieval for $U_0/U_1$ can be easily modified to perform retrieval for $\tilde{U}_0/\tilde{U}_1$, thus we conclude $P_{succ}(U_0,U_1) \leq
	P_{succ}(\tilde{U}_0,\tilde{U}_1) = P_{succ}$,
	where $P_{succ}$ is given by Equation~\eqref{eq:fom_final}. 

	We will now present a retrieval strategy that
	shows that $P_{succ}(U_0,U_1)$ achieves the upper bound
	given by the qubit case.
	The retrieval strategy is realized by composing
	an isometry
	$G : \mathcal{H}_{1} \otimes \mathcal{H}_{0}
	\mapsto \mathcal{H}_{2} \otimes \mathcal{H}_{3}
	\otimes \mathcal{H}_4$ and a projective
	measurement
	\begin{align}
	\label{eq:11}
	\Pi_s = I \otimes \ketbra{0}{0} \quad
	\Pi_f = I \otimes \ketbra{1}{1}
	\end{align}
	on system $\mathcal{H}_{3}\otimes \mathcal{H}_4 $
	($\mathrm{dim}(\mathcal{H}_4) = 2$),
	where $s/f$ is the
	outcome corresponding to a successful/failed
	retrieval.
	Let us consider the following ansatz for the isometry $G$:
	\begin{align}
	\label{eq:9iso}
	&\begin{aligned}
	G\ket{k}\ket{\psi_{n,i}} =&
	\sqrt{P_{succ}} \;e^{i\beta_{k,i}}\ket{k} \ket{\phi_i}\ket{0} +\\
	&+\sqrt{1-P_{succ}}\; \ket{k} \ket{\eta_{i,k}} \ket{1}
	\end{aligned}
	\\
	&\begin{aligned}
	& \ket{\phi_i}, \ket{\eta_{i,k}} \in \mathcal{H}_3,
	\quad \ket{0}, \ket{1}\in   \mathcal{H}_4,
	\\
	&\braket{\phi_i}{\phi_i} =
	\braket{\eta_{i,k}}{\eta_{i,k}} = 1 \\
	& \braket{\phi_0}{\phi_1} = x, \quad
	\braket{\eta_{0,k}}{\eta_{1,k'}} =
	y_{k,k'} \\
	&x, y_{k,k'} \in \mathbb{C} \quad |x|,|y_{k,k'}| \leq 1.
	\end{aligned}
	\end{align}
	where $\ket{k}$ is the basis of eingestates of
	$U_0$ and $U_1$, and we defined
	$\beta_{0,0} = \beta_{d-1,1} = \alpha$ ,
	$\beta_{d-1,0} = \beta_{0,1} = -\alpha$, 
	$\beta_{k,0} = \beta_{k} = - \beta_{k,1}$ in accordance with Equation~\eqref{eq:26}.
	Since
	$\{\ket{k}\ket{\psi_{n,i}}\}$ is a basis of
	$\mathcal{H}_1 \otimes \mathcal{H}_0$, $G$ is
	completely specified by Equation~\eqref{eq:defGmin}.
	It is clear that $G$ and the POVM
	$\{\Pi_s , \Pi_f\}$ provide a perfect
	probabilistic retrieval of the pair $\{U_0,U_1\}$
	with probability of success $P_{succ}$.  We only need to
	verify that $G$ is an isometry, and that is true
	if and only if all the scalar products are
	preserved. Clearly, $\ket{k}\ket{\psi_{n,i}}$ and
	$\ket{k'}\ket{\psi_{n,j}}$ are sent to orthogonal
	states for $k\neq k'$. Therefore, $G$ is an
	isometry if and only of the following set of
	equations is satisfied:
	\begin{align}
	\label{eq:13}
	& \cos(2n\alpha) = P_{succ}\; e^{i(\beta_{k,1} - \beta_{k,0})}x +
	(1-P_{succ})  y_{k,k} \quad \forall k \\
	&\mbox{where }|x|, | y_{k,k}| \leq 1 , \mbox{ and }  -2\alpha \leq \beta_{k,1} - \beta_{k,0} \leq 2 \alpha.
	\end{align}
	
	A rather lenghty but straightforward calculation
	shows that Equation~\eqref{eq:13} is solved by
	the following choice of parameters which
	achieve the same 
	probability of success as the qubit case:
	\begin{align}
	\label{eq:15}
	& \begin{aligned}
	&x = \frac{\tilde{c}_n \tilde{c}}{1- \tilde{c}_n \tilde{s}}, \;
	y_{k,k} = \frac {1-\tilde{c} \, \zeta_k}{\tilde{s}}, \;
	P_{succ} = 1- \tilde{c}_n \tilde{s}   \\
	& \mbox{if } \alpha \in [{\chi_n},\frac{\pi}{4n}]
	\end{aligned}
	\\
	&      \begin{aligned}
	&x = 1, \;
	y_{k,k} = \frac {\tilde{c}_n - P_{succ}\, \zeta_k}{1-P_{succ}}, \;
	P_{succ} = \frac{1-\tilde{c}_n^2}{2(1-\tilde{c}_n \tilde{c})}\\
	&\mbox{if } \alpha \in (0,{\chi_n})
	\end{aligned} \\
	&   \begin{aligned}
	&\zeta_k:= e^{i(\beta_{k,1} - \beta_{k,0})}, \quad \tilde{c}_n:=\cos(2n\alpha), \\
	&\tilde{c} := \cos(2\alpha), \quad
	\tilde{s} := \sin(2\alpha). 
	\end{aligned}
	\end{align}.

	\section{Realization of the optimal perfect probabilistic retrieval}
	\label{sec:real-optim-perf}

	The goal of this section is to show how the optimal quantum circuit was found and thus prove Proposition \ref{prop:qc_for_psr}. 
	We start from Equation~(\ref{eq:fformAB}). We notice that
	$\braket{u}{\phi_A}=\braket{v}{\phi_B}=1$ and consequently from Equation~(\ref{eq:optprob_perfprob_final}) we get
	\begin{align}
	P_{succ}= \lambda_A + \lambda_B. 
	\end{align}
	We rewrite perfect retrieval conditions from Equation~(\ref{eq:optprob_perfprob_final}) as
	\begin{align}
	\label{eq:1111}
	&   \bra{\widetilde{v}}A\ket{\widetilde{v}} = 0 \quad  \bra{\widetilde{u}}B\ket{\widetilde{u}} = 0 
	\end{align}
	where $ \ket{\widetilde{u}} :=\frac{1}{\sqrt{\eta_u}}\ket{u}$, $\ket{\widetilde{v}} :=\frac{1}{\sqrt{\eta_v}}\ket{v}$
	are normalized pure states.
	On the other hand, the figure of merit 
	from Equation~(\ref{eq:optprob_perfprob_final}) 
	reads
	\begin{align}
	\label{eg:FOM_seen_as_USD}
	&P_{succ}=
	\eta_u \bra{\widetilde{u}}A\ket{\widetilde{u}} + \eta_v\bra{\widetilde{v}}B\ket{\widetilde{v}},
	\end{align}
	where $\eta_u+\eta_v=\braket{u}{u}+\braket{v}{v}=1$. Due to normalization condition $A+B\leq I$ (Lemma \ref{lmm:PT}) operators $A,B$ can be interpreted as elements of a $3$-outcome POVM $\{A,B,I-A-B\}$, which performs unambiguous discrimination of pure states
	$\ket{\widetilde{u}}, \ket{\widetilde{v}}$ appearing with prior probability $\eta_u,\eta_v$, respectively. The figure of merit 
	(\ref{eg:FOM_seen_as_USD}) 
	equals the success probability of the discrimination and its optimal solution is known \cite{jaeger1995optimal}. 
	Depending on the prior probability $\eta_u$ and the overlap $\mu^2=|\braket{\widetilde{u}}{\widetilde{v}}|^2$ three regimes for the optimal measurement and success probability exist (see e.g. \cite{SedlakActaUnambigous} page $672$). However, for storage and retrieval of two unitaries only two of them will occur since $\eta_u=(1+\cos{2n\alpha}\cos{2\alpha})/2\geq 1/2$. In particular, we obtain Equation~(\ref{eq:fom_final}) expressed in suitable form for the upcoming steps
	\begin{align}
	&
	P_{succ}=
	\begin{cases}
	1-\cos{2n\alpha}\sin{2\alpha} & \eta_u \leq \frac{1}{1+\mu^2} \; (large \,\alpha)
	\\
	\frac{1-(\cos{2n\alpha})^2}{2(1-\cos{2n\alpha}\cos{2\alpha})} & \eta_u \geq \frac{1}{1+\mu^2} \;  (small \,\alpha)
	\end{cases},
	\end{align}
	where $\frac{1}{1+\mu^2}=\frac{1-(\cos{2n\alpha})^2(\cos{2\alpha})^2}{1-(\cos{2n\alpha})^2\cos{4\alpha}}$.
	
	In the large $\alpha$ regime ($\eta_u \leq \frac{1}{1+\mu^2}$) the optimal values of $\lambda_A, \lambda_B$ in Equation~(\ref{eq:fformAB}) and the form of the third POVM element read
	
	\begin{align}
	\label{eq:optla1}
	\lambda_A&=\frac{1}{2}(1+\tilde{c}_n(\tilde{c}-\tilde{s})) \quad \;
	\lambda_B=\frac{1}{2}(1-\tilde{c}_n(\tilde{c}+\tilde{s})) \\
	C&=I-A-B= \ket{\phi_C} \bra{\phi_C} \nonumber\\
	\label{eq:optla11}
	\ket{\phi_C}&=
	\begin{pmatrix}
	\sqrt{\nu_+}\\
	-\sqrt{\nu_-}\\
	\end{pmatrix}
	\end{align}
	
	which is in accordance with Equation~(\ref{eq:def_gate_M}).
	
	On the other hand, for $\eta_u \geq \frac{1}{1+\mu^2}$ the optimal solution turns into a projective measurement
	\begin{align}
	A&=\ket{\tilde{v}} \bra{\tilde{v}}\quad \;
	B=0 \\
	C&=I-A-B= \ket{\tilde{v}^\perp} \bra{\tilde{v}^\perp} \nonumber\\
	\ket{\tilde{v}}&=
	\begin{pmatrix}
	\sqrt{a}\\
	\sqrt{b}\\
	\end{pmatrix}
	\quad
	\ket{\tilde{v}^\perp}=
	\begin{pmatrix}
	\sqrt{b}\\
	-\sqrt{a}\\
	\end{pmatrix}
	\nonumber
	\end{align}   
	again in accordance with values of $a,b$ defined in Equation~(\ref{eq:def_gate_M}). 
	
	Here the POVMs are acting on the two-dimensional complex vector space on which matrices $A,B$ act. Elements of matrices $A,B$ define elements of the Choi operator $R_s$ and matrix $C$ serves as a guide for numerous possible completions of quantum operation $\mathcal{R}_s$ into a quantum channel. One can involve isometric dilation of such a channel and a suitable measurement on the ancillary system to realize quantum operation $\mathcal{R}_s$. Thus, in the regime of $\eta_u \leq \frac{1}{1+\mu^2}$ we introduce the following isometry $G$ from $\hilb{H}_0 \otimes \hilb{H}_1$ to $\hilb{H}_A \otimes \hilb{H}_2$
	
	\begin{align}
	\label{eq:defG}
	G=&\;\sqrt{\lambda_A}(\frac{c}{c_n}\ket{0}
	\bra{+}\otimes I + \frac{s}{s_n}\ket{0}
	\bra{-}\otimes \sigma_z) \nonumber\\
	&+\sqrt{\lambda_B}(\frac{s}{c_n}\ket{1}\bra{+}\otimes \sigma_z - \frac{c}{s_n}\ket{1} \bra{-}\otimes I) \nonumber\\
	&+\sqrt{\nu_+}\ket{2}
	\bra{+}\otimes I - \sqrt{\nu_-}\ket{2} \bra{-}\otimes \sigma_z)   ,
	\end{align}
	
	where $\hilb{H}_A$ is a three-dimensional Hilbert space (qutrit) spanned by orthonormal basis
	$\{\ket{0},\ket{1},\ket{2}\}$. 
	Using Eqs. (\ref{eq:2}),
	(\ref{eq:def_gate_M}) 
	we can verify that $G^\dagger G=I$.
	Let us calculate the conditional transformations if $\ket{0}$ or $\ket{1}$ are observed on the ancilla system. We obtain
	\begin{align}
	{}_A\bra{0}\otimes I_2\;G\;\ket{\psi_{n,0}}\otimes I_1&=
	\sqrt{\lambda_A}(c I+ i s \sigma_z)= \sqrt{\lambda_A} U_0 \nonumber\\
	{}_A\bra{0}\otimes I_2\; G \;\ket{\psi_{n,1}}\otimes I_1&=
	\sqrt{\lambda_A}(c I- i s \sigma_z)= \sqrt{\lambda_A} U_1 \nonumber     \\
	{}_A\bra{1}\otimes I_2\;G\;\ket{\psi_{n,0}}\otimes I_1&=
	\sqrt{\lambda_B}(s \sigma_z - i c I)= -i\sqrt{\lambda_B} U_0 \nonumber\\
	{}_A\bra{1}\otimes I_2\; G \;\ket{\psi_{n,1}}\otimes I_1&=
	i\sqrt{\lambda_B}(s \sigma_z + i c I)= i\sqrt{\lambda_B} U_1 \, .
	\end{align}
	We see that the desired unitary transformations were retrieved exactly, since the global phase is irrelevant.
	The overall success probability reaches its optimal value, since it reads
	\begin{align}
	P_{succ}&=  \frac{1}{2}\sum_{i=0,1} Tr_{A2}[G \;\ketbra{\psi_{n,i}}{\psi_{n,i}}\otimes \xi\; G^\dagger \;W_A\otimes I_2 ], \nonumber\\
	&= \lambda_A+\lambda_B = 1-\tilde{c}_n\tilde{s}
	\end{align}
	where $\xi$ is an arbitrary normalized input state to the retrieved transformation, and $W=\ketbra{0}{0} + \ketbra{1}{1}$.
	Similarly, in the regime of $\eta_u \geq \frac{1}{1+\mu^2}$ we define isometry 
	\begin{align}
	H=& a \ket{0} \bra{+}\otimes I + b \ket{0} \bra{-}\otimes \sigma_z  \nonumber\\
	&+b \ket{2}\bra{+}\otimes I - a \ket{2} \bra{-}\otimes \sigma_z 
	\end{align}
	from $\hilb{H}_0 \otimes \hilb{H}_1$ to $\hilb{H}_A \otimes \hilb{H}_2$, 
	where $a,b$ are defined in Equation~(\ref{eq:def_gate_M}). 
	One can check by direct calculation that $H^\dagger H = I$.
	The main difference with respect to Equation~(\ref{eq:defG}) is that in this regime one-dimensional subspace along vector $\ket{1}$ is not used, thus only outcomes in directions of $\ket{0}$ and $\ket{2}$ will appear and they correspond to success and failure, respectively. Indeed we obtain
	\begin{align}
	{}_A\bra{0}\otimes I_2\;H\;\ket{\psi_{n,0}}\otimes I_1&=
	\sqrt{q}(c I+ i s \sigma_z)= \sqrt{q} U_0 \nonumber\\
	{}_A\bra{0}\otimes I_2\; H \;\ket{\psi_{n,1}}\otimes I_1&=
	\sqrt{q}(c I- i s \sigma_z)= \sqrt{q} U_1 \nonumber     \\
	P_{succ}=
	q &= \frac{\tilde{s}^2_n}{2(1-\tilde{c} \tilde{c}_n)},    
	\end{align}
	which agrees with optimal value from Equation~(\ref{eq:fom_final}).
	
	Our next goal is to find an efficient quantum circuit realization of the optimal protocol.
	For that purpose we rewrite the isometries $G$, $H$ in the following form
	\begin{align}
	G=&(\ketbra{a_1}{+}+\ketbra{a_2}{-})\otimes \ketbra{0}{0} \nonumber\\
	&+(\ketbra{b_1}{+}+\ketbra{b_2}{-})\otimes \ketbra{1}{1}  \nonumber\\
	H=&(\ketbra{a'_1}{+}+\ketbra{a'_2}{-})\otimes \ketbra{0}{0} \nonumber\\
	&+(\ketbra{b'_1}{+}+\ketbra{b'_2}{-})\otimes \ketbra{1}{1}  
	\end{align}
	where

	\begin{align}
	F&=\ketbra{0}{0}-\ketbra{1}{1}+\ketbra{2}{2} \nonumber \\
	\ket{b_1}&=F\ket{a_1} \qquad \ket{b'_1}=F\ket{a'_1}= \ket{a'_1}\\
	\ket{b_2}&=-F\ket{a_2}  \qquad \ket{b'_2}=-F\ket{a'_2}=-\ket{a'_2}
	\end{align}
	and the vectors $\ket{a_1}, \ket{a_2}, \ket{a'_1}, \ket{a'_2}$ are as in Equation~(\ref{eq:def_gate_M}). 
	
	Now it is obvious we can write $G,H$ as
	\begin{align}
	\label{eq:circv1}
	G&=C_F \, M \otimes I\, C_X \\
	H&=M' \otimes I\, C_X 
	\end{align}
	where $C_X=I \otimes \ketbra{0}{0}+ \sigma_x \otimes \ketbra{1}{1}$ is a controlled-NOT (CNOT), $M=\ketbra{a_1}{+}+\ketbra{a_2}{-}$ and $M'=\ketbra{a'_1}{+}+\ketbra{a'_2}{-}$ are qubit to qutrit isometries and $C_F=I \otimes \ketbra{0}{0}+ F \otimes \ketbra{1}{1}$ is a controlled version of unitary $F$ applied to a qutrit. Finally, it is useful to realize what is the actual effect of $C_F$ if it is followed by the measurement in the basis $\{\ket{0},\ket{1},\ket{2}\}$ of the qutrit. Subspaces $\{\ket{00},\ket{01}\}$, $\{\ket{20},\ket{21}\}$ are left unchanged, while in subspace $\{\ket{10},\ket{11}\}$ we have $C_F\ket{10}=\ket{10}, C_F\ket{11}=-\ket{11}$, which is the same as action of $\sigma_z$ on the qubit. Thus, instead of implementing $C_F$ gate we might omit it 
	and only in case of measurement outcome $\ket{1}$ on the qutrit we will apply $\sigma_z$ on the unmeasured qubit to achieve the same resulting state as in the original circuit. In formula,
	\begin{align}
	\label{eq:circv2a}
	{}_A\bra{0}\otimes I_2\;G &={}_A\bra{0}M \otimes I\, C_X \\
	\label{eq:circv2b}
	{}_A\bra{1}\otimes I_2\; G &={}_A\bra{1}M \otimes \sigma_z\, C_X
	\end{align}
	In the regime of $\eta_u \geq \frac{1}{1+\mu^2}$ implementation of $C_F$ gate is not needed and the circuit differs only in the choice of qubit to qutrit isometry.
	Thus, we proved that the quantum circuit depicted in Figure \ref{fig:fig2} performs optimal storage and retrieval protocol for two unitary transformations of a qubit.

	\section{Optical implementation of the POVM}
	\label{app:isometry_par}
	
	\begin{figure}
		\includegraphics[width=\linewidth]{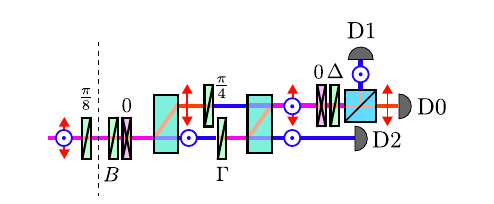}
		\caption{Optical implementation of the POVM. The circuit implements isometry $M$ followed by measurement in basis $|\uparrow H\rangle = |0_t\rangle$, $|\uparrow V\rangle = |1_t\rangle$, and $|\downarrow V\rangle = |2_t\rangle$. The first wave plate implements Hadamard operation, while the rest of the setup implements operator $K$. Polarization is denoted by arrow diagrams and colors, blue for vertical polarization and red for horizontal. The general superposition of both polarizations is indicated by the purple color. Half- and quarter-wave plates are depicted using green and purple color, respectively and the angular orientation of their fast axis with respect to the horizontal direction is described by black captions. POVM measurement is concluded by detecting a photon at one of three single-photon detectors D0-D2.}
		\label{fig:s1}
	\end{figure}

	Here we describe our implementation of the POVM measurement in detail. The corresponding experimental block is depicted in Figure~\ref{fig:s1}. It consists of half-wave plates (depicted in green) and quarter-wave plates (depicted in purple) which couple the horizontal $|H\rangle$, depicted in red, and vertical polarizations $|V\rangle$, depicted in blue. These two polarization modes initially occupy the same spatial mode. A calcite beam displacer separates the polarization modes spatially, displacing the horizontally polarized mode laterally by 6~mm, as depicted in the figure. The upper half-wave plate ensures that photons from the upper path always get to the detectors D0 and D1, while the bottom half-wave plate controls the probability of the photon in the bottom path reaching detector D2. The last stack of wave plates controls the splitting ratio between detectors D0 and D1. The black captions in the figure denote the angle between the fast axis of the wave plate and the horizontal direction. The subsequent click of single-photon detector D0, D1, or D2 heralds the detection of the respective POVM elements.

	The mapping between input modes $|H\rangle = |0\rangle$, $|V\rangle = |1\rangle$ and output modes  $|\uparrow H\rangle = |0_t\rangle$, $|\uparrow V\rangle = |1_t\rangle$, and $|\downarrow V\rangle = |2_t\rangle$ is described by the following relation:
	
	{\begin{widetext}
			\begin{equation}
			C =  \tilde{W}(\Delta, \pi) \tilde{W}(0, \frac{\pi}{2}) D_2 \left[|\uparrow\rangle\langle\uparrow| \otimes \sigma_x + |\downarrow\rangle\langle\downarrow| \otimes W(\Gamma, \pi)\right]   D_{1} W(0, \frac{\pi}{2}) W(B, \pi),
			\end{equation}
	\end{widetext}}
	where $D_1$ and $D_2$ describe the action of calcite crystals: 
	\begin{eqnarray}
	D_1 &=& |\uparrow H\rangle\langle H| + |\downarrow V\rangle\langle V|, \\
	D_2 &=& |\uparrow V\rangle\langle \uparrow V| + |\uparrow H\rangle\langle \downarrow H| + |\downarrow V\rangle\langle \downarrow V|.
	\end{eqnarray}
	The operator
	\begin{equation}
	W(x, y) = |\mathrm{L}_x\rangle\langle\mathrm{L}_x| + \exp(-iy)|\mathrm{L}_{x+\frac{\pi}{2}}\rangle\langle\mathrm{L}_{x+\frac{\pi}{2}}|
	\end{equation}
	describes the action of a rotated wave plate using its eigenstates $|\mathrm{L}_x\rangle = \cos(x)|H\rangle + \sin(x)|V\rangle$, $x$ being the orientation of the fast axis and $y$ the wave plate retardance. The operator $\tilde{W}(x, y)$ describes polarization coupling in the upper spatial mode, i.e. 
	\begin{equation}
	\tilde{W}(x,y) =
	\left(\begin{array}{ll}
	W(x,y) & 0 \\
	0 & 1
	\end{array}\right).
	\end{equation}
	
	Note that due to the optimization of the experiment, there effectively is Hadarmard gate $H$ at the output of our CNOT gate. Therefore, to find the wave plate angles $B$, $\Gamma$, and $\Delta$ for which the setup implements the isometry $M$ we solve the matrix equation 
	\begin{equation}
	C = M H^{\dagger},
	\end{equation}
	where $M = M_l$ for $|\alpha| \leq \alpha_t$ and $M = M_h$ otherwise. The transition point $\alpha_t$ is determined by the following equation:
	\begin{equation}
	\cos(2n\alpha_t) \cos(2\alpha_t - \pi/4) = \sqrt{2}/2.
	\end{equation}
	The operator $MH^{\dagger} = K_{l,h}$ reads
	\begin{equation}
	M_l H^{\dagger} = \left(\begin{array}{ll}
	k_{11} & k_{31} \\
	0 & 0 \\
	k_{31} & -k_{11} \\
	\end{array}\right),
	\end{equation}
	where
	\begin{align}
	k_{11} &= \sqrt{\frac{(1 + \cos 2\alpha)(1-\cos 2n\alpha)}{2(1-\cos 2\alpha \cos 2n\alpha)}},\\
	k_{31} &= \sqrt{\frac{(1 - \cos 2\alpha)(1+\cos 2n\alpha)}{2(1-\cos 2\alpha \cos 2n\alpha)}},
	\end{align}
	and	
	\begin{equation}
	M_h H^{\dagger} = \left(\begin{array}{ll}
	k_{11} & k_{12} \\
	k_{21} & k_{22} \\
	k_{31} & k_{32} \\
	\end{array}\right),
	\label{eq:K}
	\end{equation}
	where the elements are
	\begin{align}
	k_{11} & =  \sqrt{\lambda_a}\frac{\cos(\alpha)}{\cos(n\alpha)}, \label{eq:ks11} \\ 
	k_{12} & =  \sqrt{\lambda_a}\frac{\sin(\alpha)}{\sin(n\alpha)}, \label{eq:ks12} \\ 
	k_{21} & =  \sqrt{\lambda_b}\frac{\sin(\alpha)}{\cos(n\alpha)}, \label{eq:ks21} \\ 
	k_{22} & =  -\sqrt{\lambda_b}\frac{\cos(\alpha)}{\sin(n\alpha)}, \label{eq:ks22} \\ 
	k_{31} & =  \sqrt{\nu_p}, \label{eq:ks31} \\ 
	k_{32} & =  -\sqrt{\nu_p} \label{eq:ks32}. 
	\end{align}
	and 
	\begin{align}
	\lambda_a & =  \left(1 + \cos(2n\alpha) (\cos\alpha - \sin\alpha)\right)\frac{1}{2}, \\
	\lambda_b & =  \left(1 - \cos(2n\alpha) (\cos\alpha + \sin\alpha)\right)\frac{1}{2}, \\
	\nu_p & = \left( 1 - \cos^2 2\alpha + \sin 2\alpha \right)\frac{\cos 2n\alpha}{1 + \cos 2n\alpha}\\
	\nu_n & =  \left( -1 + \cos^2 2\alpha + \sin 2\alpha \right)\frac{\cos 2n\alpha}{1 - \cos 2n\alpha}.
	\end{align}
	
	A possible solution is 
	\begin{align}
	B & =  \frac{\pi}{4} + \frac{\beta}{4},\\
	\Gamma & =  \frac{\pi}{4} - \frac{\beta}{2},\\
	\Delta & =  \frac{\delta}{2},
	\end{align}
	where 
	\begin{align}
	\beta & =  \arctan\left(\frac{k_{32}}{k_{31}}\right), \\
	\gamma & =  \arcsin\left(\frac{-k_{31}}{\cos(\beta)}\right), \\
	\delta & =  \arcsin\left( -\frac{k_{22}\cos(\gamma) + k_{11}}{\sin\beta \sin^2 \gamma}  \right).
	\end{align}
	To implement isometry $M$, we apply a Hadamard gate on the input of this setup, i.e. $M = KH$. This Hadamard gate was already effectively present in the setup.
	
	This optical setup could also be reconfigured to perform unambiguous state discrimination between states
	$\left(|0\rangle + \exp(\pm2in\alpha)|1\rangle\right)/\sqrt{2}$ by setting wave plate angles to 
	\begin{align}
	B & =  \frac{\pi}{4},\\
	\Gamma & =  \frac{1}{2}\arcsin(\tan(|\alpha|)),\\
	\Delta & =  -\frac{\pi}{8},
	\end{align}
	and the first quarter-wave plate to 0 and the second to $\frac{\pi}{4}$.
	Although the quarter-wave plates are not necessary  for these operations, we included them in the experiment because the first two wave plates and the first calcite serve as a variable projector that we use for the tomographic characterization of the gate.

	\bibliographystyle{apsrev4-1}
	

\begin{thebibliography}{70}%
\makeatletter
\providecommand \@ifxundefined [1]{%
 \@ifx{#1\undefined}
}%
\providecommand \@ifnum [1]{%
 \ifnum #1\expandafter \@firstoftwo
 \else \expandafter \@secondoftwo
 \fi
}%
\providecommand \@ifx [1]{%
 \ifx #1\expandafter \@firstoftwo
 \else \expandafter \@secondoftwo
 \fi
}%
\providecommand \natexlab [1]{#1}%
\providecommand \enquote  [1]{``#1''}%
\providecommand \bibnamefont  [1]{#1}%
\providecommand \bibfnamefont [1]{#1}%
\providecommand \citenamefont [1]{#1}%
\providecommand \href@noop [0]{\@secondoftwo}%
\providecommand \href [0]{\begingroup \@sanitize@url \@href}%
\providecommand \@href[1]{\@@startlink{#1}\@@href}%
\providecommand \@@href[1]{\endgroup#1\@@endlink}%
\providecommand \@sanitize@url [0]{\catcode `\\12\catcode `\$12\catcode
  `\&12\catcode `\#12\catcode `\^12\catcode `\_12\catcode `\%12\relax}%
\providecommand \@@startlink[1]{}%
\providecommand \@@endlink[0]{}%
\providecommand \url  [0]{\begingroup\@sanitize@url \@url }%
\providecommand \@url [1]{\endgroup\@href {#1}{\urlprefix }}%
\providecommand \urlprefix  [0]{URL }%
\providecommand \Eprint [0]{\href }%
\providecommand \doibase [0]{http://dx.doi.org/}%
\providecommand \selectlanguage [0]{\@gobble}%
\providecommand \bibinfo  [0]{\@secondoftwo}%
\providecommand \bibfield  [0]{\@secondoftwo}%
\providecommand \translation [1]{[#1]}%
\providecommand \BibitemOpen [0]{}%
\providecommand \bibitemStop [0]{}%
\providecommand \bibitemNoStop [0]{.\EOS\space}%
\providecommand \EOS [0]{\spacefactor3000\relax}%
\providecommand \BibitemShut  [1]{\csname bibitem#1\endcsname}%
\let\auto@bib@innerbib\@empty
\bibitem [{\citenamefont {Wootters}\ and\ \citenamefont
  {Zurek}(1982)}]{wootters1982single}%
  \BibitemOpen
  \bibfield  {author} {\bibinfo {author} {\bibfnamefont {W.~K.}\ \bibnamefont
  {Wootters}}\ and\ \bibinfo {author} {\bibfnamefont {W.~H.}\ \bibnamefont
  {Zurek}},\ }\href@noop {} {\bibfield  {journal} {\bibinfo  {journal}
  {Nature}\ }\textbf {\bibinfo {volume} {299}},\ \bibinfo {pages} {802}
  (\bibinfo {year} {1982})}\BibitemShut {NoStop}%
\bibitem [{\citenamefont {Bu{\v{z}}ek}\ and\ \citenamefont
  {Hillery}(1996)}]{buvzek1996quantum}%
  \BibitemOpen
  \bibfield  {author} {\bibinfo {author} {\bibfnamefont {V.}~\bibnamefont
  {Bu{\v{z}}ek}}\ and\ \bibinfo {author} {\bibfnamefont {M.}~\bibnamefont
  {Hillery}},\ }\href@noop {} {\bibfield  {journal} {\bibinfo  {journal}
  {Physical Review A}\ }\textbf {\bibinfo {volume} {54}},\ \bibinfo {pages}
  {1844} (\bibinfo {year} {1996})}\BibitemShut {NoStop}%
\bibitem [{\citenamefont {Werner}(1998)}]{werner1998optimal}%
  \BibitemOpen
  \bibfield  {author} {\bibinfo {author} {\bibfnamefont {R.~F.}\ \bibnamefont
  {Werner}},\ }\href@noop {} {\bibfield  {journal} {\bibinfo  {journal}
  {Physical Review A}\ }\textbf {\bibinfo {volume} {58}},\ \bibinfo {pages}
  {1827} (\bibinfo {year} {1998})}\BibitemShut {NoStop}%
\bibitem [{\citenamefont {Bennett}\ \emph {et~al.}(1992)\citenamefont
  {Bennett}, \citenamefont {Brassard},\ and\ \citenamefont
  {Mermin}}]{PhysRevLett.68.557}%
  \BibitemOpen
  \bibfield  {author} {\bibinfo {author} {\bibfnamefont {C.~H.}\ \bibnamefont
  {Bennett}}, \bibinfo {author} {\bibfnamefont {G.}~\bibnamefont {Brassard}}, \
  and\ \bibinfo {author} {\bibfnamefont {N.~D.}\ \bibnamefont {Mermin}},\
  }\href {\doibase 10.1103/PhysRevLett.68.557} {\bibfield  {journal} {\bibinfo
  {journal} {Phys. Rev. Lett.}\ }\textbf {\bibinfo {volume} {68}},\ \bibinfo
  {pages} {557} (\bibinfo {year} {1992})}\BibitemShut {NoStop}%
\bibitem [{\citenamefont {Busch}(2009)}]{Busch2009}%
  \BibitemOpen
  \bibfield  {author} {\bibinfo {author} {\bibfnamefont {P.}~\bibnamefont
  {Busch}},\ }\enquote {\bibinfo {title} {``no information without
  disturbance'': Quantum limitations of measurement},}\ in\ \href {\doibase
  10.1007/978-1-4020-9107-0_13} {\emph {\bibinfo {booktitle} {Quantum Reality,
  Relativistic Causality, and Closing the Epistemic Circle: Essays in Honour of
  Abner Shimony}}}\ (\bibinfo  {publisher} {Springer Netherlands},\ \bibinfo
  {address} {Dordrecht},\ \bibinfo {year} {2009})\ pp.\ \bibinfo {pages}
  {229--256}\BibitemShut {NoStop}%
\bibitem [{\citenamefont {Bechmann-Pasquinucci}\ and\ \citenamefont
  {Gisin}(1999)}]{PhysRevA.59.4238}%
  \BibitemOpen
  \bibfield  {author} {\bibinfo {author} {\bibfnamefont {H.}~\bibnamefont
  {Bechmann-Pasquinucci}}\ and\ \bibinfo {author} {\bibfnamefont
  {N.}~\bibnamefont {Gisin}},\ }\href {\doibase 10.1103/PhysRevA.59.4238}
  {\bibfield  {journal} {\bibinfo  {journal} {Phys. Rev. A}\ }\textbf {\bibinfo
  {volume} {59}},\ \bibinfo {pages} {4238} (\bibinfo {year}
  {1999})}\BibitemShut {NoStop}%
\bibitem [{\citenamefont {Bu\ifmmode~\check{z}\else \v{z}\fi{}ek}\ \emph
  {et~al.}(1999)\citenamefont {Bu\ifmmode~\check{z}\else \v{z}\fi{}ek},
  \citenamefont {Hillery},\ and\ \citenamefont {Werner}}]{PhysRevA.60.R2626}%
  \BibitemOpen
  \bibfield  {author} {\bibinfo {author} {\bibfnamefont {V.}~\bibnamefont
  {Bu\ifmmode~\check{z}\else \v{z}\fi{}ek}}, \bibinfo {author} {\bibfnamefont
  {M.}~\bibnamefont {Hillery}}, \ and\ \bibinfo {author} {\bibfnamefont
  {R.~F.}\ \bibnamefont {Werner}},\ }\href {\doibase 10.1103/PhysRevA.60.R2626}
  {\bibfield  {journal} {\bibinfo  {journal} {Phys. Rev. A}\ }\textbf {\bibinfo
  {volume} {60}},\ \bibinfo {pages} {R2626} (\bibinfo {year}
  {1999})}\BibitemShut {NoStop}%
\bibitem [{\citenamefont {Fuchs}(2002)}]{Fuchs2002}%
  \BibitemOpen
  \bibfield  {author} {\bibinfo {author} {\bibfnamefont {C.~A.}\ \bibnamefont
  {Fuchs}},\ }\enquote {\bibinfo {title} {Just two nonorthogonal quantum
  states},}\ in\ \href {\doibase 10.1007/0-306-47097-7_2} {\emph {\bibinfo
  {booktitle} {Quantum Communication, Computing, and Measurement 2}}},\
  \bibinfo {editor} {edited by\ \bibinfo {editor} {\bibfnamefont
  {P.}~\bibnamefont {Kumar}}, \bibinfo {editor} {\bibfnamefont {G.~M.}\
  \bibnamefont {D'Ariano}}, \ and\ \bibinfo {editor} {\bibfnamefont
  {O.}~\bibnamefont {Hirota}}}\ (\bibinfo  {publisher} {Springer US},\ \bibinfo
  {address} {Boston, MA},\ \bibinfo {year} {2002})\ pp.\ \bibinfo {pages}
  {11--16}\BibitemShut {NoStop}%
\bibitem [{\citenamefont {Selinger}(2004)}]{selinger2004towards}%
  \BibitemOpen
  \bibfield  {author} {\bibinfo {author} {\bibfnamefont {P.}~\bibnamefont
  {Selinger}},\ }in\ \href@noop {} {\emph {\bibinfo {booktitle} {Proceedings of
  the 2nd International Workshop on Quantum Programming Languages, TUCS General
  Publication}}},\ Vol.~\bibinfo {volume} {33}\ (\bibinfo {organization}
  {Citeseer},\ \bibinfo {year} {2004})\ pp.\ \bibinfo {pages}
  {127--143}\BibitemShut {NoStop}%
\bibitem [{\citenamefont {Gutoski}\ and\ \citenamefont
  {Watrous}(2007)}]{gutoski2007toward}%
  \BibitemOpen
  \bibfield  {author} {\bibinfo {author} {\bibfnamefont {G.}~\bibnamefont
  {Gutoski}}\ and\ \bibinfo {author} {\bibfnamefont {J.}~\bibnamefont
  {Watrous}},\ }in\ \href@noop {} {\emph {\bibinfo {booktitle} {Proceedings of
  the thirty-ninth annual ACM symposium on Theory of computing}}}\ (\bibinfo
  {year} {2007})\ pp.\ \bibinfo {pages} {565--574}\BibitemShut {NoStop}%
\bibitem [{\citenamefont {Chiribella}\ \emph
  {et~al.}(2008{\natexlab{a}})\citenamefont {Chiribella}, \citenamefont
  {D'Ariano},\ and\ \citenamefont {Perinotti}}]{PhysRevLett.101.060401}%
  \BibitemOpen
  \bibfield  {author} {\bibinfo {author} {\bibfnamefont {G.}~\bibnamefont
  {Chiribella}}, \bibinfo {author} {\bibfnamefont {G.~M.}\ \bibnamefont
  {D'Ariano}}, \ and\ \bibinfo {author} {\bibfnamefont {P.}~\bibnamefont
  {Perinotti}},\ }\href {\doibase 10.1103/PhysRevLett.101.060401} {\bibfield
  {journal} {\bibinfo  {journal} {Phys. Rev. Lett.}\ }\textbf {\bibinfo
  {volume} {101}},\ \bibinfo {pages} {060401} (\bibinfo {year}
  {2008}{\natexlab{a}})}\BibitemShut {NoStop}%
\bibitem [{\citenamefont {Chiribella}\ \emph {et~al.}(2009)\citenamefont
  {Chiribella}, \citenamefont {D'Ariano},\ and\ \citenamefont
  {Perinotti}}]{PhysRevA.80.022339}%
  \BibitemOpen
  \bibfield  {author} {\bibinfo {author} {\bibfnamefont {G.}~\bibnamefont
  {Chiribella}}, \bibinfo {author} {\bibfnamefont {G.~M.}\ \bibnamefont
  {D'Ariano}}, \ and\ \bibinfo {author} {\bibfnamefont {P.}~\bibnamefont
  {Perinotti}},\ }\href {\doibase 10.1103/PhysRevA.80.022339} {\bibfield
  {journal} {\bibinfo  {journal} {Phys. Rev. A}\ }\textbf {\bibinfo {volume}
  {80}},\ \bibinfo {pages} {022339} (\bibinfo {year} {2009})}\BibitemShut
  {NoStop}%
\bibitem [{\citenamefont {Bisio}\ \emph
  {et~al.}(2011{\natexlab{a}})\citenamefont {Bisio}, \citenamefont
  {Chiribella}, \citenamefont {D'Ariano},\ and\ \citenamefont
  {Perinotti}}]{Actaphysicanetwork}%
  \BibitemOpen
  \bibfield  {author} {\bibinfo {author} {\bibfnamefont {A.}~\bibnamefont
  {Bisio}}, \bibinfo {author} {\bibfnamefont {G.}~\bibnamefont {Chiribella}},
  \bibinfo {author} {\bibfnamefont {G.~M.}\ \bibnamefont {D'Ariano}}, \ and\
  \bibinfo {author} {\bibfnamefont {P.}~\bibnamefont {Perinotti}},\ }\href
  {\doibase 10.2478/v10155-011-0003-9} {\bibfield  {journal} {\bibinfo
  {journal} {Acta Physica Slovaca}\ }\textbf {\bibinfo {volume} {61}},\
  \bibinfo {pages} {273} (\bibinfo {year} {2011}{\natexlab{a}})}\BibitemShut
  {NoStop}%
\bibitem [{\citenamefont {Oreshkov}\ \emph {et~al.}(2012)\citenamefont
  {Oreshkov}, \citenamefont {Costa},\ and\ \citenamefont
  {Brukner}}]{oreshkov2012quantum}%
  \BibitemOpen
  \bibfield  {author} {\bibinfo {author} {\bibfnamefont {O.}~\bibnamefont
  {Oreshkov}}, \bibinfo {author} {\bibfnamefont {F.}~\bibnamefont {Costa}}, \
  and\ \bibinfo {author} {\bibfnamefont {{\v{C}}.}~\bibnamefont {Brukner}},\
  }\href@noop {} {\bibfield  {journal} {\bibinfo  {journal} {Nature
  communications}\ }\textbf {\bibinfo {volume} {3}},\ \bibinfo {pages} {1092}
  (\bibinfo {year} {2012})}\BibitemShut {NoStop}%
\bibitem [{\citenamefont {Ara{\'u}jo}\ \emph {et~al.}(2017)\citenamefont
  {Ara{\'u}jo}, \citenamefont {Gu{\'e}rin},\ and\ \citenamefont
  {Baumeler}}]{araujo2017quantum}%
  \BibitemOpen
  \bibfield  {author} {\bibinfo {author} {\bibfnamefont {M.}~\bibnamefont
  {Ara{\'u}jo}}, \bibinfo {author} {\bibfnamefont {P.~A.}\ \bibnamefont
  {Gu{\'e}rin}}, \ and\ \bibinfo {author} {\bibfnamefont {{\"A}.}~\bibnamefont
  {Baumeler}},\ }\href@noop {} {\bibfield  {journal} {\bibinfo  {journal}
  {Physical Review A}\ }\textbf {\bibinfo {volume} {96}},\ \bibinfo {pages}
  {052315} (\bibinfo {year} {2017})}\BibitemShut {NoStop}%
\bibitem [{\citenamefont {Bisio}\ and\ \citenamefont
  {Perinotti}(2019)}]{bisio2019theoretical}%
  \BibitemOpen
  \bibfield  {author} {\bibinfo {author} {\bibfnamefont {A.}~\bibnamefont
  {Bisio}}\ and\ \bibinfo {author} {\bibfnamefont {P.}~\bibnamefont
  {Perinotti}},\ }\href@noop {} {\bibfield  {journal} {\bibinfo  {journal}
  {Proceedings of the Royal Society A}\ }\textbf {\bibinfo {volume} {475}},\
  \bibinfo {pages} {20180706} (\bibinfo {year} {2019})}\BibitemShut {NoStop}%
\bibitem [{\citenamefont {Acin}(2001)}]{acin2001statistical}%
  \BibitemOpen
  \bibfield  {author} {\bibinfo {author} {\bibfnamefont {A.}~\bibnamefont
  {Acin}},\ }\href@noop {} {\bibfield  {journal} {\bibinfo  {journal} {Physical
  review letters}\ }\textbf {\bibinfo {volume} {87}},\ \bibinfo {pages}
  {177901} (\bibinfo {year} {2001})}\BibitemShut {NoStop}%
\bibitem [{\citenamefont {Duan}\ \emph {et~al.}(2007)\citenamefont {Duan},
  \citenamefont {Feng},\ and\ \citenamefont {Ying}}]{duan2007entanglement}%
  \BibitemOpen
  \bibfield  {author} {\bibinfo {author} {\bibfnamefont {R.}~\bibnamefont
  {Duan}}, \bibinfo {author} {\bibfnamefont {Y.}~\bibnamefont {Feng}}, \ and\
  \bibinfo {author} {\bibfnamefont {M.}~\bibnamefont {Ying}},\ }\href@noop {}
  {\bibfield  {journal} {\bibinfo  {journal} {Physical review letters}\
  }\textbf {\bibinfo {volume} {98}},\ \bibinfo {pages} {100503} (\bibinfo
  {year} {2007})}\BibitemShut {NoStop}%
\bibitem [{\citenamefont {Harrow}\ \emph {et~al.}(2010)\citenamefont {Harrow},
  \citenamefont {Hassidim}, \citenamefont {Leung},\ and\ \citenamefont
  {Watrous}}]{harrow2010adaptive}%
  \BibitemOpen
  \bibfield  {author} {\bibinfo {author} {\bibfnamefont {A.~W.}\ \bibnamefont
  {Harrow}}, \bibinfo {author} {\bibfnamefont {A.}~\bibnamefont {Hassidim}},
  \bibinfo {author} {\bibfnamefont {D.~W.}\ \bibnamefont {Leung}}, \ and\
  \bibinfo {author} {\bibfnamefont {J.}~\bibnamefont {Watrous}},\ }\href@noop
  {} {\bibfield  {journal} {\bibinfo  {journal} {Physical Review A—Atomic,
  Molecular, and Optical Physics}\ }\textbf {\bibinfo {volume} {81}},\ \bibinfo
  {pages} {032339} (\bibinfo {year} {2010})}\BibitemShut {NoStop}%
\bibitem [{\citenamefont {Zhuang}\ and\ \citenamefont
  {Pirandola}(2020)}]{PhysRevLett.125.080505}%
  \BibitemOpen
  \bibfield  {author} {\bibinfo {author} {\bibfnamefont {Q.}~\bibnamefont
  {Zhuang}}\ and\ \bibinfo {author} {\bibfnamefont {S.}~\bibnamefont
  {Pirandola}},\ }\href {\doibase 10.1103/PhysRevLett.125.080505} {\bibfield
  {journal} {\bibinfo  {journal} {Phys. Rev. Lett.}\ }\textbf {\bibinfo
  {volume} {125}},\ \bibinfo {pages} {080505} (\bibinfo {year}
  {2020})}\BibitemShut {NoStop}%
\bibitem [{\citenamefont {Chiribella}\ \emph
  {et~al.}(2008{\natexlab{b}})\citenamefont {Chiribella}, \citenamefont
  {D’Ariano},\ and\ \citenamefont {Perinotti}}]{chiribella2008optimal}%
  \BibitemOpen
  \bibfield  {author} {\bibinfo {author} {\bibfnamefont {G.}~\bibnamefont
  {Chiribella}}, \bibinfo {author} {\bibfnamefont {G.~M.}\ \bibnamefont
  {D’Ariano}}, \ and\ \bibinfo {author} {\bibfnamefont {P.}~\bibnamefont
  {Perinotti}},\ }\href@noop {} {\bibfield  {journal} {\bibinfo  {journal}
  {Physical review letters}\ }\textbf {\bibinfo {volume} {101}},\ \bibinfo
  {pages} {180504} (\bibinfo {year} {2008}{\natexlab{b}})}\BibitemShut
  {NoStop}%
\bibitem [{\citenamefont {Chiribella}\ \emph {et~al.}(2015)\citenamefont
  {Chiribella}, \citenamefont {Yang},\ and\ \citenamefont
  {Huang}}]{chiribella2015universal}%
  \BibitemOpen
  \bibfield  {author} {\bibinfo {author} {\bibfnamefont {G.}~\bibnamefont
  {Chiribella}}, \bibinfo {author} {\bibfnamefont {Y.}~\bibnamefont {Yang}}, \
  and\ \bibinfo {author} {\bibfnamefont {C.}~\bibnamefont {Huang}},\
  }\href@noop {} {\bibfield  {journal} {\bibinfo  {journal} {Physical review
  letters}\ }\textbf {\bibinfo {volume} {114}},\ \bibinfo {pages} {120504}
  (\bibinfo {year} {2015})}\BibitemShut {NoStop}%
\bibitem [{\citenamefont {Bisio}\ \emph
  {et~al.}(2010{\natexlab{a}})\citenamefont {Bisio}, \citenamefont
  {Chiribella}, \citenamefont {D’Ariano},\ and\ \citenamefont
  {Perinotti}}]{bisio2010information}%
  \BibitemOpen
  \bibfield  {author} {\bibinfo {author} {\bibfnamefont {A.}~\bibnamefont
  {Bisio}}, \bibinfo {author} {\bibfnamefont {G.}~\bibnamefont {Chiribella}},
  \bibinfo {author} {\bibfnamefont {G.~M.}\ \bibnamefont {D’Ariano}}, \ and\
  \bibinfo {author} {\bibfnamefont {P.}~\bibnamefont {Perinotti}},\ }\href@noop
  {} {\bibfield  {journal} {\bibinfo  {journal} {Physical Review A—Atomic,
  Molecular, and Optical Physics}\ }\textbf {\bibinfo {volume} {82}},\ \bibinfo
  {pages} {062305} (\bibinfo {year} {2010}{\natexlab{a}})}\BibitemShut
  {NoStop}%
\bibitem [{\citenamefont {Abdul~Salam}\ \emph {et~al.}(2024)\citenamefont
  {Abdul~Salam}, \citenamefont {Shamsul~Shaari},\ and\ \citenamefont
  {Mancini}}]{10.1088/1402-4896/ad7912}%
  \BibitemOpen
  \bibfield  {author} {\bibinfo {author} {\bibfnamefont {N.~R.~S.}\
  \bibnamefont {Abdul~Salam}}, \bibinfo {author} {\bibfnamefont
  {J.}~\bibnamefont {Shamsul~Shaari}}, \ and\ \bibinfo {author} {\bibfnamefont
  {S.}~\bibnamefont {Mancini}},\ }\href
  {http://iopscience.iop.org/article/10.1088/1402-4896/ad7912} {\bibfield
  {journal} {\bibinfo  {journal} {Physica Scripta}\ } (\bibinfo {year}
  {2024})}\BibitemShut {NoStop}%
\bibitem [{\citenamefont {Bisio}\ \emph
  {et~al.}(2011{\natexlab{b}})\citenamefont {Bisio}, \citenamefont
  {D’Ariano}, \citenamefont {Perinotti},\ and\ \citenamefont
  {Chiribella}}]{bisio2011minimal}%
  \BibitemOpen
  \bibfield  {author} {\bibinfo {author} {\bibfnamefont {A.}~\bibnamefont
  {Bisio}}, \bibinfo {author} {\bibfnamefont {G.~M.}\ \bibnamefont
  {D’Ariano}}, \bibinfo {author} {\bibfnamefont {P.}~\bibnamefont
  {Perinotti}}, \ and\ \bibinfo {author} {\bibfnamefont {G.}~\bibnamefont
  {Chiribella}},\ }\href@noop {} {\bibfield  {journal} {\bibinfo  {journal}
  {Physical Review A—Atomic, Molecular, and Optical Physics}\ }\textbf
  {\bibinfo {volume} {83}},\ \bibinfo {pages} {022325} (\bibinfo {year}
  {2011}{\natexlab{b}})}\BibitemShut {NoStop}%
\bibitem [{\citenamefont {Quintino}\ \emph {et~al.}(2019)\citenamefont
  {Quintino}, \citenamefont {Dong}, \citenamefont {Shimbo}, \citenamefont
  {Soeda},\ and\ \citenamefont {Murao}}]{PhysRevLett.123.210502}%
  \BibitemOpen
  \bibfield  {author} {\bibinfo {author} {\bibfnamefont {M.~T.}\ \bibnamefont
  {Quintino}}, \bibinfo {author} {\bibfnamefont {Q.}~\bibnamefont {Dong}},
  \bibinfo {author} {\bibfnamefont {A.}~\bibnamefont {Shimbo}}, \bibinfo
  {author} {\bibfnamefont {A.}~\bibnamefont {Soeda}}, \ and\ \bibinfo {author}
  {\bibfnamefont {M.}~\bibnamefont {Murao}},\ }\href {\doibase
  10.1103/PhysRevLett.123.210502} {\bibfield  {journal} {\bibinfo  {journal}
  {Phys. Rev. Lett.}\ }\textbf {\bibinfo {volume} {123}},\ \bibinfo {pages}
  {210502} (\bibinfo {year} {2019})}\BibitemShut {NoStop}%
\bibitem [{\citenamefont {Yoshida}\ \emph {et~al.}(2023)\citenamefont
  {Yoshida}, \citenamefont {Soeda},\ and\ \citenamefont
  {Murao}}]{yoshida2023reversing}%
  \BibitemOpen
  \bibfield  {author} {\bibinfo {author} {\bibfnamefont {S.}~\bibnamefont
  {Yoshida}}, \bibinfo {author} {\bibfnamefont {A.}~\bibnamefont {Soeda}}, \
  and\ \bibinfo {author} {\bibfnamefont {M.}~\bibnamefont {Murao}},\
  }\href@noop {} {\bibfield  {journal} {\bibinfo  {journal} {Physical Review
  Letters}\ }\textbf {\bibinfo {volume} {131}},\ \bibinfo {pages} {120602}
  (\bibinfo {year} {2023})}\BibitemShut {NoStop}%
\bibitem [{\citenamefont {Zhu}\ \emph {et~al.}(2024)\citenamefont {Zhu},
  \citenamefont {Mo}, \citenamefont {Chen},\ and\ \citenamefont
  {Wang}}]{zhu2024reversing}%
  \BibitemOpen
  \bibfield  {author} {\bibinfo {author} {\bibfnamefont {C.}~\bibnamefont
  {Zhu}}, \bibinfo {author} {\bibfnamefont {Y.}~\bibnamefont {Mo}}, \bibinfo
  {author} {\bibfnamefont {Y.-A.}\ \bibnamefont {Chen}}, \ and\ \bibinfo
  {author} {\bibfnamefont {X.}~\bibnamefont {Wang}},\ }\href@noop {} {\bibfield
   {journal} {\bibinfo  {journal} {Physical Review Letters}\ }\textbf {\bibinfo
  {volume} {133}},\ \bibinfo {pages} {030801} (\bibinfo {year}
  {2024})}\BibitemShut {NoStop}%
\bibitem [{\citenamefont {Miyazaki}\ \emph {et~al.}(2019)\citenamefont
  {Miyazaki}, \citenamefont {Soeda},\ and\ \citenamefont
  {Murao}}]{miyazaki2019complex}%
  \BibitemOpen
  \bibfield  {author} {\bibinfo {author} {\bibfnamefont {J.}~\bibnamefont
  {Miyazaki}}, \bibinfo {author} {\bibfnamefont {A.}~\bibnamefont {Soeda}}, \
  and\ \bibinfo {author} {\bibfnamefont {M.}~\bibnamefont {Murao}},\
  }\href@noop {} {\bibfield  {journal} {\bibinfo  {journal} {Physical Review
  Research}\ }\textbf {\bibinfo {volume} {1}},\ \bibinfo {pages} {013007}
  (\bibinfo {year} {2019})}\BibitemShut {NoStop}%
\bibitem [{\citenamefont {Bisio}\ \emph
  {et~al.}(2010{\natexlab{b}})\citenamefont {Bisio}, \citenamefont
  {Chiribella}, \citenamefont {D'Ariano}, \citenamefont {Facchini},\ and\
  \citenamefont {Perinotti}}]{PhysRevA.81.032324}%
  \BibitemOpen
  \bibfield  {author} {\bibinfo {author} {\bibfnamefont {A.}~\bibnamefont
  {Bisio}}, \bibinfo {author} {\bibfnamefont {G.}~\bibnamefont {Chiribella}},
  \bibinfo {author} {\bibfnamefont {G.~M.}\ \bibnamefont {D'Ariano}}, \bibinfo
  {author} {\bibfnamefont {S.}~\bibnamefont {Facchini}}, \ and\ \bibinfo
  {author} {\bibfnamefont {P.}~\bibnamefont {Perinotti}},\ }\href {\doibase
  10.1103/PhysRevA.81.032324} {\bibfield  {journal} {\bibinfo  {journal} {Phys.
  Rev. A}\ }\textbf {\bibinfo {volume} {81}},\ \bibinfo {pages} {032324}
  (\bibinfo {year} {2010}{\natexlab{b}})}\BibitemShut {NoStop}%
\bibitem [{\citenamefont {Sedl\'ak}\ \emph {et~al.}(2019)\citenamefont
  {Sedl\'ak}, \citenamefont {Bisio},\ and\ \citenamefont
  {Ziman}}]{PhysRevLett.122.170502}%
  \BibitemOpen
  \bibfield  {author} {\bibinfo {author} {\bibfnamefont {M.}~\bibnamefont
  {Sedl\'ak}}, \bibinfo {author} {\bibfnamefont {A.}~\bibnamefont {Bisio}}, \
  and\ \bibinfo {author} {\bibfnamefont {M.}~\bibnamefont {Ziman}},\ }\href
  {\doibase 10.1103/PhysRevLett.122.170502} {\bibfield  {journal} {\bibinfo
  {journal} {Phys. Rev. Lett.}\ }\textbf {\bibinfo {volume} {122}},\ \bibinfo
  {pages} {170502} (\bibinfo {year} {2019})}\BibitemShut {NoStop}%
\bibitem [{\citenamefont {Lewandowska}\ \emph {et~al.}(2022)\citenamefont
  {Lewandowska}, \citenamefont {Kukulski}, \citenamefont {Pawela},\ and\
  \citenamefont {Pucha\l{}a}}]{PhysRevA.106.052423}%
  \BibitemOpen
  \bibfield  {author} {\bibinfo {author} {\bibfnamefont {P.}~\bibnamefont
  {Lewandowska}}, \bibinfo {author} {\bibfnamefont {R.}~\bibnamefont
  {Kukulski}}, \bibinfo {author} {\bibfnamefont {L.}~\bibnamefont {Pawela}}, \
  and\ \bibinfo {author} {\bibfnamefont {Z.}~\bibnamefont {Pucha\l{}a}},\
  }\href {\doibase 10.1103/PhysRevA.106.052423} {\bibfield  {journal} {\bibinfo
   {journal} {Phys. Rev. A}\ }\textbf {\bibinfo {volume} {106}},\ \bibinfo
  {pages} {052423} (\bibinfo {year} {2022})}\BibitemShut {NoStop}%
\bibitem [{\citenamefont {Grosshans}\ \emph {et~al.}(2024)\citenamefont
  {Grosshans}, \citenamefont {Horodecki}, \citenamefont {Murao}, \citenamefont
  {M{\l}ynik}, \citenamefont {Quintino}, \citenamefont {Studzi{\'n}ski},\ and\
  \citenamefont {Yoshida}}]{grosshans2024multicopy}%
  \BibitemOpen
  \bibfield  {author} {\bibinfo {author} {\bibfnamefont {F.}~\bibnamefont
  {Grosshans}}, \bibinfo {author} {\bibfnamefont {M.}~\bibnamefont
  {Horodecki}}, \bibinfo {author} {\bibfnamefont {M.}~\bibnamefont {Murao}},
  \bibinfo {author} {\bibfnamefont {T.}~\bibnamefont {M{\l}ynik}}, \bibinfo
  {author} {\bibfnamefont {M.~T.}\ \bibnamefont {Quintino}}, \bibinfo {author}
  {\bibfnamefont {M.}~\bibnamefont {Studzi{\'n}ski}}, \ and\ \bibinfo {author}
  {\bibfnamefont {S.}~\bibnamefont {Yoshida}},\ }\href@noop {} {\bibfield
  {journal} {\bibinfo  {journal} {arXiv preprint arXiv:2409.10393}\ } (\bibinfo
  {year} {2024})}\BibitemShut {NoStop}%
\bibitem [{\citenamefont {Wittek}(2014)}]{wittek2014quantum}%
  \BibitemOpen
  \bibfield  {author} {\bibinfo {author} {\bibfnamefont {P.}~\bibnamefont
  {Wittek}},\ }\href@noop {} {\emph {\bibinfo {title} {Quantum machine
  learning: what quantum computing means to data mining}}}\ (\bibinfo
  {publisher} {Academic Press},\ \bibinfo {year} {2014})\BibitemShut {NoStop}%
\bibitem [{\citenamefont {Schuld}\ \emph {et~al.}(2015)\citenamefont {Schuld},
  \citenamefont {Sinayskiy},\ and\ \citenamefont
  {Petruccione}}]{schuld2015introduction}%
  \BibitemOpen
  \bibfield  {author} {\bibinfo {author} {\bibfnamefont {M.}~\bibnamefont
  {Schuld}}, \bibinfo {author} {\bibfnamefont {I.}~\bibnamefont {Sinayskiy}}, \
  and\ \bibinfo {author} {\bibfnamefont {F.}~\bibnamefont {Petruccione}},\
  }\href@noop {} {\bibfield  {journal} {\bibinfo  {journal} {Contemporary
  Physics}\ }\textbf {\bibinfo {volume} {56}},\ \bibinfo {pages} {172}
  (\bibinfo {year} {2015})}\BibitemShut {NoStop}%
\bibitem [{\citenamefont {Biamonte}\ \emph {et~al.}(2017)\citenamefont
  {Biamonte}, \citenamefont {Wittek}, \citenamefont {Pancotti}, \citenamefont
  {Rebentrost}, \citenamefont {Wiebe},\ and\ \citenamefont
  {Lloyd}}]{Biamonte:2017aa}%
  \BibitemOpen
  \bibfield  {author} {\bibinfo {author} {\bibfnamefont {J.}~\bibnamefont
  {Biamonte}}, \bibinfo {author} {\bibfnamefont {P.}~\bibnamefont {Wittek}},
  \bibinfo {author} {\bibfnamefont {N.}~\bibnamefont {Pancotti}}, \bibinfo
  {author} {\bibfnamefont {P.}~\bibnamefont {Rebentrost}}, \bibinfo {author}
  {\bibfnamefont {N.}~\bibnamefont {Wiebe}}, \ and\ \bibinfo {author}
  {\bibfnamefont {S.}~\bibnamefont {Lloyd}},\ }\href {\doibase
  10.1038/nature23474} {\bibfield  {journal} {\bibinfo  {journal} {Nature}\
  }\textbf {\bibinfo {volume} {549}},\ \bibinfo {pages} {195} (\bibinfo {year}
  {2017})}\BibitemShut {NoStop}%
\bibitem [{\citenamefont {Cerezo}\ \emph {et~al.}(2022)\citenamefont {Cerezo},
  \citenamefont {Verdon}, \citenamefont {Huang}, \citenamefont {Cincio},\ and\
  \citenamefont {Coles}}]{cerezo2022challenges}%
  \BibitemOpen
  \bibfield  {author} {\bibinfo {author} {\bibfnamefont {M.}~\bibnamefont
  {Cerezo}}, \bibinfo {author} {\bibfnamefont {G.}~\bibnamefont {Verdon}},
  \bibinfo {author} {\bibfnamefont {H.-Y.}\ \bibnamefont {Huang}}, \bibinfo
  {author} {\bibfnamefont {L.}~\bibnamefont {Cincio}}, \ and\ \bibinfo {author}
  {\bibfnamefont {P.~J.}\ \bibnamefont {Coles}},\ }\href@noop {} {\bibfield
  {journal} {\bibinfo  {journal} {Nature Computational Science}\ }\textbf
  {\bibinfo {volume} {2}},\ \bibinfo {pages} {567} (\bibinfo {year}
  {2022})}\BibitemShut {NoStop}%
\bibitem [{\citenamefont {Sedl\'ak}\ and\ \citenamefont
  {Ziman}(2020)}]{PhysRevA.102.032618}%
  \BibitemOpen
  \bibfield  {author} {\bibinfo {author} {\bibfnamefont {M.}~\bibnamefont
  {Sedl\'ak}}\ and\ \bibinfo {author} {\bibfnamefont {M.}~\bibnamefont
  {Ziman}},\ }\href {\doibase 10.1103/PhysRevA.102.032618} {\bibfield
  {journal} {\bibinfo  {journal} {Phys. Rev. A}\ }\textbf {\bibinfo {volume}
  {102}},\ \bibinfo {pages} {032618} (\bibinfo {year} {2020})}\BibitemShut
  {NoStop}%
\bibitem [{\citenamefont {Chiribella}\ \emph
  {et~al.}(2008{\natexlab{c}})\citenamefont {Chiribella}, \citenamefont
  {D’Ariano},\ and\ \citenamefont {Perinotti}}]{chiribella2008memory}%
  \BibitemOpen
  \bibfield  {author} {\bibinfo {author} {\bibfnamefont {G.}~\bibnamefont
  {Chiribella}}, \bibinfo {author} {\bibfnamefont {G.~M.}\ \bibnamefont
  {D’Ariano}}, \ and\ \bibinfo {author} {\bibfnamefont {P.}~\bibnamefont
  {Perinotti}},\ }\href@noop {} {\bibfield  {journal} {\bibinfo  {journal}
  {Physical review letters}\ }\textbf {\bibinfo {volume} {101}},\ \bibinfo
  {pages} {180501} (\bibinfo {year} {2008}{\natexlab{c}})}\BibitemShut
  {NoStop}%
\bibitem [{Note1()}]{Note1}%
  \BibitemOpen
  \bibinfo {note} {Consider choosing $V=U_0^\dagger $, $W=(U_1 U_0^\dagger
  )^{-1/2}$.}\BibitemShut {Stop}%
\bibitem [{\citenamefont {Raginsky}(2001)}]{raginsky2001fidelity}%
  \BibitemOpen
  \bibfield  {author} {\bibinfo {author} {\bibfnamefont {M.}~\bibnamefont
  {Raginsky}},\ }\href@noop {} {\bibfield  {journal} {\bibinfo  {journal}
  {Physics Letters A}\ }\textbf {\bibinfo {volume} {290}},\ \bibinfo {pages}
  {11} (\bibinfo {year} {2001})}\BibitemShut {NoStop}%
\bibitem [{\citenamefont {Nielsen}(2002)}]{NIELSEN2002249}%
  \BibitemOpen
  \bibfield  {author} {\bibinfo {author} {\bibfnamefont {M.~A.}\ \bibnamefont
  {Nielsen}},\ }\href {\doibase https://doi.org/10.1016/S0375-9601(02)01272-0}
  {\bibfield  {journal} {\bibinfo  {journal} {Physics Letters A}\ }\textbf
  {\bibinfo {volume} {303}},\ \bibinfo {pages} {249} (\bibinfo {year}
  {2002})}\BibitemShut {NoStop}%
\bibitem [{\citenamefont {Choi}(1975)}]{choi1975completely}%
  \BibitemOpen
  \bibfield  {author} {\bibinfo {author} {\bibfnamefont {M.-D.}\ \bibnamefont
  {Choi}},\ }\href@noop {} {\bibfield  {journal} {\bibinfo  {journal} {Linear
  algebra and its applications}\ }\textbf {\bibinfo {volume} {10}},\ \bibinfo
  {pages} {285} (\bibinfo {year} {1975})}\BibitemShut {NoStop}%
\bibitem [{\citenamefont {Jamio{\l}kowski}(1972)}]{jamiolkowski1972linear}%
  \BibitemOpen
  \bibfield  {author} {\bibinfo {author} {\bibfnamefont {A.}~\bibnamefont
  {Jamio{\l}kowski}},\ }\href@noop {} {\bibfield  {journal} {\bibinfo
  {journal} {Reports on mathematical physics}\ }\textbf {\bibinfo {volume}
  {3}},\ \bibinfo {pages} {275} (\bibinfo {year} {1972})}\BibitemShut {NoStop}%
\bibitem [{\citenamefont {Holevo}(2011)}]{holevo2011probabilistic}%
  \BibitemOpen
  \bibfield  {author} {\bibinfo {author} {\bibfnamefont {A.~S.}\ \bibnamefont
  {Holevo}},\ }\href@noop {} {\emph {\bibinfo {title} {Probabilistic and
  statistical aspects of quantum theory}}},\ Vol.~\bibinfo {volume} {1}\
  (\bibinfo  {publisher} {Springer Science \& Business Media},\ \bibinfo {year}
  {2011})\BibitemShut {NoStop}%
\bibitem [{\citenamefont {Jaeger}\ and\ \citenamefont
  {Shimony}(1995)}]{jaeger1995optimal}%
  \BibitemOpen
  \bibfield  {author} {\bibinfo {author} {\bibfnamefont {G.}~\bibnamefont
  {Jaeger}}\ and\ \bibinfo {author} {\bibfnamefont {A.}~\bibnamefont
  {Shimony}},\ }\href@noop {} {\bibfield  {journal} {\bibinfo  {journal}
  {Physics Letters A}\ }\textbf {\bibinfo {volume} {197}},\ \bibinfo {pages}
  {83} (\bibinfo {year} {1995})}\BibitemShut {NoStop}%
\bibitem [{\citenamefont {Sedl{\'a}k}(2009)}]{SedlakActaUnambigous}%
  \BibitemOpen
  \bibfield  {author} {\bibinfo {author} {\bibfnamefont {M.}~\bibnamefont
  {Sedl{\'a}k}},\ }\href@noop {} {\bibfield  {journal} {\bibinfo  {journal}
  {Acta Physica Slovaca}\ }\textbf {\bibinfo {volume} {59}},\ \bibinfo {pages}
  {653} (\bibinfo {year} {2009})}\BibitemShut {NoStop}%
\bibitem [{\citenamefont {Knill}\ \emph {et~al.}(2001)\citenamefont {Knill},
  \citenamefont {Laflamme},\ and\ \citenamefont {Milburn}}]{Knill:2001aa}%
  \BibitemOpen
  \bibfield  {author} {\bibinfo {author} {\bibfnamefont {E.}~\bibnamefont
  {Knill}}, \bibinfo {author} {\bibfnamefont {R.}~\bibnamefont {Laflamme}}, \
  and\ \bibinfo {author} {\bibfnamefont {G.~J.}\ \bibnamefont {Milburn}},\
  }\href {\doibase 10.1038/35051009} {\bibfield  {journal} {\bibinfo  {journal}
  {Nature}\ }\textbf {\bibinfo {volume} {409}},\ \bibinfo {pages} {46}
  (\bibinfo {year} {2001})}\BibitemShut {NoStop}%
\bibitem [{\citenamefont {Zhou}\ \emph {et~al.}(2012)\citenamefont {Zhou},
  \citenamefont {Lin}, \citenamefont {Yang}, \citenamefont {Li},\ and\
  \citenamefont {Guo}}]{PhysRevLett.108.190505}%
  \BibitemOpen
  \bibfield  {author} {\bibinfo {author} {\bibfnamefont {Z.-Q.}\ \bibnamefont
  {Zhou}}, \bibinfo {author} {\bibfnamefont {W.-B.}\ \bibnamefont {Lin}},
  \bibinfo {author} {\bibfnamefont {M.}~\bibnamefont {Yang}}, \bibinfo {author}
  {\bibfnamefont {C.-F.}\ \bibnamefont {Li}}, \ and\ \bibinfo {author}
  {\bibfnamefont {G.-C.}\ \bibnamefont {Guo}},\ }\href {\doibase
  10.1103/PhysRevLett.108.190505} {\bibfield  {journal} {\bibinfo  {journal}
  {Phys. Rev. Lett.}\ }\textbf {\bibinfo {volume} {108}},\ \bibinfo {pages}
  {190505} (\bibinfo {year} {2012})}\BibitemShut {NoStop}%
\bibitem [{\citenamefont {G\"undo\ifmmode~\breve{g}\else \u{g}\fi{}an}\ \emph
  {et~al.}(2012)\citenamefont {G\"undo\ifmmode~\breve{g}\else \u{g}\fi{}an},
  \citenamefont {Ledingham}, \citenamefont {Almasi}, \citenamefont
  {Cristiani},\ and\ \citenamefont {de~Riedmatten}}]{PhysRevLett.108.190504}%
  \BibitemOpen
  \bibfield  {author} {\bibinfo {author} {\bibfnamefont {M.}~\bibnamefont
  {G\"undo\ifmmode~\breve{g}\else \u{g}\fi{}an}}, \bibinfo {author}
  {\bibfnamefont {P.~M.}\ \bibnamefont {Ledingham}}, \bibinfo {author}
  {\bibfnamefont {A.}~\bibnamefont {Almasi}}, \bibinfo {author} {\bibfnamefont
  {M.}~\bibnamefont {Cristiani}}, \ and\ \bibinfo {author} {\bibfnamefont
  {H.}~\bibnamefont {de~Riedmatten}},\ }\href {\doibase
  10.1103/PhysRevLett.108.190504} {\bibfield  {journal} {\bibinfo  {journal}
  {Phys. Rev. Lett.}\ }\textbf {\bibinfo {volume} {108}},\ \bibinfo {pages}
  {190504} (\bibinfo {year} {2012})}\BibitemShut {NoStop}%
\bibitem [{\citenamefont {Langford}\ \emph {et~al.}(2005)\citenamefont
  {Langford}, \citenamefont {Weinhold}, \citenamefont {Prevedel}, \citenamefont
  {Resch}, \citenamefont {Gilchrist}, \citenamefont {O'Brien}, \citenamefont
  {Pryde},\ and\ \citenamefont {White}}]{PhysRevLett.95.210504}%
  \BibitemOpen
  \bibfield  {author} {\bibinfo {author} {\bibfnamefont {N.~K.}\ \bibnamefont
  {Langford}}, \bibinfo {author} {\bibfnamefont {T.~J.}\ \bibnamefont
  {Weinhold}}, \bibinfo {author} {\bibfnamefont {R.}~\bibnamefont {Prevedel}},
  \bibinfo {author} {\bibfnamefont {K.~J.}\ \bibnamefont {Resch}}, \bibinfo
  {author} {\bibfnamefont {A.}~\bibnamefont {Gilchrist}}, \bibinfo {author}
  {\bibfnamefont {J.~L.}\ \bibnamefont {O'Brien}}, \bibinfo {author}
  {\bibfnamefont {G.~J.}\ \bibnamefont {Pryde}}, \ and\ \bibinfo {author}
  {\bibfnamefont {A.~G.}\ \bibnamefont {White}},\ }\href {\doibase
  10.1103/PhysRevLett.95.210504} {\bibfield  {journal} {\bibinfo  {journal}
  {Phys. Rev. Lett.}\ }\textbf {\bibinfo {volume} {95}},\ \bibinfo {pages}
  {210504} (\bibinfo {year} {2005})}\BibitemShut {NoStop}%
\bibitem [{\citenamefont {Kiesel}\ \emph {et~al.}(2005)\citenamefont {Kiesel},
  \citenamefont {Schmid}, \citenamefont {Weber}, \citenamefont {Ursin},\ and\
  \citenamefont {Weinfurter}}]{PhysRevLett.95.210505}%
  \BibitemOpen
  \bibfield  {author} {\bibinfo {author} {\bibfnamefont {N.}~\bibnamefont
  {Kiesel}}, \bibinfo {author} {\bibfnamefont {C.}~\bibnamefont {Schmid}},
  \bibinfo {author} {\bibfnamefont {U.}~\bibnamefont {Weber}}, \bibinfo
  {author} {\bibfnamefont {R.}~\bibnamefont {Ursin}}, \ and\ \bibinfo {author}
  {\bibfnamefont {H.}~\bibnamefont {Weinfurter}},\ }\href {\doibase
  10.1103/PhysRevLett.95.210505} {\bibfield  {journal} {\bibinfo  {journal}
  {Phys. Rev. Lett.}\ }\textbf {\bibinfo {volume} {95}},\ \bibinfo {pages}
  {210505} (\bibinfo {year} {2005})}\BibitemShut {NoStop}%
\bibitem [{\citenamefont {Okamoto}\ \emph {et~al.}(2005)\citenamefont
  {Okamoto}, \citenamefont {Hofmann}, \citenamefont {Takeuchi},\ and\
  \citenamefont {Sasaki}}]{PhysRevLett.95.210506}%
  \BibitemOpen
  \bibfield  {author} {\bibinfo {author} {\bibfnamefont {R.}~\bibnamefont
  {Okamoto}}, \bibinfo {author} {\bibfnamefont {H.~F.}\ \bibnamefont
  {Hofmann}}, \bibinfo {author} {\bibfnamefont {S.}~\bibnamefont {Takeuchi}}, \
  and\ \bibinfo {author} {\bibfnamefont {K.}~\bibnamefont {Sasaki}},\ }\href
  {\doibase 10.1103/PhysRevLett.95.210506} {\bibfield  {journal} {\bibinfo
  {journal} {Phys. Rev. Lett.}\ }\textbf {\bibinfo {volume} {95}},\ \bibinfo
  {pages} {210506} (\bibinfo {year} {2005})}\BibitemShut {NoStop}%
\bibitem [{\citenamefont {Clarke}\ \emph {et~al.}(2001)\citenamefont {Clarke},
  \citenamefont {Chefles}, \citenamefont {Barnett},\ and\ \citenamefont
  {Riis}}]{PhysRevA.63.040305}%
  \BibitemOpen
  \bibfield  {author} {\bibinfo {author} {\bibfnamefont {R.~B.~M.}\
  \bibnamefont {Clarke}}, \bibinfo {author} {\bibfnamefont {A.}~\bibnamefont
  {Chefles}}, \bibinfo {author} {\bibfnamefont {S.~M.}\ \bibnamefont
  {Barnett}}, \ and\ \bibinfo {author} {\bibfnamefont {E.}~\bibnamefont
  {Riis}},\ }\href {\doibase 10.1103/PhysRevA.63.040305} {\bibfield  {journal}
  {\bibinfo  {journal} {Phys. Rev. A}\ }\textbf {\bibinfo {volume} {63}},\
  \bibinfo {pages} {040305} (\bibinfo {year} {2001})}\BibitemShut {NoStop}%
\bibitem [{\citenamefont {St\'{a}rek}\ \emph {et~al.}(2018)\citenamefont
  {St\'{a}rek}, \citenamefont {Mikov\'{a}}, \citenamefont {Straka},
  \citenamefont {Du\v{s}ek}, \citenamefont {Je\v{z}ek}, \citenamefont
  {Fiur\'{a}\v{s}ek},\ and\ \citenamefont {Mi\v{c}uda}}]{Starek:18}%
  \BibitemOpen
  \bibfield  {author} {\bibinfo {author} {\bibfnamefont {R.}~\bibnamefont
  {St\'{a}rek}}, \bibinfo {author} {\bibfnamefont {M.}~\bibnamefont
  {Mikov\'{a}}}, \bibinfo {author} {\bibfnamefont {I.}~\bibnamefont {Straka}},
  \bibinfo {author} {\bibfnamefont {M.}~\bibnamefont {Du\v{s}ek}}, \bibinfo
  {author} {\bibfnamefont {M.}~\bibnamefont {Je\v{z}ek}}, \bibinfo {author}
  {\bibfnamefont {J.}~\bibnamefont {Fiur\'{a}\v{s}ek}}, \ and\ \bibinfo
  {author} {\bibfnamefont {M.}~\bibnamefont {Mi\v{c}uda}},\ }\href {\doibase
  10.1364/OE.26.008443} {\bibfield  {journal} {\bibinfo  {journal} {Opt.
  Express}\ }\textbf {\bibinfo {volume} {26}},\ \bibinfo {pages} {8443}
  (\bibinfo {year} {2018})}\BibitemShut {NoStop}%
\bibitem [{\citenamefont {Prevedel}\ \emph {et~al.}(2007)\citenamefont
  {Prevedel}, \citenamefont {Walther}, \citenamefont {Tiefenbacher},
  \citenamefont {B{\"o}hi}, \citenamefont {Kaltenbaek}, \citenamefont
  {Jennewein},\ and\ \citenamefont {Zeilinger}}]{Prevedel:2007aa}%
  \BibitemOpen
  \bibfield  {author} {\bibinfo {author} {\bibfnamefont {R.}~\bibnamefont
  {Prevedel}}, \bibinfo {author} {\bibfnamefont {P.}~\bibnamefont {Walther}},
  \bibinfo {author} {\bibfnamefont {F.}~\bibnamefont {Tiefenbacher}}, \bibinfo
  {author} {\bibfnamefont {P.}~\bibnamefont {B{\"o}hi}}, \bibinfo {author}
  {\bibfnamefont {R.}~\bibnamefont {Kaltenbaek}}, \bibinfo {author}
  {\bibfnamefont {T.}~\bibnamefont {Jennewein}}, \ and\ \bibinfo {author}
  {\bibfnamefont {A.}~\bibnamefont {Zeilinger}},\ }\href {\doibase
  10.1038/nature05346} {\bibfield  {journal} {\bibinfo  {journal} {Nature}\
  }\textbf {\bibinfo {volume} {445}},\ \bibinfo {pages} {65} (\bibinfo {year}
  {2007})}\BibitemShut {NoStop}%
\bibitem [{\citenamefont {Vallone}\ \emph {et~al.}(2008)\citenamefont
  {Vallone}, \citenamefont {Pomarico}, \citenamefont {De~Martini},\ and\
  \citenamefont {Mataloni}}]{PhysRevLett.100.160502}%
  \BibitemOpen
  \bibfield  {author} {\bibinfo {author} {\bibfnamefont {G.}~\bibnamefont
  {Vallone}}, \bibinfo {author} {\bibfnamefont {E.}~\bibnamefont {Pomarico}},
  \bibinfo {author} {\bibfnamefont {F.}~\bibnamefont {De~Martini}}, \ and\
  \bibinfo {author} {\bibfnamefont {P.}~\bibnamefont {Mataloni}},\ }\href
  {\doibase 10.1103/PhysRevLett.100.160502} {\bibfield  {journal} {\bibinfo
  {journal} {Phys. Rev. Lett.}\ }\textbf {\bibinfo {volume} {100}},\ \bibinfo
  {pages} {160502} (\bibinfo {year} {2008})}\BibitemShut {NoStop}%
\bibitem [{\citenamefont {Ma}\ \emph {et~al.}(2012)\citenamefont {Ma},
  \citenamefont {Herbst}, \citenamefont {Scheidl}, \citenamefont {Wang},
  \citenamefont {Kropatschek}, \citenamefont {Naylor}, \citenamefont
  {Wittmann}, \citenamefont {Mech}, \citenamefont {Kofler}, \citenamefont
  {Anisimova}, \citenamefont {Makarov}, \citenamefont {Jennewein},
  \citenamefont {Ursin},\ and\ \citenamefont {Zeilinger}}]{Ma:2012aa}%
  \BibitemOpen
  \bibfield  {author} {\bibinfo {author} {\bibfnamefont {X.-S.}\ \bibnamefont
  {Ma}}, \bibinfo {author} {\bibfnamefont {T.}~\bibnamefont {Herbst}}, \bibinfo
  {author} {\bibfnamefont {T.}~\bibnamefont {Scheidl}}, \bibinfo {author}
  {\bibfnamefont {D.}~\bibnamefont {Wang}}, \bibinfo {author} {\bibfnamefont
  {S.}~\bibnamefont {Kropatschek}}, \bibinfo {author} {\bibfnamefont
  {W.}~\bibnamefont {Naylor}}, \bibinfo {author} {\bibfnamefont
  {B.}~\bibnamefont {Wittmann}}, \bibinfo {author} {\bibfnamefont
  {A.}~\bibnamefont {Mech}}, \bibinfo {author} {\bibfnamefont {J.}~\bibnamefont
  {Kofler}}, \bibinfo {author} {\bibfnamefont {E.}~\bibnamefont {Anisimova}},
  \bibinfo {author} {\bibfnamefont {V.}~\bibnamefont {Makarov}}, \bibinfo
  {author} {\bibfnamefont {T.}~\bibnamefont {Jennewein}}, \bibinfo {author}
  {\bibfnamefont {R.}~\bibnamefont {Ursin}}, \ and\ \bibinfo {author}
  {\bibfnamefont {A.}~\bibnamefont {Zeilinger}},\ }\href {\doibase
  10.1038/nature11472} {\bibfield  {journal} {\bibinfo  {journal} {Nature}\
  }\textbf {\bibinfo {volume} {489}},\ \bibinfo {pages} {269} (\bibinfo {year}
  {2012})}\BibitemShut {NoStop}%
\bibitem [{\citenamefont {Zhao}\ \emph {et~al.}(2014)\citenamefont {Zhao},
  \citenamefont {Zhang}, \citenamefont {Yang}, \citenamefont {Sang},
  \citenamefont {Jiang}, \citenamefont {Bao},\ and\ \citenamefont
  {Pan}}]{PhysRevLett.112.103602}%
  \BibitemOpen
  \bibfield  {author} {\bibinfo {author} {\bibfnamefont {T.-M.}\ \bibnamefont
  {Zhao}}, \bibinfo {author} {\bibfnamefont {H.}~\bibnamefont {Zhang}},
  \bibinfo {author} {\bibfnamefont {J.}~\bibnamefont {Yang}}, \bibinfo {author}
  {\bibfnamefont {Z.-R.}\ \bibnamefont {Sang}}, \bibinfo {author}
  {\bibfnamefont {X.}~\bibnamefont {Jiang}}, \bibinfo {author} {\bibfnamefont
  {X.-H.}\ \bibnamefont {Bao}}, \ and\ \bibinfo {author} {\bibfnamefont
  {J.-W.}\ \bibnamefont {Pan}},\ }\href {\doibase
  10.1103/PhysRevLett.112.103602} {\bibfield  {journal} {\bibinfo  {journal}
  {Phys. Rev. Lett.}\ }\textbf {\bibinfo {volume} {112}},\ \bibinfo {pages}
  {103602} (\bibinfo {year} {2014})}\BibitemShut {NoStop}%
\bibitem [{\citenamefont {Sabines-Chesterking}\ \emph
  {et~al.}(2017)\citenamefont {Sabines-Chesterking}, \citenamefont {Whittaker},
  \citenamefont {Joshi}, \citenamefont {Birchall}, \citenamefont {Moreau},
  \citenamefont {McMillan}, \citenamefont {Cable}, \citenamefont {O'Brien},
  \citenamefont {Rarity},\ and\ \citenamefont
  {Matthews}}]{PhysRevApplied.8.014016}%
  \BibitemOpen
  \bibfield  {author} {\bibinfo {author} {\bibfnamefont {J.}~\bibnamefont
  {Sabines-Chesterking}}, \bibinfo {author} {\bibfnamefont {R.}~\bibnamefont
  {Whittaker}}, \bibinfo {author} {\bibfnamefont {S.~K.}\ \bibnamefont
  {Joshi}}, \bibinfo {author} {\bibfnamefont {P.~M.}\ \bibnamefont {Birchall}},
  \bibinfo {author} {\bibfnamefont {P.~A.}\ \bibnamefont {Moreau}}, \bibinfo
  {author} {\bibfnamefont {A.}~\bibnamefont {McMillan}}, \bibinfo {author}
  {\bibfnamefont {H.~V.}\ \bibnamefont {Cable}}, \bibinfo {author}
  {\bibfnamefont {J.~L.}\ \bibnamefont {O'Brien}}, \bibinfo {author}
  {\bibfnamefont {J.~G.}\ \bibnamefont {Rarity}}, \ and\ \bibinfo {author}
  {\bibfnamefont {J.~C.~F.}\ \bibnamefont {Matthews}},\ }\href {\doibase
  10.1103/PhysRevApplied.8.014016} {\bibfield  {journal} {\bibinfo  {journal}
  {Phys. Rev. Appl.}\ }\textbf {\bibinfo {volume} {8}},\ \bibinfo {pages}
  {014016} (\bibinfo {year} {2017})}\BibitemShut {NoStop}%
\bibitem [{\citenamefont {Agresti}\ \emph {et~al.}(2020)\citenamefont
  {Agresti}, \citenamefont {Poderini}, \citenamefont {Guerini}, \citenamefont
  {Mancusi}, \citenamefont {Carvacho}, \citenamefont {Aolita}, \citenamefont
  {Cavalcanti}, \citenamefont {Chaves},\ and\ \citenamefont
  {Sciarrino}}]{Agresti:2020aa}%
  \BibitemOpen
  \bibfield  {author} {\bibinfo {author} {\bibfnamefont {I.}~\bibnamefont
  {Agresti}}, \bibinfo {author} {\bibfnamefont {D.}~\bibnamefont {Poderini}},
  \bibinfo {author} {\bibfnamefont {L.}~\bibnamefont {Guerini}}, \bibinfo
  {author} {\bibfnamefont {M.}~\bibnamefont {Mancusi}}, \bibinfo {author}
  {\bibfnamefont {G.}~\bibnamefont {Carvacho}}, \bibinfo {author}
  {\bibfnamefont {L.}~\bibnamefont {Aolita}}, \bibinfo {author} {\bibfnamefont
  {D.}~\bibnamefont {Cavalcanti}}, \bibinfo {author} {\bibfnamefont
  {R.}~\bibnamefont {Chaves}}, \ and\ \bibinfo {author} {\bibfnamefont
  {F.}~\bibnamefont {Sciarrino}},\ }\href {\doibase 10.1038/s42005-020-0375-6}
  {\bibfield  {journal} {\bibinfo  {journal} {Communications Physics}\ }\textbf
  {\bibinfo {volume} {3}},\ \bibinfo {pages} {110} (\bibinfo {year}
  {2020})}\BibitemShut {NoStop}%
\bibitem [{\citenamefont {Zanin}\ \emph {et~al.}(2021)\citenamefont {Zanin},
  \citenamefont {Jacquet}, \citenamefont {Spagnolo}, \citenamefont {Schiansky},
  \citenamefont {Calafell}, \citenamefont {Rozema},\ and\ \citenamefont
  {Walther}}]{LuizZanin:21}%
  \BibitemOpen
  \bibfield  {author} {\bibinfo {author} {\bibfnamefont {G.~L.}\ \bibnamefont
  {Zanin}}, \bibinfo {author} {\bibfnamefont {M.~J.}\ \bibnamefont {Jacquet}},
  \bibinfo {author} {\bibfnamefont {M.}~\bibnamefont {Spagnolo}}, \bibinfo
  {author} {\bibfnamefont {P.}~\bibnamefont {Schiansky}}, \bibinfo {author}
  {\bibfnamefont {I.~A.}\ \bibnamefont {Calafell}}, \bibinfo {author}
  {\bibfnamefont {L.~A.}\ \bibnamefont {Rozema}}, \ and\ \bibinfo {author}
  {\bibfnamefont {P.}~\bibnamefont {Walther}},\ }\href {\doibase
  10.1364/OE.409867} {\bibfield  {journal} {\bibinfo  {journal} {Opt. Express}\
  }\textbf {\bibinfo {volume} {29}},\ \bibinfo {pages} {3425} (\bibinfo {year}
  {2021})}\BibitemShut {NoStop}%
\bibitem [{\citenamefont {Meyer-Scott}\ \emph {et~al.}(2022)\citenamefont
  {Meyer-Scott}, \citenamefont {Prasannan}, \citenamefont {Dhand},
  \citenamefont {Eigner}, \citenamefont {Quiring}, \citenamefont {Barkhofen},
  \citenamefont {Brecht}, \citenamefont {Plenio},\ and\ \citenamefont
  {Silberhorn}}]{PhysRevLett.129.150501}%
  \BibitemOpen
  \bibfield  {author} {\bibinfo {author} {\bibfnamefont {E.}~\bibnamefont
  {Meyer-Scott}}, \bibinfo {author} {\bibfnamefont {N.}~\bibnamefont
  {Prasannan}}, \bibinfo {author} {\bibfnamefont {I.}~\bibnamefont {Dhand}},
  \bibinfo {author} {\bibfnamefont {C.}~\bibnamefont {Eigner}}, \bibinfo
  {author} {\bibfnamefont {V.}~\bibnamefont {Quiring}}, \bibinfo {author}
  {\bibfnamefont {S.}~\bibnamefont {Barkhofen}}, \bibinfo {author}
  {\bibfnamefont {B.}~\bibnamefont {Brecht}}, \bibinfo {author} {\bibfnamefont
  {M.~B.}\ \bibnamefont {Plenio}}, \ and\ \bibinfo {author} {\bibfnamefont
  {C.}~\bibnamefont {Silberhorn}},\ }\href {\doibase
  10.1103/PhysRevLett.129.150501} {\bibfield  {journal} {\bibinfo  {journal}
  {Phys. Rev. Lett.}\ }\textbf {\bibinfo {volume} {129}},\ \bibinfo {pages}
  {150501} (\bibinfo {year} {2022})}\BibitemShut {NoStop}%
\bibitem [{\citenamefont {Horov{\'a}}\ \emph {et~al.}(2022)\citenamefont
  {Horov{\'a}}, \citenamefont {St{\'a}rek}, \citenamefont {Mi{\v c}uda},
  \citenamefont {Kol{\'a}{\v r}}, \citenamefont {Fiur{\'a}{\v s}ek},\ and\
  \citenamefont {Filip}}]{Horova:2022aa}%
  \BibitemOpen
  \bibfield  {author} {\bibinfo {author} {\bibfnamefont {N.}~\bibnamefont
  {Horov{\'a}}}, \bibinfo {author} {\bibfnamefont {R.}~\bibnamefont
  {St{\'a}rek}}, \bibinfo {author} {\bibfnamefont {M.}~\bibnamefont {Mi{\v
  c}uda}}, \bibinfo {author} {\bibfnamefont {M.}~\bibnamefont {Kol{\'a}{\v
  r}}}, \bibinfo {author} {\bibfnamefont {J.}~\bibnamefont {Fiur{\'a}{\v
  s}ek}}, \ and\ \bibinfo {author} {\bibfnamefont {R.}~\bibnamefont {Filip}},\
  }\href {\doibase 10.1038/s41598-022-26450-1} {\bibfield  {journal} {\bibinfo
  {journal} {Scientific Reports}\ }\textbf {\bibinfo {volume} {12}},\ \bibinfo
  {pages} {22455} (\bibinfo {year} {2022})}\BibitemShut {NoStop}%
\bibitem [{\citenamefont {Fiur\'{a}\v{s}ek}\ and\ \citenamefont
  {Hradil}(2001)}]{PhysRevA.63.020101}%
  \BibitemOpen
  \bibfield  {author} {\bibinfo {author} {\bibfnamefont {J.}~\bibnamefont
  {Fiur\'{a}\v{s}ek}}\ and\ \bibinfo {author} {\bibfnamefont {Z.}~\bibnamefont
  {Hradil}},\ }\href {\doibase 10.1103/PhysRevA.63.020101} {\bibfield
  {journal} {\bibinfo  {journal} {Phys. Rev. A}\ }\textbf {\bibinfo {volume}
  {63}},\ \bibinfo {pages} {020101} (\bibinfo {year} {2001})}\BibitemShut
  {NoStop}%
\bibitem [{\citenamefont {Mi\v{c}uda}\ \emph {et~al.}(2017)\citenamefont
  {Mi\v{c}uda}, \citenamefont {Koutn\'{y}}, \citenamefont {Mikov\'{a}},
  \citenamefont {Straka}, \citenamefont {Je\v{z}ek},\ and\ \citenamefont
  {Mi\v{s}ta}}]{Micuda2017}%
  \BibitemOpen
  \bibfield  {author} {\bibinfo {author} {\bibfnamefont {M.}~\bibnamefont
  {Mi\v{c}uda}}, \bibinfo {author} {\bibfnamefont {D.}~\bibnamefont
  {Koutn\'{y}}}, \bibinfo {author} {\bibfnamefont {M.}~\bibnamefont
  {Mikov\'{a}}}, \bibinfo {author} {\bibfnamefont {I.}~\bibnamefont {Straka}},
  \bibinfo {author} {\bibfnamefont {M.}~\bibnamefont {Je\v{z}ek}}, \ and\
  \bibinfo {author} {\bibfnamefont {L.}~\bibnamefont {Mi\v{s}ta}},\ }\href
  {\doibase 10.1038/srep45045} {\bibfield  {journal} {\bibinfo  {journal}
  {Scientific Reports}\ }\textbf {\bibinfo {volume} {7}} (\bibinfo {year}
  {2017}),\ 10.1038/srep45045}\BibitemShut {NoStop}%
\bibitem [{\citenamefont {Somaschi}\ \emph {et~al.}(2016)\citenamefont
  {Somaschi}, \citenamefont {Giesz}, \citenamefont {De~Santis}, \citenamefont
  {Loredo}, \citenamefont {Almeida}, \citenamefont {Hornecker}, \citenamefont
  {Portalupi}, \citenamefont {Grange}, \citenamefont {Ant{\'o}n}, \citenamefont
  {Demory}, \citenamefont {G{\'o}mez}, \citenamefont {Sagnes}, \citenamefont
  {Lanzillotti-Kimura}, \citenamefont {Lema{\'\i}tre}, \citenamefont
  {Auffeves}, \citenamefont {White}, \citenamefont {Lanco},\ and\ \citenamefont
  {Senellart}}]{Somaschi:2016aa}%
  \BibitemOpen
  \bibfield  {author} {\bibinfo {author} {\bibfnamefont {N.}~\bibnamefont
  {Somaschi}}, \bibinfo {author} {\bibfnamefont {V.}~\bibnamefont {Giesz}},
  \bibinfo {author} {\bibfnamefont {L.}~\bibnamefont {De~Santis}}, \bibinfo
  {author} {\bibfnamefont {J.~C.}\ \bibnamefont {Loredo}}, \bibinfo {author}
  {\bibfnamefont {M.~P.}\ \bibnamefont {Almeida}}, \bibinfo {author}
  {\bibfnamefont {G.}~\bibnamefont {Hornecker}}, \bibinfo {author}
  {\bibfnamefont {S.~L.}\ \bibnamefont {Portalupi}}, \bibinfo {author}
  {\bibfnamefont {T.}~\bibnamefont {Grange}}, \bibinfo {author} {\bibfnamefont
  {C.}~\bibnamefont {Ant{\'o}n}}, \bibinfo {author} {\bibfnamefont
  {J.}~\bibnamefont {Demory}}, \bibinfo {author} {\bibfnamefont
  {C.}~\bibnamefont {G{\'o}mez}}, \bibinfo {author} {\bibfnamefont
  {I.}~\bibnamefont {Sagnes}}, \bibinfo {author} {\bibfnamefont {N.~D.}\
  \bibnamefont {Lanzillotti-Kimura}}, \bibinfo {author} {\bibfnamefont
  {A.}~\bibnamefont {Lema{\'\i}tre}}, \bibinfo {author} {\bibfnamefont
  {A.}~\bibnamefont {Auffeves}}, \bibinfo {author} {\bibfnamefont {A.~G.}\
  \bibnamefont {White}}, \bibinfo {author} {\bibfnamefont {L.}~\bibnamefont
  {Lanco}}, \ and\ \bibinfo {author} {\bibfnamefont {P.}~\bibnamefont
  {Senellart}},\ }\href {\doibase 10.1038/nphoton.2016.23} {\bibfield
  {journal} {\bibinfo  {journal} {Nature Photonics}\ }\textbf {\bibinfo
  {volume} {10}},\ \bibinfo {pages} {340} (\bibinfo {year} {2016})}\BibitemShut
  {NoStop}%
\bibitem [{\citenamefont {Carolan}\ \emph {et~al.}(2015)\citenamefont
  {Carolan}, \citenamefont {Harrold}, \citenamefont {Sparrow}, \citenamefont
  {Mart{\'\i}n-L{\'o}pez}, \citenamefont {Russell}, \citenamefont
  {Silverstone}, \citenamefont {Shadbolt}, \citenamefont {Matsuda},
  \citenamefont {Oguma}, \citenamefont {Itoh}, \citenamefont {Marshall},
  \citenamefont {Thompson}, \citenamefont {Matthews}, \citenamefont
  {Hashimoto}, \citenamefont {O'Brien},\ and\ \citenamefont
  {Laing}}]{doi:10.1126/science.aab3642}%
  \BibitemOpen
  \bibfield  {author} {\bibinfo {author} {\bibfnamefont {J.}~\bibnamefont
  {Carolan}}, \bibinfo {author} {\bibfnamefont {C.}~\bibnamefont {Harrold}},
  \bibinfo {author} {\bibfnamefont {C.}~\bibnamefont {Sparrow}}, \bibinfo
  {author} {\bibfnamefont {E.}~\bibnamefont {Mart{\'\i}n-L{\'o}pez}}, \bibinfo
  {author} {\bibfnamefont {N.~J.}\ \bibnamefont {Russell}}, \bibinfo {author}
  {\bibfnamefont {J.~W.}\ \bibnamefont {Silverstone}}, \bibinfo {author}
  {\bibfnamefont {P.~J.}\ \bibnamefont {Shadbolt}}, \bibinfo {author}
  {\bibfnamefont {N.}~\bibnamefont {Matsuda}}, \bibinfo {author} {\bibfnamefont
  {M.}~\bibnamefont {Oguma}}, \bibinfo {author} {\bibfnamefont
  {M.}~\bibnamefont {Itoh}}, \bibinfo {author} {\bibfnamefont {G.~D.}\
  \bibnamefont {Marshall}}, \bibinfo {author} {\bibfnamefont {M.~G.}\
  \bibnamefont {Thompson}}, \bibinfo {author} {\bibfnamefont {J.~C.~F.}\
  \bibnamefont {Matthews}}, \bibinfo {author} {\bibfnamefont {T.}~\bibnamefont
  {Hashimoto}}, \bibinfo {author} {\bibfnamefont {J.~L.}\ \bibnamefont
  {O'Brien}}, \ and\ \bibinfo {author} {\bibfnamefont {A.}~\bibnamefont
  {Laing}},\ }\href {\doibase 10.1126/science.aab3642} {\bibfield  {journal}
  {\bibinfo  {journal} {Science}\ }\textbf {\bibinfo {volume} {349}},\ \bibinfo
  {pages} {711} (\bibinfo {year} {2015})},\ \Eprint
  {http://arxiv.org/abs/https://www.science.org/doi/pdf/10.1126/science.aab3642}
  {https://www.science.org/doi/pdf/10.1126/science.aab3642} \BibitemShut
  {NoStop}%
\bibitem [{\citenamefont {Mi\ifmmode~\check{c}\else \v{c}\fi{}uda}\ \emph
  {et~al.}(2014)\citenamefont {Mi\ifmmode~\check{c}\else \v{c}\fi{}uda},
  \citenamefont {Sedl\'ak}, \citenamefont {Straka}, \citenamefont {Mikov\'a},
  \citenamefont {Du\ifmmode~\check{s}\else \v{s}\fi{}ek}, \citenamefont
  {Je\ifmmode~\check{z}\else \v{z}\fi{}ek},\ and\ \citenamefont
  {Fiur\'a\ifmmode~\check{s}\else \v{s}\fi{}ek}}]{PhysRevA.89.042304}%
  \BibitemOpen
  \bibfield  {author} {\bibinfo {author} {\bibfnamefont {M.}~\bibnamefont
  {Mi\ifmmode~\check{c}\else \v{c}\fi{}uda}}, \bibinfo {author} {\bibfnamefont
  {M.}~\bibnamefont {Sedl\'ak}}, \bibinfo {author} {\bibfnamefont
  {I.}~\bibnamefont {Straka}}, \bibinfo {author} {\bibfnamefont
  {M.}~\bibnamefont {Mikov\'a}}, \bibinfo {author} {\bibfnamefont
  {M.}~\bibnamefont {Du\ifmmode~\check{s}\else \v{s}\fi{}ek}}, \bibinfo
  {author} {\bibfnamefont {M.}~\bibnamefont {Je\ifmmode~\check{z}\else
  \v{z}\fi{}ek}}, \ and\ \bibinfo {author} {\bibfnamefont {J.}~\bibnamefont
  {Fiur\'a\ifmmode~\check{s}\else \v{s}\fi{}ek}},\ }\href {\doibase
  10.1103/PhysRevA.89.042304} {\bibfield  {journal} {\bibinfo  {journal} {Phys.
  Rev. A}\ }\textbf {\bibinfo {volume} {89}},\ \bibinfo {pages} {042304}
  (\bibinfo {year} {2014})}\BibitemShut {NoStop}%
\bibitem [{\citenamefont {Vidal}\ \emph {et~al.}(2002)\citenamefont {Vidal},
  \citenamefont {Masanes},\ and\ \citenamefont
  {Cirac}}]{PhysRevLett.88.047905}%
  \BibitemOpen
  \bibfield  {author} {\bibinfo {author} {\bibfnamefont {G.}~\bibnamefont
  {Vidal}}, \bibinfo {author} {\bibfnamefont {L.}~\bibnamefont {Masanes}}, \
  and\ \bibinfo {author} {\bibfnamefont {J.~I.}\ \bibnamefont {Cirac}},\ }\href
  {\doibase 10.1103/PhysRevLett.88.047905} {\bibfield  {journal} {\bibinfo
  {journal} {Phys. Rev. Lett.}\ }\textbf {\bibinfo {volume} {88}},\ \bibinfo
  {pages} {047905} (\bibinfo {year} {2002})}\BibitemShut {NoStop}%
\end{thebibliography}%

\end{document}